\tikzset{faded/.style={gray,very thin}}
\tikzset{vertex/.style={draw,circle,minimum size=5pt,inner sep=0pt}}
\tikzset{novertex/.style={circle,minimum size=5pt,inner sep=0pt}}
\tikzset{blackvertex/.style={draw,circle,minimum size=5pt,inner sep=0pt, fill=black}}
\tikzset{redvertex/.style={draw,circle,minimum size=5pt,inner sep=0pt, fill=red}}
\tikzset{redvertexfaded/.style={draw,circle,faded,minimum size=5pt,inner sep=0pt, fill=red!50}}
\tikzset{greenvertex/.style={draw,circle,minimum size=5pt,inner sep=0pt, fill=green}}
\tikzset{greenvertexfaded/.style={draw,circle,faded,minimum size=5pt,inner sep=0pt, fill=green!50}}
\tikzset{bluevertex/.style={draw,circle,minimum size=5pt,inner sep=0pt, fill=blue}}
\tikzset{bluevertexfaded/.style={draw,circle,faded,minimum size=5pt,inner sep=0pt, fill=blue!50}}
\tikzset{yellowvertex/.style={draw,circle,minimum size=5pt,inner sep=0pt, fill=yellow}}
\tikzset{yellowvertexfaded/.style={draw,circle,faded,minimum size=5pt,inner sep=0pt,
fill=yellow!50}}
\tikzset{arrow/.style={-{Latex[scale=1]},shorten >= 0pt}}
\tikzset{snake it/.style={decorate, decoration=snake}}
\tikzset{big snake/.style={decorate, decoration={snake,segment length=7mm, amplitude=3mm}}}
\tikzset{ultra snake/.style={decorate, decoration={snake,segment length=7mm, amplitude=6mm}}}
\newcommand{\lipItem}[1]{\textcolor{lipicsGray}{\sffamily\bfseries\upshape\mathversion{bold}#1}}
\title{Revisiting Directed Disjoint Paths on tournaments (and relatives)} 
\titlerunning{Revisiting Directed Disjoint Paths on tournaments (and relatives)}
\author{Guilherme C. M. Gomes}{LIRMM, Université de Montpellier, CNRS, Montpellier,
France\\
Universidade Federal de Minas Gerais, Belo Horizonte, Brazil}{gcm.gomes@dcc.ufmg.br}{ttps://orcid.org/0000-0002-7487-3475}{Funded by the European Union, project PACKENUM, grant number 101109317. Views and opinions expressed are however those of the author only and do not necessarily reflect those of the European Union. Neither the European Union nor the granting authority can be held responsible for them.}
\author{Raul Lopes}{LIRMM, Université de Montpellier, CNRS, Montpellier,
France\\
Hamburg University of Technology, Institute for Algorithms and Complexity, Hamburg, Germany}{rtlopes@protonmail.com}{https://orcid.org/0000-0002-7487-3475}{French project ELIT (ANR-20-CE48-0008-01) and HIDSS-0002 DASHH (Data Science in Hamburg - Helmholtz Graduate School for the Structure of Matter).}
\author{Ignasi Sau}{LIRMM, Université de Montpellier, CNRS, Montpellier,
France}{ignasi.sau@lirmm.fr}{https://orcid.org/0000-0002-8981-9287}{French project ELIT (ANR-20-CE48-0008-01).}
\authorrunning{ }
\keywords{directed graphs, tournaments, semicomplete digraphs, directed disjoint paths,
congestion, parameterized complexity, directed pathwidth.} 
\DeclareMathOperator{\dpw}{\text{\sf dpw}\xspace}
\DeclareMathOperator{\Ocal}{\mathcal{O}\xspace}
\DeclareMathOperator{\List}{\mathsf L}
\DeclareMathOperator{\Mat}{\mathsf M}
\newtheorem{property}[theorem]{Property}
\renewcommand{\mid}{\bigm|}
\newcolumntype{\expand}{}
\long\@namedef{NC@rewrite@\string\expand}{\expandafter\NC@find}
  \def\boxproblem@arg{#1}%
  \def\boxproblem@framed{framed}%
  \def\boxproblem@lined{lined}%
  \def\boxproblem@doublelined{doublelined}%
  \def\boxproblem@hline{}%
  \def\boxproblem@hline{\hline\hline}%
  \def\boxproblem@hline{\hline}%
  \def\boxproblem@tablelayout{|>{\bfseries}lX|c}%
  \def\boxproblem@title{\multicolumn{2}{|l|}{%
      \raisebox{-\fboxsep}{\textsc{\normalsize #2}}%
  }}%
  \def\boxproblem@tablelayout{>{\bfseries}lXc}%
  \def\boxproblem@title{\multicolumn{2}{l}{%
      \raisebox{-\fboxsep}{\textsc{\normalsize #2}}%
  }}%
\definecolor{mid-green}{rgb}{0.15,0.65,0.15}
\definecolor{dark-green}{rgb}{0.15,0.25,0.15}
\definecolor{dark-red}{rgb}{0.7,0.15,0.15}
\definecolor{dark-blue}{rgb}{0.15,0.15,0.9}
\definecolor{medium-blue}{rgb}{0,0,0.5}
\definecolor{gray}{rgb}{0.5,0.5,0.5}
\definecolor{color-Ig}{rgb}{0.15,0.7,0.15}
\definecolor{darkmagenta}{rgb}{0.30, 0.0, 0.30}
\renewcommand{\NP}{{\sf NP}\xspace}
\renewcommand{\FPT}{{\sf FPT}\xspace}
\renewcommand{\XP}{{\sf XP}\xspace}
\newcommand{\inners}{1.2pt}
\newcommand{\outers}{1pt}
\newcommand{\angled}[1]{\left\langle{#1}\right\rangle}
\newcommand{\pname}[1]{{\sc #1}}
\newfunc{\YES}{YES}
\newfunc{\NOi}{NO}
\newfunc{\truet}{true}
\newfunc{\falset}{false}
\newcommand{\ol}[1]{\overline{#1}}
\newcommand{\bigO}[1]{\mathcal{O}\!\left(#1\right)}
\newcommand{\congestion}{c}
\definecolor{goodred}{HTML}{CC6677}
\definecolor{goodblue}{HTML}{332288}
\definecolor{goodyellow}{HTML}{DDCC77}
\definecolor{goodgreen}{HTML}{117733}
\definecolor{goodcyan}{HTML}{88CCEE}
\definecolor{goodwine}{HTML}{882255}
\definecolor{goodteal}{HTML}{44AA99}
\definecolor{goodolive}{HTML}{999933}
\definecolor{goodpurple}{HTML}{AA4499}
\begin{document}

\maketitle

\begin{abstract}
  In the {\sc Directed Disjoint Paths} problem ($k$-{\sc DDP}), we are given a digraph
  and $k$ pairs of terminals, and the goal is to find $k$ pairwise vertex-disjoint paths
  connecting each pair of terminals. Bang-Jensen and Thomassen [SIAM J. Discrete Math.
  1992] claimed that $k$-{\sc DDP} is \NP-complete on tournaments, and this result
  triggered a very active line of research about the complexity of the problem on
  tournaments and natural superclasses. We identify a flaw in their proof, which has been
  acknowledged by the authors, and provide a new \NP-completeness proof. From an
  algorithmic point of view, Fomin and Pilipczuk~[J. Comb. Theory B 2019] provided an
  \FPT algorithm for the edge-disjoint version of the problem on semicomplete digraphs,
  and showed that their technique cannot work for the vertex-disjoint version. We
  overcome this obstacle by showing that the version of $k$-{\sc DDP} where we allow
  congestion $c$ on the vertices is \FPT on semicomplete digraphs provided that $c$ is
  greater than $k/2$. This is based on a quite elaborate irrelevant vertex argument inspired by the edge-disjoint version, and we show that our choice of $c$ is best possible for this technique, with a counterexample with no irrelevant vertices when $c \leq k/2$. We also prove that $k$-{\sc DDP} on digraphs that can be
  partitioned into $h$ semicomplete digraphs is $\W[1]$-hard parameterized by $k+h$,
  which shows that the \XP algorithm presented by Chudnovsky, Scott, and Seymour~[J.
  Comb. Theory B 2019] is essentially optimal.
\end{abstract}



\section{Introduction}\label{sec:intro}
The {\sc Disjoint Paths} problem is one of the most well-studied classical \NP-complete
graph problems~\cite{GJ79}. It consists in, given an undirected graph $G$ and $k$
\emph{requests}, which are pairs
of vertices $(s_1,t_1), \ldots, (s_k,t_k)$ known as \emph{terminals}, deciding whether $G$
contains $k$ pairwise vertex-disjoint paths connecting each $s_i$ to $t_i$, for each $i \in
[k]$. As a crucial ingredient of their Graph Minors project, Robertson and
Seymour~\cite{RobertsonS95b} proved that {\sc Disjoint Paths} is \emph{fixed-parameter
tractable} (\FPT) parameterized by $k$, that is, it can be solved in time $f(k) \cdot
n^{\Ocal(1)}$ on $n$-vertex graphs for some computable function~$f$. In this article we
focus on its directed counterpart, defined analogously for an input digraph $D$ and
called {\sc Directed Disjoint Paths}, which is known to be much harder from a
computational point of view: it is already \NP-complete for a fixed number $k=2$ of
terminals~\cite{FortuneHW80}.
A number of approaches have been proposed to cope with this
intractability, ranging from approximation algorithms\cite{ChekuriKS06,000124a},
heuristics~\cite{BrandesSNWW99,MartinsGT17}, parameterized
algorithms~\cite{FominP19,Slivkins2010}, restricting the input
digraph~$D$~\cite{FortuneHW80,thomassen_tournament_nph,Disjoint-paths-in-tournaments,FradkinS15,FradkinS13},
or relaxing the problem by allowing
congestion on
vertices~\cite{CamposC0S23,Edwards2017,GiannopoulouKKK22,KawarabayashiKK14,KawarabayashiK15,Amiri2019,Lopes2022}.
In this work we consider and combine the latter three approaches,
which we proceed to discuss.

Let us start by restricting the graph class to which the input graph of {\sc Directed
Disjoint Paths} belongs.
Two relevant classes of digraphs are typically considered when one seeks to improve the
tractability of a problem: directed acyclic graphs (DAGs) and \emph{tournaments} (that
is, digraphs that can be obtained from a complete graph by orienting each edge). For the
former, the (parameterized) complexity of {\sc Directed Disjoint Paths} is well
understood: the problem is \NP-complete and  solvable in time $n^{\Ocal(k)}$ (hence, in
the class \XP)~\cite{FortuneHW80}, and $\W[1]$-hard~\cite{Slivkins2010} (even to
approximate within a constant factor~\cite{000124a}), thus unlikely to
be \FPT. For the latter, the landscape is more murky, and the goal of this article is to
contribute to understanding it a bit better.

Bang-Jensen and Thomassen~\cite{thomassen_tournament_nph} {\sl claimed} that {\sc
Directed Disjoint Paths} is \NP-complete on tournaments when $k$ is part of the input,
and this result triggered a long line of research. In the same paper, Bang-Jensen and
Thomassen~\cite{thomassen_tournament_nph} showed that the problem can be solved in
polynomial time for $k=2$. In fact, their algorithm works on the larger class of
\emph{semicomplete digraphs},  where each pair of distinct vertices has at least one arc between them, instead of exactly one as in tournaments.
Chudnovsky, Scott, and Seymour~\cite{Disjoint-paths-in-tournaments} showed that the
problem can be solved in polynomial time for every fixed $k$ on semicomplete digraphs, by
providing an \XP algorithm. Fradkin and Seymour~\cite{FradkinS15} proved that the
edge-disjoint version of the problem, for which all the results discussed above also
apply, is also polynomial-time solvable on semicomplete digraphs for fixed $k$. It is
worth mentioning that Chudnovsky, Fradkin, Kim, Scott, and
Seymour~\cite{ChudnovskyFS12,Disjoint-paths-in-tournaments,ChudnovskyS11,FradkinS15,FradkinS13,KimS15}
built a containment theory on semicomplete digraphs for the study of minor-related
problems, such as {\sc Directed Disjoint Paths} (see also the work of  Barbero, Paul, and
Pilipczuk~\cite{BarberoPP19}). One of the key notions in this theory is a width measure
of digraphs called \emph{directed pathwidth} -- a generalization of undirected
pathwidth (see \autoref{sec:prelim} for the
definition) -- that also plays a role in the current article.

Back to the {\sc Directed Disjoint Paths} problem, the \XP algorithm on
tournaments~\cite{Disjoint-paths-in-tournaments} has been generalized by Chudnovsky,
Scott, and Seymour~\cite{chudnovsky_union_of_tournaments} to the class
${\cal C}_h$ of digraphs whose vertex set can be partitioned into a bounded number $h$
of semicomplete digraphs (thus, semicomplete digraphs correspond to the class ${\cal
C}_1$). More precisely, the running time of their algorithm is $n^{\bigO{(hk)^5}}$. The
authors asked whether the result can be further generalized to the class ${\cal A}_h$ of
digraphs whose underlying graph has independence number at most $h$, and this is still
open. For the edge-disjoint version of the problem, an affirmative answer was given by
Fradkin and Seymour~\cite{FradkinS15}.

Concerning the fixed-parameter tractability of the problem, it is open whether  {\sc
Directed Disjoint Paths} on tournaments (or semicomplete digraphs) is \FPT
parameterized by $k$. Interestingly, the edge-disjoint version of the problem was shown
to be \FPT on semicomplete digraphs by Fomin and Pilipczuk~\cite{FominP19}, by solving
a more general problem called {\sc Rooted Immersion}. In a nutshell (more details are
given below when discussing our techniques in \autoref{sec:overview}),
theirs is a win/win approach based on the directed pathwidth of the input
digraph. If the pathwidth is small (as a function of $k$), then a dynamic programming
algorithm is used to solve the problem. Otherwise, it is shown that the digraph must
contain a large obstruction to directed pathwidth called a \emph{triple}~\cite{FradkinS13}, which is used to
find an \emph{irrelevant vertex} for the problem, that is, a vertex whose removal
does not affect the existence of a solution.

Note that the above win/win approach needs, as a first step, to compute the directed pathwidth
of the input digraph in \FPT-time. While this problem is open on general digraphs, an
\FPT algorithm on semicomplete digraphs is given in the same article~\cite{FominP19}.
This \FPT algorithm to compute directed pathwidth was generalized by Kitsunai, Kobayashi, and
Tamaki~\cite{KitsunaiKT15} to another superclass of semicomplete digraphs called
\emph{$h$-semicomplete digraphs} and denoted by ${\cal S}_h$ for some fixed integer
$h \geq 0$. Digraphs in this class satisfy the property that each vertex has at most $h$
non-neighbors (in any direction). If we denote by ${\cal T}$ and ${\cal S}$ the classes of tournaments and semicomplete
digraphs, respectively, it holds that ${\cal S} = {\cal S}_0$, and for every $h \geq
0$ (see the discussion in the beginning of \autoref{sec:tournaments-relatives}),
\begin{equation}\label{eq:containment-classes}
  {\cal T} \subseteq {\cal S}  \subseteq {\cal S}_h \subseteq  {\cal C}_{h+1}
  \subseteq {\cal A}_{h+1}.
\end{equation}

Another transversal strategy to try to overcome the inherent hardness of {\sc
Directed Disjoint Paths} is to relax the problem by allowing every vertex of $D$ to
be used by at most $c$ of the $k$ paths of the solution, for some integer $c \geq 1$
that is also part of the input and is called the \emph{congestion}; we call the
corresponding problem {\sc Directed Disjoint Paths with Congestion}. Note that if $c \geq k$ the problem is trivial, so we may
assume that $c \leq k -1$. It is open whether the case $k=3$ and $c=2$ (which is the
first non-trivial one with congestion greater than one) can be solved in polynomial
time on general digraphs. There are, however, some positive results. For instance,
Edwards, Muzi, and Wollan~\cite{Edwards2017} showed that the problem for $c=2$ can be
solved in polynomial time if the input graph is sufficiently connected as a function
of $k$, which is not the case for the truly disjoint version~\cite{Thomassen91}.
See~\cite{CamposC0S23} for recent improvements of this result. A popular variant of
the congested problem is an \emph{asymmetric} version, where the goal is to either
find a congested solution or to provide a {\sf no}-answer for the disjoint version.
This problem has been proved to be \XP parameterized by $k$ on general digraphs for
some small values of $c$~\cite{GiannopoulouKKK22,KawarabayashiKK14,KawarabayashiK15},
usually exploiting the celebrated Directed Grid
Theorem~\cite{Campos2022,KawarabayashiK15,HatzelKMM24}. On the other hand, other
articles~\cite{Amiri2019,Lopes2022} study the parameterized complexity of {\sc
Directed Disjoint Paths with Congestion} by considering parameters stronger than $k$.


Finally, it is worth mentioning that Cavallaro, Kawarabayashi, and
Kreutzer~\cite{CavallaroKK24} recently proved that the edge-disjoint version of {\sc
Directed Disjoint Paths} is \FPT on Eulerian digraphs parameterized by $k$; this is
one of the rare examples where (some variant of) the problem is \FPT. We refer to the
book of Bang-Jensen and Gutin~\cite{BangJensen2018} for a thorough introduction to
algorithms on digraphs, in particular on tournaments and related superclasses.

In the sequel, for notational conciseness we may use the abbreviations $k$-{\sc DDP}
and $(k,c)$-{\sc DDP} to refer to the {\sc Directed Disjoint Paths } and {\sc
Directed Disjoint Paths with Congestion} problems, respectively. We permit ourselves
to slightly abuse notation by including the integers $k$ and $c$ in the abbreviated
names of the problems, even if they are part of the corresponding  inputs.

\subparagraph*{Our contributions.} As mentioned above, the \NP-completeness proof of
    Bang-Jensen and Thomassen~\cite{thomassen_tournament_nph} of $k$-{\sc DDP} on
    tournaments triggered intensive research in this
    area~\cite{FominP19,chudnovsky_union_of_tournaments,Disjoint-paths-in-tournaments}.
    Unfortunately, we realized that their proof has a flaw that does not seem to be
    easily fixable, as acknowledged by the authors~\cite{personal-communication}; see \autoref{sec:flaw} for an explanation of this flaw. Our
    first contribution is to provide a new (correct) proof of this result.

\begin{restatable}{theorem}{tournamentnph}\label{thm:tournament_nph}
    \pname{Directed Disjoint Paths} on tournaments is  \NP-complete.
\end{restatable}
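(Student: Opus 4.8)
Membership in \NP is immediate: a candidate solution is a family of $k$ paths, and one checks in polynomial time that each $P_i$ runs from $s_i$ to $t_i$ and that the paths are pairwise vertex-disjoint. The substance is therefore the \NP-hardness, and I would obtain it through a polynomial-time reduction from \pname{3-Sat} (a reduction from {\sc Directed Disjoint Paths} on directed acyclic graphs, which is \NP-complete when the number of requests is part of the input, could serve as an alternative starting point). Note that the reduction must produce instances in which $k$ grows with the size of the formula: for every fixed $k$ the problem lies in \XP on semicomplete digraphs, and hence on tournaments, by~\cite{Disjoint-paths-in-tournaments}, so no fixed-$k$ reduction can work. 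This is in accordance with $k$ being part of the input.

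The plan for the construction is the following. From a $3$-CNF formula $\varphi$ I would first build a digraph consisting of a \emph{variable gadget} for each variable, offering two internally vertex-disjoint directed subpaths to be read as its two truth values; a \emph{clause gadget} for each clause that can be traversed only with the help of a path coming from (a gadget associated with) one of its three literals; wiring between these gadgets enforcing consistency; and one terminal pair per variable together with one per clause. To turn this digraph into a \emph{tournament} without changing the answer, I would fix a global linear order $\prec$ on its vertices compatible with the intended flow of all solution paths, and position the gadgets so that the two terminals of every pair are \emph{not} joined by an arc, thereby forcing any $s_i$--$t_i$ path to genuinely enter the corresponding gadget. The remaining non-adjacent pairs are then oriented forward along $\prec$ whenever possible --- such arcs only provide shortcuts that the gadgets render useless --- and backward only where genuinely needed; each backward arc so added is subdivided and neutralized by attaching a fresh \emph{blocker} terminal pair whose only short route passes through that subdivision vertex. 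A tight global budget --- exactly as many pairwise internally disjoint short routes as there are blocker pairs --- should force, in every solution, all blocker paths to be short and canonical, so that no added arc is ever free for a ``real'' path.

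Given such a construction, the completeness direction (a satisfying assignment yields a solution) is routine: route each variable gadget along the chosen truth value, each clause gadget through one of its satisfied literals, and each blocker along its canonical short route, disjointness being built into the design. The step I expect to be the main obstacle is soundness: since in a tournament \emph{every} pair of vertices is adjacent, a solution path can a priori jump almost anywhere, and one must show that $\prec$ together with the blocker pairs confines each real path to the chain of gadgets it is supposed to follow, forcing it to commit to a single truth value per variable and to witness the satisfaction of every clause. Making the budget argument tight enough that the blocker paths have an essentially \emph{unique} routing pattern, and then excluding every interleaving of real and blocker paths that could cheat, is the delicate point. A final technical wrinkle, should the natural construction yield a semicomplete digraph rather than a tournament, is to orient each digon consistently with $\prec$ and to check that this creates no new shortcut --- exactly the kind of step where the distinction between the two classes matters, and, plausibly, where the original argument of~\cite{thomassen_tournament_nph} breaks down.
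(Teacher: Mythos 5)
Your plan is recognisably in the same family as the paper's proof: a reduction from a \pname{3-SAT} variant, a variable gadget offering two internally disjoint routes encoding a truth value, a clause gadget that can only be satisfied with help from a literal vertex, and a global linear order on the gadgets along which arcs are oriented ``forward'' by default. In that sense you have found the right shape. But the mechanism you propose for taming the tournament's ubiquitous arcs diverges from the paper's and is precisely where your argument stops short.

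The paper reduces from \pname{(3,1)-3-SAT} (each variable occurring exactly three times positively and once negatively), not arbitrary \pname{3-SAT}; this bounded-occurrence structure is what lets the variable gadget be a fixed-size ``butterfly'' with a left wing of three positive-literal vertices and a right wing of one negative-literal vertex. Crucially, each variable gets \emph{two} requests, $(s_i,t_i)$ and $(\ol{s}_i,\ol{t}_i)$, and the gadget has a single \emph{center} vertex $\alpha_i$; since the two satisfying paths cannot both use $\alpha_i$, at least one wing is forced to be fully occupied (\autoref{obs:busy_wing}), which is how a truth value is committed. Your plan has one request per variable and no analogue of the center/wing tension, so it is unclear how you would force commitment to exactly one truth value. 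More importantly, the paper introduces no blocker pairs, no subdivisions of backward arcs, and no budget argument. Soundness is obtained directly from the arc orientation: $p_a$ has out-neighbourhood exactly the three literal vertices of $C_a$, $q_a$ has in-neighbourhood exactly those three vertices, the only arc between each request's endpoints points backward (e.g., $(q_a,p_a)$, $(t_i,s_i)$), and terminals may only appear as endpoints of their own paths, so the confinement argument in \autoref{obs:busy_wing} and \autoref{lem:backward_tournament_nph} goes through with no auxiliary machinery. Your proposal instead hangs on the assertion that forward arcs ``only provide shortcuts that the gadgets render useless'' and on a ``tight budget'' that forces blocker paths to a canonical routing --- both plausible-sounding but neither constructed nor verified, and you yourself flag soundness as the ``main obstacle.'' That is the genuine gap: the hard direction of the equivalence is exactly the part your plan leaves open. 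One further point of confusion: in a tournament every pair of vertices \emph{is} joined by an arc, so you cannot ``position the gadgets so that the two terminals of every pair are not joined by an arc''; what you want (and what the paper does) is to orient that arc from the sink terminal to the source terminal so it cannot serve as a one-hop shortcut.
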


As mentioned above, it is open whether $k$-{\sc DDP} on tournaments is \FPT
parameterized by $k$. Recall that the win/win approach of Fomin and
Pilipczuk~\cite{FominP19} for the edge-disjoint version has two main ingredients
(other than computing the directed pathwidth): a dynamic programming algorithm and an irrelevant
vertex argument. While the former is claimed to exist for the vertex-disjoint
version~\cite{FominP19}, the latter one is doomed to fail: Fomin and
Pilipczuk~\cite{FominP19} provide a counterexample even for $k=2$ consisting of a
family of tournaments containing arbitrarily large triples (that are the structures
where irrelevant vertices are found), but in which each vertex is relevant. Our next
contribution is to prove that this obstacle disappears if we allow for a large
congestion.

\begin{restatable}{theorem}{FPTinSemicomplete}
\label{thm:FPTinSemicomplete}
  $(k,c)$-{\sc DDP} on semicomplete digraphs is \FPT parameterized by $k$ restricted
  to instances satisfying $c > k/2$.
\end{restatable}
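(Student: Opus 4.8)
The plan is to follow the win/win scheme of Fomin and Pilipczuk~\cite{FominP19}, replacing the immersion/edge-disjoint machinery by its vertex-disjoint analogue, with the congestion used precisely to rescue the irrelevant vertex step. First I would compute the directed pathwidth of the input semicomplete digraph~$D$ in \FPT time using the algorithm of~\cite{FominP19}, and fix a threshold $p = p(k)$ to be determined. If $\dpw(D) \le p(k)$, then I would run a dynamic programming routine over an optimal directed path decomposition: a state records, for each of the $k$ requests, the current "status'' of its partial path together with, for each bag vertex, how many of the $k$ paths currently pass through it (a number in $\{0,\dots,c\}$); since $c \le k-1$ and the bag has size $O(p(k))$, the number of states is bounded by $g(k)$ for some computable~$g$, and the transitions are checked in the standard left-to-right sweep. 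This gives an \FPT algorithm in the bounded-pathwidth case, and it is where the congestion constraint is simply carried along as part of the state.

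The interesting case is $\dpw(D) > p(k)$. Here I would invoke the structural result of Fradkin and Seymour~\cite{FradkinS13}: a semicomplete digraph of large directed pathwidth contains a large \emph{triple}, i.e.\ a long "path-like'' structure consisting of a linearly ordered set of vertices together with two sets of disjoint short paths (forward and backward) certifying high pathwidth. The goal is to exhibit an irrelevant vertex inside the triple — a vertex $v$ such that $(D,\mathcal{R})$ has a solution of congestion $c$ if and only if $(D-v,\mathcal{R})$ does. The forward direction is trivial; for the backward direction one argues that any congested solution in $D$ can be rerouted to avoid~$v$. Given a solution $\mathcal{P} = \{P_1,\dots,P_k\}$ of congestion $c$, at most $k$ paths enter and $k$ paths leave any bounded "window'' of the triple's linear order, so if the triple is long enough there is a long sub-window $W$ of consecutive vertices that is, roughly, only traversed "monotonically'' by the solution; inside $W$ one then has enough forward- and backward-paths of the triple, untouched or only $c$-lightly touched by $\mathcal{P}$, to splice a detour around $v$. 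This is exactly where $c > k/2$ is used: after rerouting, a vertex of the detour structure may be used both by its own original role and by the rerouted path(s), and the bound $c > k/2$ guarantees that the total load on every vertex stays at most~$c$, because the detour paths can be chosen so that any single auxiliary vertex absorbs at most $\lceil k/2 \rceil \le c$ of the requests; a sharper accounting (pairing up requests that cross the window in opposite directions and sending each pair through a common short sub-path of the triple) yields precisely the threshold $k/2$.

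The main obstacle, and the technical heart of the proof, is the rerouting lemma: making precise what "monotone window'' means for a congested solution, showing that a sufficiently long triple must contain such a window (a counting/pigeonhole argument on the $\le 2k$ "transition'' vertices of the solution relative to the triple order, amplified by the fact that the triple's certifying paths are short and disjoint), and then verifying that the spliced detour keeps all $k$ paths, keeps them internally disjoint as \emph{walks} that can be shortcut to simple paths, keeps their endpoints, and respects congestion~$c$ on every vertex — including the vertices of $v$'s former neighbourhood in the triple that now carry extra traffic. Once this lemma is in place, choosing $p(k)$ large enough that large directed pathwidth forces a triple long enough to apply it, and iterating (each application deletes one vertex, so after at most $n$ rounds we reach the bounded-pathwidth case), yields the claimed \FPT algorithm; the tightness of the threshold is witnessed separately by the promised counterexample for $c \le k/2$.
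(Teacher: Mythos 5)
Your high-level win/win scheme (compute directed pathwidth; DP if small; find a large triple and an irrelevant vertex if large; iterate) is exactly the right skeleton and matches the paper. However, the core of the argument — the irrelevant vertex step — is described in a way that does not match the structure of the object you are working with and glosses over the two hardest technical points. A Fradkin--Seymour $k$-triple is not a linearly ordered set with forward/backward certifying paths; it is an ordered triple $(A,B,C)$ of $k$-sets with $A$ complete to $B$, $B$ complete to $C$, and a perfect matching from $C$ back to $A$. The rerouting machinery lives entirely in this tripartite structure: shortcuts of the form $b_1\to u\to \Mat(u)\to b_q$ with $u\in C$, $\Mat(u)\in A$, and the replaced vertex in $B$. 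Your ``monotone window'' picture does not obviously port to this object, and the claim that ``at most $k$ paths enter and $k$ paths leave any bounded window'' is not a priori true — bounding how many times a path can touch $B$ is itself a nontrivial consequence of the congestion assumption, not an input to the argument.

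More importantly, you do not identify the one place where $c>k/2$ is actually used. It is not an averaging bound ensuring each auxiliary vertex absorbs at most $\lceil k/2\rceil$ rerouted requests. It is the pigeonhole observation: if two vertices $u,v$ are each used by $c$ paths and $2c>k$, then some single path $P_i$ uses both, which makes $\{u,v\}$ a valid pair to shortcut $P_i$ through. This is what lets one prove that at most $4\binom{k}{c}$ vertices of $B$ are fully saturated (this is also the source of the $2^{O(k\log k)}$ triple size, which your sketch does not account for), then that each path meets $A$ and $C$ in $O(k)$ vertices, and then that each path meets $B$ in at most four vertices. Finally, your proposal omits the part the paper itself singles out as the hardest: paths that enter or leave $B$ from \emph{outside} the triple. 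Handling those requires constructing auxiliary semicomplete digraphs on $B$, extracting large matchings from $N^-(B)\setminus A$ into $B$ (and symmetrically out of $B$), and two rounds of reroutings with careful bookkeeping — none of which is present, even in outline, in the proposal. As written, the proof would not go through without filling in these gaps.
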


Note that since we can assume that $c \leq k -1$, the result of \autoref{thm:FPTinSemicomplete}
covers roughly ``half'' of the range of values of the congestion $c$. It is natural
to ask whether the problem remains \NP-complete for this range of values of $c$, that
is, when the congestion is lower-bounded by a {\sl linear} function of $k$.
This question is still open, but we provide the following hardness result, where the
congestion $c$ is {\sl almost linear} in $k$ (as $\varepsilon$ approaches one).

\begin{restatable}{theorem}{epsilonddp}\label{thm:epsilon_ddp}
     \pname{$(k,c)$-DDP} remains \NP-complete on tournaments even restricted to instances satisfying $c = k^{\varepsilon}$, for every $\varepsilon \in [0,1)$.
\end{restatable}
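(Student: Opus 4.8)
The plan is to reduce from \pname{Directed Disjoint Paths} on tournaments, which is \NP-complete by \autoref{thm:tournament_nph} (alternatively, one may directly thicken the tournaments built in its proof); membership of $(k,c)$-\pname{DDP} in \NP is immediate, so only hardness needs an argument. The case $\varepsilon=0$ is literally \autoref{thm:tournament_nph}, so fix $\varepsilon\in(0,1)$; I spell out the argument for rational $\varepsilon=p/q$ with $0<p<q$, the remaining values of $\varepsilon$ being either handled by the same scheme on the relevant integer powers or vacuous (for $k\ge 2$ the equation $c=k^{\varepsilon}$ has no integer solutions unless $\varepsilon=\ln c/\ln k$). Starting from an instance $(T,\{(s_i,t_i):i\in[k_0]\})$ of $k_0$-\pname{DDP} on an $n$-vertex tournament, I would build in polynomial time a tournament $T'$, a set $\mathcal R'$ of $k$ requests, and a congestion bound $c$, with two properties. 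First, choosing an integer $m=\Theta\!\left(n^{1/(q-p)}\right)$ and setting $c:=m^{p}$ and $k:=m^{q}$ makes $c=k^{\varepsilon}$ hold exactly; both quantities are then polynomial in $n$, and since $q>p$ we have $1\le c\le k-1$ for $n$ large, so the produced instances are non-trivial. Second, $(T',\mathcal R',c)$ is a yes-instance if and only if $(T,\mathcal R)$ is. The reduction will attach $\Theta(c)$ requests to each vertex of $T$ and use a number of trivial ``filler'' requests $(x,y)$ on fresh private arcs $x\to y$ (with $x,y$ pinned to the route $x\to y$) only to raise the total request count to the target value $k$.

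The heart of the equivalence is a \emph{congestion-absorbing gadget}. Each vertex $v$ of $T$ is represented in $T'$ by a vertex $\hat v$, and the arcs among the $\hat v$'s are inherited from $T$. In addition, for each $v$ I would add $c-1$ auxiliary \emph{blocker} requests whose canonical route is one private arc into $\hat v$ followed by one private arc out of $\hat v$; if every blocker uses its canonical route, then exactly one of the $c$ units of congestion at $\hat v$ stays free, and the $k_0$ original requests — lifted to walks through the $\hat v$'s — must use these free units precisely as the $s_i$--$t_i$ paths use the vertices of $T$ in the congestion-$1$ instance. The orientations of all other arcs of $T'$, in particular those incident to blocker and filler terminals, are chosen so that $T'$ is a tournament and so that no blocker can be rerouted through a foreign gadget without overloading some vertex; this is enforced either by a local ``funnel'' (making the source of a blocker dominated by, and its sink dominating, every vertex other than $\hat v$, which pins its path to the two intended arcs), or, where tournament-density among the auxiliary terminals obstructs such a funnel, by a global counting/exchange argument, adding further dummy demand so that total demand equals total vertex capacity and hence every auxiliary path is forced onto its shortest, canonical route.

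Correctness then splits into the two expected directions. For the forward direction, from pairwise vertex-disjoint $s_i$--$t_i$ paths in $T$ one routes, inside each gadget, the unique real path through $\hat v$ together with the $c-1$ canonical blockers, giving congestion at most $c$ at $\hat v$ and at most $1$ at every blocker/filler terminal, i.e.\ a valid congestion-$c$ solution of $(T',\mathcal R')$. For the backward direction, one first invokes the forcing argument above to assume all blockers and fillers are routed canonically; this leaves exactly one free unit of congestion at each $\hat v$, so the $k_0$ real paths of the solution, projected onto the $\hat v$'s and shortcut into simple paths, are pairwise vertex-disjoint and solve $(T,\mathcal R)$. I expect the main obstacle to be exactly the ``no rerouting'' guarantee: because tournaments are densely connected, a blocker is never literally confined to its two intended arcs, and ruling out every deviation — while keeping $T'$ a genuine tournament and hitting the identity $c=k^{\varepsilon}$ on the nose rather than up to rounding — is where essentially all of the technical work lies.
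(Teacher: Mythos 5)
Your approach is genuinely different from the paper's and, as you yourself flag, it has a real gap at its heart. The paper does not build a per-vertex ``congestion-absorbing gadget'' on top of a general tournament instance. Instead, it first proves \NP-hardness of a \emph{Restricted} variant of \pname{$(k,c)$-DDP} in which terminals may never be used as internal vertices of other paths (\autoref{thm:restricted_ddp_nph}, via a critical-path construction that queues the butterfly gadgets). It then derives \autoref{thm:epsilon_ddp} by a one-line trick: for every request $(u,v)$ whose endpoints are touched by only that request, it adds $c-1$ extra requests $(v,u)$. Each such terminal is then an endpoint of exactly $c$ requests and so is forced to be \emph{saturated} by its own paths; this makes the unrestricted problem behave exactly like the restricted one. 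The identity $c=|K|^{\varepsilon}$ comes out because $|K|=(|X|+1)c$ where $|X|$ (the number of non-$s^*,t^*$ requests) can be made any desired polynomial in the input size. This sidesteps all rerouting concerns and keeps the underlying digraph the same tournament; no new gadget vertices are introduced at all.

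Your proposal has a concrete obstacle that the paper avoids, and you correctly identify it but do not fill it: the ``no rerouting'' guarantee for blockers. Your funnel idea (making each blocker source have out-degree one and each blocker sink in-degree one) cannot coexist with the tournament constraint once there are several blockers per gadget vertex --- any arc between two blocker sources gives one of them out-degree at least two, and the same problem recurs among all $\Theta(nc)$ auxiliary terminals. The fallback ``global counting/exchange argument'' you gesture at (balancing total demand against total capacity so every auxiliary path is forced to be canonical) is not a proof: vertex capacities in the original part of the tournament are not at all tight (the real $s_i$--$t_i$ paths use only a fraction of the $\hat v$'s), so the counting does not force blocker paths onto length-$2$ canonical routes, and nothing in a tournament prevents a blocker from detouring through other gadgets' auxiliary vertices. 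Filling this gap would require either a substantially different gadget or exactly the kind of terminal-saturation trick the paper uses; without it, the backward direction (``every solution of $(T',\mathcal R',c)$ projects to a solution of $(T,\mathcal R)$'') is unproven.
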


As discussed before, Chudnovsky, Scott, and
Seymour~\cite{chudnovsky_union_of_tournaments} showed that {\sc Directed Disjoint
Paths} on the class ${\cal C}_h$ can be solved in time  $n^{\Ocal((hk)^5)}$. Our next
result is to show that this algorithm is somehow optimal, in the sense that it is
unlikely to get rid of both parameters in the exponent of $n$, even restricted to
digraphs of bounded directed pathwidth.

\begin{restatable}{theorem}{whch}\label{thm:w1h_ch}
     The \pname{$k$-DDP} problem on ${\cal C}_h$ is $\W[1]$-hard when parameterized by $k+h$, even if restricted to input digraphs of directed pathwidth two.
\end{restatable}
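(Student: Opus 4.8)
The plan is to give a parameterized reduction from \pname{Multicolored Clique}, which is $\W[1]$-hard parameterized by the number $k$ of colour classes, to \pname{$k'$-DDP} restricted to the class ${\cal C}_h$, producing an instance with $k' = \Ocal(k^2)$ terminal pairs, a fixed partition into $h = \Ocal(k^2)$ semicomplete parts, and directed pathwidth at most two. Since $k'+h$ is then bounded by a function of $k$ alone, this is a valid parameterized reduction and yields the theorem.

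Given $G$ with colour classes $V_1,\dots,V_k$, I would build the digraph $D$ as follows. For each colour $i$ there is one \emph{selection part} $W_i$, realized as a transitive tournament (hence a single semicomplete part) whose vertex set is partitioned into \emph{blocks} $B_u$, one for each $u\in V_i$, each block containing $k-1$ vertices, one ``slot'' per other colour. For every pair $\{i,j\}$ there is a terminal pair $(s_{ij},t_{ij})$, and the path routed between them is meant to pass through $W_i$ and through $W_j$, leaving each of these parts from a vertex of a single block; the only cross arcs from a block $B_u\subseteq W_i$ to a block $B_v\subseteq W_j$ are those present exactly when $uv\in E(G)$, so this path can be routed iff it ``selects'' an edge of $G$ between $V_i$ and $V_j$. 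Finally, for each colour $i$ there is a \emph{consistency} terminal pair whose path $Q_i$ in any solution is forced to traverse $W_i$ while occupying all but exactly one block; since precisely $k-1$ of the edge-paths must also go through $W_i$ and precisely $k-1$ block vertices remain available, these paths are all forced out of $W_i$ through the same block, i.e.\ they all select the same vertex of $V_i$. Laying out all selection parts and all terminals linearly along a spine gives $h=\Ocal(k^2)$ semicomplete parts and $k'=\binom{k}{2}+k$ terminal pairs.

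Correctness then has two directions. A feasible system of $k'$ pairwise disjoint paths in $D$ yields, through the consistency paths, a well-defined selection $u_i\in V_i$ for each colour and, through the cross arcs, the requirement $u_iu_j\in E(G)$ for each pair, so that $\{u_1,\dots,u_k\}$ is a multicolored clique; conversely, a multicolored clique gives the intended routing. The subtle point is that, because each $W_i$ is a transitive tournament, a path has many ``shortcut'' arcs available inside a part, so the fact that every edge-path must nevertheless exit $W_i$ through a single, edge-consistent block is not enforced by any local gadget but only by the global vertex-disjointness constraints together with the consistency paths; I would prove this with a counting/exchange argument over the blocks of each $W_i$.

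I expect the main obstacle to be simultaneously controlling the directed pathwidth. Because the parts $W_i$ are transitive tournaments ($\dpw = 0$) and every cross arc can be oriented consistently with the left-to-right order of the spine, $D$ is essentially acyclic, so one can write an explicit directed path decomposition that sweeps the spine from left to right with bags of size three; the real work is to check that the block/slot structure, the cross arcs, and in particular the consistency mechanism can all be implemented inside this width-two strip without creating longer cycles or wider separators, and that the transitivity-completing arcs needed to turn each $W_i$ into a clique do not inflate the width. Verifying this decomposition, while keeping enough structure to force a consistent selection, is the technically heaviest part of the argument.
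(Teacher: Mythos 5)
Your high-level plan (reduce from \pname{Multicolored Clique}, build one semicomplete ``selection'' gadget per colour, thread one request per colour pair and one consistency request per colour, and control both the number of cliques and the directed pathwidth) is the same shape as the paper's proof, which is itself modeled on Slivkins' \W[1]-hardness of edge-disjoint paths on DAGs. But the crucial design decision you make breaks the argument.

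You realize the selection parts $W_i$ as \emph{transitive tournaments}. In a transitive tournament with vertex order $v_1 \prec v_2 \prec \cdots \prec v_p$, every pair $(v_a,v_b)$ with $a<b$ is an arc. Consequently a directed path can jump from $v_1$ directly to $v_p$, skipping every block; nothing in the structure forces a consistency path $Q_i$ to ``occupy all but exactly one block'' of $W_i$, and nothing forces an edge-path $(s_{ij},t_{ij})$ to exit $W_i$ through a single, edge-consistent block rather than hopping along the transitive arcs. You flag this yourself as the point that would need a counting/exchange argument, but that argument is exactly the thing that does not exist in this setting: disjointness plus a single consistency path per colour cannot pin down the selected block when arbitrary forward shortcuts are available. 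This is precisely what the paper calls ``cheating'' after its Lemma~\ref{lem:slivkins_ab_hole}, and the paper's fix is a \emph{structural} redesign, not a counting argument on top of a transitive tournament.

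Concretely, the paper uses tournaments with a \emph{unique Hamiltonian path}, all non-Hamiltonian arcs reversed. With that orientation, any path that has to progress from the ``start'' side to the ``end'' side of a segment must visit every intermediate vertex. Even with this strong property, a single consistency path per colour is nowhere near sufficient: the construction needs, for each colour, two parallel ``occupation'' paths (the $a$- and $b$-paths), two ``synchronizer'' paths (the $\alpha$- and $\beta$-paths), and two additional paths ($g^1$, $g^2$) linked by carefully placed forward and backward jumping arcs, and Lemma~\ref{lem:slivkins_ab_hole} shows that all of them must behave synchronously and that precisely one block is left free, with its interior entirely unused. Your proposal has neither the orientation needed to prevent shortcuts nor the redundancy needed to force synchrony, so the backward direction of the correctness proof fails. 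The pathwidth-two and $h=\Ocal(k^2)$ accounting are plausible and roughly match the paper's Observation~\ref{obs:slivkins_structure}, but they are the easy part; the missing forcing mechanism is the substance of the theorem.
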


Moreover, \autoref{thm:w1h_ch} can be generalized to show hardness for \pname{$(k,c)$-DDP} when $c > k/2$ (cf. \autoref{thm:w1h_congestion}, which, alongside \autoref{eq:containment-classes}, implies that the win/win strategy cannot be extended beyond $h$-semicomplete graphs).
we summarize of our main contributions in \autoref{table:summary-of-results}.

 \subparagraph*{Organization.} 
    In \autoref{sec:overview} we give an overview of the techniques used to obtain our results.
  In \autoref{sec:prelim} we provide preliminaries about general digraphs,
  tournaments, related classes, directed pathwidth, and parameterized complexity. 
  In \autoref{sec:hardnesses} (resp. \autoref{sec:algo}) we provide our negative (resp. positive) results. We conclude the article in \autoref{sec:conclusions} with some directions for further research.
  


\section{Overview of our techniques}
\label{sec:overview}


The reduction that we use to prove \autoref{thm:tournament_nph} is novel and versatile enough so
that we can build upon it and modify it appropriately to prove the \NP-completeness of
several variants of the problem. Intuitively, in the proof of \autoref{thm:tournament_nph} we
reduce from a variant of \pname{3-SAT} where each variable has a bounded number of occurrences (namely, exactly three positive and one negative). This
allows us to build an instance of \pname{$k$-DDP} where a variable's truth value is determined by
two requests (cf. \autoref{fig:directed_butterfly}), while the satisfaction of the clauses is encoded using one additional
request per clause; interestingly, the paths that fulfill the requests have length at
most five. By extending this construction using a single long path, named the \emph{critical path} (cf. the black path in \autoref{fig:queued_butterfly2}), which is the unique way to fulfill several additional requests, we show how to prove \NP-hardness for \pname{$(k,c)$-DDP} for digraphs
in $\mathcal{C}_2$ even if $c = \varepsilon k$ for $\varepsilon \in [0,1)$ (cf.
\autoref{thm:c2_hardness}).
With this approach, however, we are unable to prove hardness for \pname{$(k,c)$-DDP} instances where $c = \varepsilon k$ and $D$ is a tournament, only doing so for the weaker relation $c = k^\varepsilon$ (cf. \autoref{thm:epsilon_ddp}).

For the \FPT algorithm given in \autoref{thm:FPTinSemicomplete}, the challenge is that now we deal with vertex-disjoint
paths, instead of edge-disjoint paths as in the original work by Fomin and
Pilipczuk~\cite{FominP19}. Much like theirs, our proof is a win/win approach that makes extensive use of \textit{$k$-triples} (cf. \autoref{def:k-triple}), an obstacle for directed pathwidth on tournaments introduced by Fradkin and Seymour~\cite{FradkinS13}, and has two steps:
(\textit{i}) show that, in a minimum solution, almost all vertices (in their case, arcs) of a sufficiently large (i.e., with more than $f(k)$ vertices) triple can be used by other paths, and (\textit{ii}) identify a vertex in the triple that can be safely removed from the instance.
In the edge-disjoint setting, it was extensively used that, in a large enough triple, vertices had large in- or out-degree; consequently, finding available arcs to either shortcut (step \textit{i}) or reroute (step \textit{ii}) a solution was relatively simple.
In the vertex-disjoint case, this does not happen; in fact, Fomin and Pilipczuk~\cite{FominP19} present an infinite family of tournaments that have no irrelevant vertex for \pname{$k$-DDP}. As we show in \autoref{sec:counterexample}, their counterexamples are also valid when $c \leq k/2$ but, when $c > k/2$, we are able to easily find alternative paths, freeing up several vertices and thus making them irrelevant.
In particular, we are able to implement step \textit{i} using, in particular, the pigeonhole principle: any two fully congested vertices have at least one path in common. This is not enough by itself, and we must carefully construct two shortcuts simultaneously, instead of only one, to show that vertices occupied by exactly $c$ paths only amount to $\bigO{\binom{k}{c}}$ (see \autoref{lem:bounded-congested-in-B} elements of the triple. Step \textit{ii} also offers additional challenges; in particular, when dealing with the exterior neighborhood of the triple, we must find large matchings entering and leaving the triple to properly reroute the paths using it.
As such, instead of the polynomial on $k$ used by Fomin and Pilipczuk, we now require that the triple has size of the order of $2^{\bigO{k\log k}}$.

Given the success of the win/win algorithms based on directed pathwidth, it is natural to see how far one can push this approach. The results of Kitsunai, Kobayashi, and Tamaki~\cite{KitsunaiKT15} that triples are also directed pathwidth obstacles for $h$-semicomplete digraphs is a further step in this direction. By~\autoref{thm:w1h_ch} and~\autoref{eq:containment-classes} however, the class of $h$-semicomplete digraphs is essentially as far as it goes, even if we allow for congestion (cf. \autoref{thm:w1h_congestion}). To prove \autoref{thm:w1h_ch}, we take inspiration from Slivkins'~\cite{Slivkins2010} proof that \pname{$k$-Edge Disjoint Paths} is \W[1]-hard parameterized by $k$ on DAGs.
In his reduction from the \pname{Clique} instance $(G,\ell)$, a DAG in a matrix-like format is built in a way that each row corresponds to a copy of $G$ and each column to a vertex. Moreover, only one cell may be available in each row, and this must correspond to a vertex of $G$ in the clique. Requests are added so that $\ell$ of them are used to enforce the uniqueness, while $\binom{\ell}{2}$ are used to encode the edges of the clique (cf. \autoref{fig:slivkins_s3}).
The uniqueness of the available cell is the main obstacle to obtain~\autoref{thm:w1h_ch}. In Slivkins' proof, it is obtained by placing two parallel paths, named $a$ and $b$, with a request from the first vertex of $a$ to the last of $b$, which may only be fulfilled if we perform a jump at the appropriate point: from where the segment of $a$ corresponding to the desired vertex begins to where the corresponding segment in $b$ ends.
We adapt his proof by making each of the paths $a$ and $b$ a tournament (cf. \autoref{fig:slivkins_s1}). This, however, opens us to cheating: an edge-encoding path could randomly walk along a suffix of $a$ or a prefix of $b$ and break the proof.
To overcome this, we require that $a$ and $b$ are completely occupied, except at the segments corresponding to the desired vertex of $G$; this is achieved by introducing several structures and requests in order to enforce a synchronous behavior in each row of the matrix (cf. \autoref{fig:slivkins_s4}).
Using a long path strategy, similarly to the one used in~\autoref{thm:c2_hardness}, we are able to force that every vertex participates in the fulfillment of several requests while keeping both the congestion high ($\congestion = k/d$ for some constant $1 < d \leq k$) and the directed pathwidth equal to two, proving that a statement analogous to \autoref{thm:w1h_ch} (i.e., \autoref{thm:w1h_congestion}) also holds for \pname{$(k,c)$-DDP}.

\section{Definitions and preliminaries}
\label{sec:prelim}

In this section we provide basic preliminaries about digraphs (including the definition of directed pathwidth), parametererized complexity, and tournaments and related classes along with the notion of $k$-triple. In \autoref{sec:counterexample}
we present and analyze the counterexample given by  Fomin and Pilipczuk~\cite{FominP19} that shows that the irrelevant vertex technique is not extensible to \pname{$k$-DDP}. 

\subsection{Digraphs, directed pathwidth, and parametrized complexity}
For basic background on graph theory we refer the reader to~\cite{Bondy2008}.
Since in this article we mainly work with digraphs, we focus on basic definitions of
digraphs, often skipping their undirected counterparts.
Given a digraph $G$ we denote by $V(G)$ and $E(G)$ the sets of vertices and arcs of $G$, respectively.
All involved digraphs are simple, i.e., they have neither loops nor parallel arcs.
Given $X \subseteq V(G)$, we denote by $G \setminus X$ the digraph resulting from
removing every vertex of $X$ from $G$.
We denote by $G[X]$ the subgraph of $G$ \emph{induced} by $X$.

If $e$ is an arc of a digraph from a vertex $u$ to a vertex $v$, we say that $e$ has
\emph{endpoints} $u$ and $v$, that $e$ is \emph{incident} to $u$ and $v$, and that $e$ is
\emph{oriented} from $u$ to $v$. We may refer to $e$ as the ordered pair $(u,v)$.
In this case, $u$ is the \emph{tail} of $e$ and $v$ is the \emph{head} of $e$.
We also say that $e$ is \emph{leaving} $u$ and \emph{reaching} $v$, and that $u$ and $v$
are \emph{adjacent}.
A \emph{clique} in a (di)graph $G$ is a set of pairwise adjacent vertices of $G$, and an
\emph{independent set} is a set of pairwise non-adjacent vertices of $G$.
A pair of arcs is adjacent if they share an endpoint.
A \emph{matching} is a set of pairwise non-adjacent arcs.
For $A, B \subseteq V(G)$, we say that $M$ is a matching \emph{from $A$ to $B$} if every arc of $M$ has tail in $A$ and head in $B$.
The \emph{in-degree} (resp. \emph{out-degree}) of a vertex $v$ in a digraph $G$ is the
number of arcs with head (resp. tail) $v$.
We denote the in-degree and out-degree of $v$ by $\deg^-_G(v)$ and $\deg^+_G(v)$, respectively.

The \emph{in-neighborhood} $N^-_D(v)$ of $v$ is the set $\{u \in V(D) \mid (u,v) \in
E(G)\}$, and the \emph{out-neighborhood} $N^+_D(v)$ is the set $\{u \in V(D) \mid (v,u)
\in E(G)\}$.
We say that $u$ is an \emph{in-neighbor} of $v$ if $u \in N^-_D(v)$ and that $u$ is an
\emph{out-neighbor} of $v$ if $u \in N^+_D(v)$.
We extend these notations to sets of vertices: given $X \subseteq V(D)$, we define
$N^-_D(X) = (\bigcup_{v \in X}N^-_D(v)) \setminus X$ and $N^+_D(X) = (\bigcup_{v \in
X}N^+_D(v)) \setminus X$.

A \emph{walk} in a digraph $G$ is an alternating sequence $W$ of vertices and arcs that
starts and ends with a vertex, and such that for every arc $(u,v)$ in the walk, vertex
$u$ (resp. vertex $v$) is the element right before (resp. right after) arc $(u,v)$ in $W$.
If the first vertex in a walk is $u$ and the last one is $v$, then we say this is a
\emph{walk from $u$ to $v$}.
A \emph{path} on $p$ vertices is a digraph formed by $p$ pairwise distinct vertices; to avoid confusion with sets of vertices, $P = \angled{v_1, \dots, v_p}$ always denotes the path $P$ \emph{starting} at $v_1$, \emph{ending} at $v_p$, following the arcs in the given order of the vertices. We denote by $\prec_P$ the \emph{comes before than in $P$} relation, i.e., $v_i \prec_P v_j$ if and only if $i < j$.
When the ordering itself is unimportant, we use $V(P)$ to refer to its vertex set.

Given $X,Y \subseteq V(D)$, we say that $X$ is \emph{complete to} $Y$ if $(u,v) \in E(D)$ whenever $u \in X$ and $v \in Y$.

For a positive integer $k$, we denote by $[k]$ the set of integers $\{1, \ldots, k\}$.

\subparagraph*{Directed pathwidth and $k$-triples.} A \textit{path decomposition} of a digraph $D$ is
a sequence $(X_1, \ldots , X_p)$ of vertex subsets of $D$, called \textit{bags}, such that the
following three conditions are satisfied:
\begin{enumerate}
    \item $\bigcup_{i \in [p]}X_i = V(D)$,
    \item for every arc $(u, v)\in E(D)$, we either have $u,v \in X_i$ for some $X_i$, or $u \in  X_i$ and $v \in X_j$ for some $i \geq j$, and
\item for every vertex $v \in V (D)$, the set $\{i \mid v \in X_i\}$ of indices of the bags containing $v$ forms a single integer interval.    
\end{enumerate}
Similarly to undirected pathwidth, the \textit{width} of a path decomposition $(X_1, \ldots , X_p)$  of a digraph $D$ is defined as $\max_{i \in [p]}|X_i|-1$, and the \textit{directed pathwidth} of $D$, denoted by $\dpw(D)$, is the smallest integer $\ell$ such that there exists a path decomposition of $D$ of width $\ell$. 

\begin{definition}[$k$-triple]\label{def:k-triple}
  Let $D$ be a digraph. For an integer $k \geq 1$, a \emph{$k$-triple} $\mathcal{K}$ of $D$ is formed by an
  ordered triple $(A, B, C)$  of disjoint subsets of $V(D)$ and
  \begin{outline}
    \1 $|A| = |B| = |C| = k$; and
    \1  there are orderings $(a_1, \ldots, a_k)$, $(b_1, \ldots, b_k)$, and $(c_1,
    \ldots, c_k)$ of $A,B$, and $C$, respectively, such that
    \2 for all  $i,j \in [k]$ we have $(a_i,b_j), (b_i,c_j) \in E(D)$, and
    \2 for all $i \in [k]$ we have $(c_i, a_i) \in E(T)$.
  \end{outline}
\end{definition}
When working with a $k$-triple $(A,B,C)$,  it is useful to have an easy way
to refer the associated endpoints of the matching from $C$ to $A$.
Thus we sometimes refer to this matching as a bijective mapping $\Mat$ from $C$
to $A$.
That is, for $i \in [k]$ we have $\Mat(c_i) = a_i$ and $\Mat^{-1}(a_i) = c_i$.
In addition, we extend graph-theoretical notation to $k$-triples in the
following way.
If $\mathcal{K}$ is a $k$-triple, we denote by $V(\mathcal{K})$ the set $A \cup
B \cup C$ and by $E(\mathcal{K})$ the set of all arcs between pairs of
vertices of $V(\mathcal{K})$.
See \autoref{fig:k-triple-example} for an example of a $4$-triple.

\begin{figure}[h]
  \centering
  \begin{tikzpicture}[yscale=.6]
    \foreach \i in {1,...,4} {
      \node[blackvertex, label={[yshift=-.1cm]-90:{$a_\i$}}] (a\i) at (\i-1,-.5) {};
      \node[blackvertex, label={[xshift=-.1cm]180:{$b_\i$}}] (b\i) at (-2,\i-1) {};
      \node[blackvertex, label={[yshift=.1cm]90:{$c_\i$}}] (c\i) at ($(a\i) + (0,4.5)$) {};
      \draw[arrow] (c\i) -- (a\i);
    }
    \foreach \i in {1,...,4}
    \foreach \j in {1,...,4} {
      \draw[arrow,opacity=.5] (a\i) -- (b\j);
      \draw[arrow,opacity=.5] (b\i) -- (c\j);
    }
    \node[rectangle,draw,fit=(a1)(a4), label=0:{$A$}] {};
    \node[rectangle,draw,fit=(b1)(b4), label=90:{$B$}] {};
    \node[rectangle,draw,fit=(c1)(c4), label=0:{$C$}] {};
  \end{tikzpicture}
  \caption{A $4$-triple $(A, B, C)$. We remark that nothing is known about the arcs
  inside $A$, $B$, or $C$ nor about arcs between $C$ and $A$ other than the ones in the matching.}
  \label{fig:k-triple-example}
\end{figure}
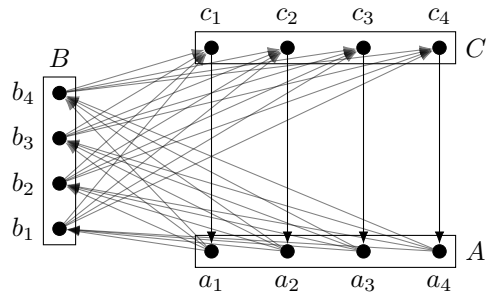

\subparagraph*{Parameterized complexity.}  A \emph{parameterized problem} is a language $L \subseteq \Sigma^* \times \mathbb{N}$, for some finite alphabet $\Sigma$.  For an instance $(x,k) \in \Sigma^* \times \mathbb{N}$, the value~$k$ is called the \emph{parameter}. Such a problem is \emph{fixed-parameter tractable} (\FPT for short) if there is an algorithm that decides membership in $L$ of an instance $(x, k)$ in time $f(k)\cdot {|x|}^{\Ocal(1)}$ for some computable function~$f$, and it is in the class \XP if there is an algorithm that decides membership in $L$ of an instance $(x, k)$ in time $f(k)\cdot {|x|}^{g(k)}$ for some computable functions~$f$ and $g$. Consult the monographs~\cite{CyganFKLMPPS15,Niedermeier06,FlumG06,DoFe13,FominLSZ19} for background on parameterized complexity.

\subsection{Tournaments and relatives}
\label{sec:tournaments-relatives}

As already said in the introduction, we denote by ${\cal T}$ and ${\cal S}$ the classes of tournaments and semicomplete digraphs, respectively.  For every non-negative integer $h$, we consider three classes of digraphs.
The class $\mathcal{S}_h$ contains every digraph $D$ such that every $v \in V(D)$ has at
most $h$ non-neighbors.
The class $\mathcal{C}_h$ contains every digraph whose vertex set can be partitioned into
at most $h$ cliques.
The class $\mathcal{A}_h$ contains every digraph $D$ with $\alpha(D) \leq h$.

Brook's Theorem~\cite{Brooks_1941} states that the vertex set of every graph $G$ with
maximum degree $\Delta(G) \leq k$ can be partitioned into $k + 1$ color classes $X_1,
\ldots, X_{k+1}$ such that each $X_i$ induces an independent set in $G$.
Let $D \in \mathcal{S}_h$.
Applying Brook's Theorem to the complement $\overline{D}$ of $D$, we conclude that $V(D)$
can be partitioned into $h+1$ cliques and thus $\mathcal{S}_h \subseteq \mathcal{C}_{h+1}$.
Since every independent set of a digraph intersects any clique in at most one vertex, it
also holds that $\mathcal{C}_h \subseteq \mathcal{A}_{h}$.
In \autoref{table:summary-of-results} we give a summary of our main results.

\begin{table}
\begin{tabularx}{.99\textwidth}{l >{\centering\arraybackslash}X >{\centering\arraybackslash}X >{\centering\arraybackslash}X}
\toprule
  & $\mathcal{T}$ & $\mathcal{S}$ & $\mathcal{C}_h$\\ 
 \midrule
 $k$-\textsc{DDP} & \NP-c. [\autoref{thm:tournament_nph}] & \textendash & \textendash\\
 \midrule
 $(k,c)$-\textsc{DDP} & \NP-c. for $c = k^{\varepsilon}$, $\varepsilon \in [0,1)$ [\autoref{thm:epsilon_ddp}] & \FPT by $k+h$ when $2c > k$ [\autoref{thm:FPTinSemicomplete}] & \W[1]-h. by $k+h$ with $\dpw(T) \leq 2$ [\autoref{thm:w1h_ch}]\\
\bottomrule 
\end{tabularx}
\caption{Summary of our main results.}
\label{table:summary-of-results}
\end{table}


\subsection{Analyzing the Fomin-Pilipczuk counterexamples}
\label{sec:counterexample}

Let us generalize the counterexamples given in~\cite{FominP19} to other values of the congestion; this will show that the irrelevant vertex technique is not extensible to \pname{$(k,c)$-DDP} for appropriate values of $c$. For each $n \geq 1$ and $c \geq 1$, we build an instance $(T_n, K, c)$ of \pname{$(k,c)$-DDP} in tournaments such that:
\begin{enumerate}
    \item The digraph $T_n$ has $4(n+1)$ vertices.
    \item The vertex set of $T_n$ is partitioned into two sets $U,V$ of the same size, with $U = \{u_i \mid i \in [2n+2]\}$ and $V = \{v_i \mid i \in [2n+2]\}$.
    \item The tournament $T_n[U]$ has a path $P_u = \angled{u_1, \dots, u_{2n+2}}$ and all other arcs going in the opposite direction. Similarly, $T_n[V]$ has a path $P_v = \angled{v_{2n+2}, \dots, v_1}$ and all other arcs going in the opposite direction.
    \item For every pair $(u_i, v_j) \in U \times V$, arc $(u_i, v_j)$ exists if $i = j$, otherwise we have the arc $(v_j, u_i)$.
    \item Digraph $T_n$ has an $n$-triple where $A = \{v_1, v_n\}$, $B = \{u_{n+3}, \dots, u_{2n+2}\}$, and $C = \{u_1, \dots, u_n\}$.
    \item There are $c$ requests $(u_1, u_{2n+2})$ and $c$ requests $(v_{2n+2}, v_1)$.
\end{enumerate}

\begin{figure}[!htb]
  \centering
  \begin{tikzpicture}[xscale=1.0]
    \GraphInit[unit=3,vstyle=Normal]
    \SetVertexNormal[Shape=circle, FillColor=black, MinSize=2pt]
    \tikzset{VertexStyle/.append style = {inner sep = \inners, outer sep = \outers}}
    \SetVertexLabelOut

    \begin{scope}
      \Vertex[x=0, y=0, Lpos=180, Math, L={u_1}]{u1}
      \Vertex[x=2, y=0, Lpos=90, Math, Ldist=0.2cm, L={u_2}]{u2}
      \Vertex[x=4, y=0, Lpos=90, Math, Ldist=0.2cm, L={u_3}]{u3}
      \Vertex[x=6, y=0, Lpos=90, Math, Ldist=0.2cm, L={u_4}]{u4}
      \Vertex[x=8, y=0, Lpos=90, Math, Ldist=0.2cm, L={u_5}]{u5}
      \Vertex[x=10, y=0, Lpos=0, Math, L={u_6}]{u6}

      \Edges[style={-Latex}](u1,u2,u3,u4,u5,u6)

      \node at (-1, 0) {$C$};
      \draw (-0.6, -0.75) rectangle (2.5, 0.75);
      
      \node at (11, 0) {$B$};
      \draw (7.5, -0.75) rectangle (10.6, 0.75);

    \end{scope}
    
    \begin{scope}[yshift=-3cm]
      \Vertex[x=0, y=0, Lpos=180, Math, L={v_1}]{v1}
      \Vertex[x=2, y=0, Lpos=-90, Math, Ldist=0.2cm, L={v_2}]{v2}
      \Vertex[x=4, y=0, Lpos=-90, Math, Ldist=0.2cm, L={v_3}]{v3}
      \Vertex[x=6, y=0, Lpos=-90, Math, Ldist=0.2cm, L={v_4}]{v4}
      \Vertex[x=8, y=0, Lpos=-90, Math, Ldist=0.2cm, L={v_5}]{v5}
      \Vertex[x=10, y=0, Lpos=0, Math, Ldist=0.2cm, L={v_6}]{v6}

      \node at (-1, 0) {$A$};
      \draw (-0.6, -0.75) rectangle (2.5, 0.75);

      \Edges[style={-Latex}](v6,v5,v4,v3,v2,v1)
        \begin{scope}
          \tikzset{VertexStyle/.append style = {shape = rectangle, inner sep = 2.5pt}}
          
          \AddVertexColor{white}{u1,u6}
          \AddVertexColor{gray!50}{v1,v6}
        \end{scope}
    \end{scope}

      \foreach \i in {1,2,3,4,5,6} {
        \Edges[style={-Latex}](u\i,v\i)
      }
      \foreach \i in {1,2,3,4,5} {
        \pgfmathsetmacro{\b}{\i+1}
        \foreach \j in {\b,...,6} {
            \Edges[style={-Latex,opacity=0.1}](v\j,u\i)
            \Edges[style={-Latex,opacity=0.1}](v\i,u\j)
        }
      }
      \foreach \i in {1,2,3,4} {
        \pgfmathsetmacro{\b}{\i+2}
        \foreach \j in {\b,...,6} {
            \Edges[style={-Latex,opacity=0.1, bend right=40}](u\j,u\i)
            \Edges[style={-Latex,opacity=0.1, bend right=40}](v\i,v\j)
        }
      }
      
    \Edges[style={-Latex, bend right=40}](u5,u2)
    \Edges[style={-Latex, bend right=40}](u5,u1)
    \Edges[style={-Latex, bend right=40}](u6,u2)
    \Edges[style={-Latex, bend right=40}](u6,u1)
    \Edges[style={-Latex}](v1,u5)
    \Edges[style={-Latex}](v1,u6)
    \Edges[style={-Latex}](v2,u5)
    \Edges[style={-Latex}](v2,u6)
  \end{tikzpicture}
  \caption{Counterexample for $n = 2$, where non-black vertices of the same color correspond to endpoints of a same request. The rectangles indicate the three sets that make up the $n$-triple.\label{fig:counterexample}}
\end{figure}
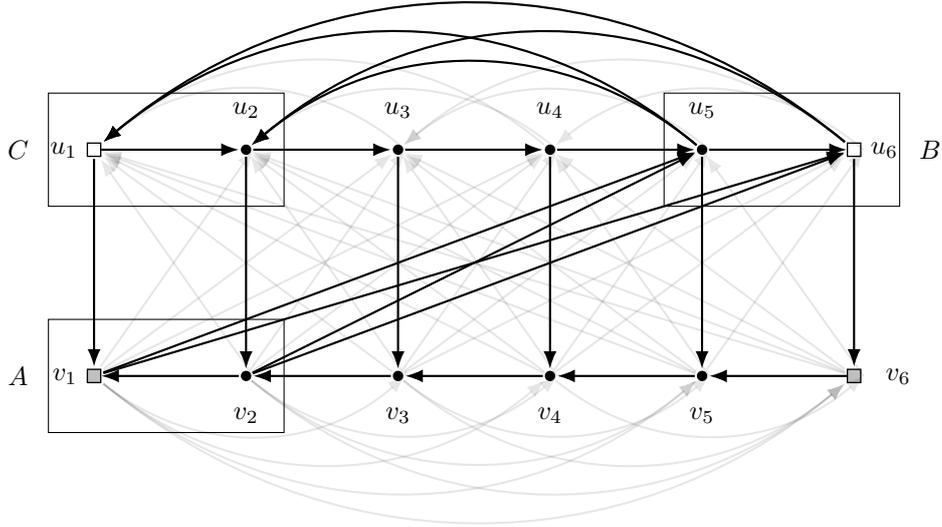

We refer to \autoref{fig:counterexample} for the case $n = 2$. Observe that the requests can only be satisfied by using $P_u$ and $P_v$, $c$ times each. This statement follows by induction on $n$: it suffices to observe that $u_2$ is the unique vertex that may be the second vertex in the $(u_1, u_{2n+2})$-satisfying paths, while $v_2$ is the unique vertex that may be the penultimate vertex in the paths that satisfy $(v_{2n+2}, v_1)$; at this point we can discard $u_1, v_1$ and repeat the analysis.
We can further restrict ourselves to congestion values smaller than $k/2$. For example, if we add $\tau$ copies of $T_n$ and add all arcs from the $i$-th copy to the $j$-th whenever $i < j$, then we can get $\congestion = k/(2\tau)$; it is not hard to come up with other strategies that would yield $\congestion = k/d$ for every $d \in [2, k]$.
At this point, it is natural to think about the interval $(1, 2)$? Interestingly, things break down in this case.
Suppose that, instead of having the same number of requests between the endpoints of $P_u$ and $P_v$, we had two requests $(u_1, u_{2n+2})$, only \textit{one} request for $(v_{2n+2}, v_1)$, and congestion $\congestion = 2$. We could use $P_u$ twice and $P_v$ once, but we could also do the following: begin by satisfying $(v_{2n+2}, v_1)$ using $P_v$; now, add the paths $\angled{u_1, v_1, u_6}$ and $\angled{u_1, u_2, v_2, u_{2n+2}}$ to satisfy the $(u_1, u_{2n+2})$ requests. With this, we have avoided the usage of every vertex in $\{u_3, \dots, u_{2n+1}\}$, making them \textit{irrelevant} to the instance; we could also have avoided using vertices in $V$ using extra steps. 
\section{Hardness results}
\label{sec:hardnesses}

  While it had been assumed for a long time that \pname{Directed Disjoint Paths} was an \NP-hard problem on tournaments~\cite{chudnovsky_union_of_tournaments,Slivkins2010,KawarabayashiKK14,Amiri2019}, we verified that there is a mistake in the proof given
  in~\cite{thomassen_tournament_nph}. We have reached out to its authors, which have confirmed the flaw in their proof in a
  personal communication~\cite{personal-communication}.
  In this section, we provide a direct \NP-hardness proof for tournaments as well as other lower bounds the $C_h$ superclasses.

\subsection{Tournaments}
\label{sec:tournament_nph}
  Fortunately, as we show in the following theorem, \pname{$k$-DDP} is
  indeed \NP-hard on tournaments.
  We highlight that our reduction is completely different
  from the approach in~\cite{thomassen_tournament_nph}.
  In particular,
  we reduce from \pname{(3,1)-3-SAT}, a variant of \pname{3-SAT} where each variable appears
  exactly four times -- once negated and three times unnegated -- and that was shown to be
  \NP-complete in~\cite{darmann_monotone_sat}.

  \medskip

\noindent\textbf{Construction.} Let $(X, \mathcal{C})$ be   the input instance to
  \pname{(3,1)-3-SAT}, where $X$ is the set of variables and $\mathcal{C}$ is the set of clauses, and $(T, K)$ be the \pname{$k$-DDP} instance we are
  going to build, with $T$ denoting the digraph, and $K$ denoting the set of requests.
  We begin by picking an arbitrary but fixed order $\angled{x_1, \dots, x_n}$ of $X$.
  Now, for each variable $x_i \in X$, add a copy of the \textit{directed
  butterfly gadget} shown in \autoref{fig:directed_butterfly} to $T$, while $(s_i, t_i)$
  and $(\ol{s}_i, \ol{t}_i)$ are added to $K$; we denote this gadget by $B_i$.
  Intuitively, the paths between the white vertices of $B_i$ must be contained in it
  and, since the gadget has only once center (namely, $\alpha_i$ in
  \autoref{fig:directed_butterfly}), at most one of them may avoid the ``wings'', as shown in \autoref{obs:busy_wing}. This
  allows us to encode the assignment of a variable and only satisfy clauses for which the
  appropriate literal is true, i.e., adding $\angled{s_i, \alpha_i, t_i}$ to the solution is equivalent to setting $x_i = \truet$.

  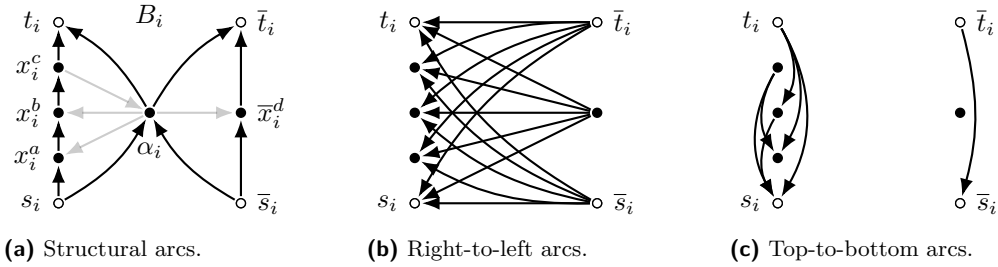
\begin{figure}[!htb]
    \centering
    \begin{subfigure}[t]{0.33\textwidth}
        \begin{tikzpicture}[scale=0.8]
          \GraphInit[unit=3,vstyle=Normal]
          \SetVertexNormal[Shape=circle, FillColor=black, MinSize=2pt]
          \tikzset{VertexStyle/.append style = {inner sep = \inners, outer sep = \outers}}
          \SetVertexLabelOut
          \Vertex[x=0, y=0, Lpos=-90, Math, Ldist=5pt, L={\alpha_i}]{c}
    
          \Vertex[x=1.5, y=1.5, Lpos=0, Math, L={\ol{t}_i}]{tm_i}
          \Vertex[x=-1.5, y=1.5, Lpos=180, Math]{t_i}
    
          \Vertex[x=1.5, y=-1.5, Lpos=0, Math, L={\ol{s}_i}]{sm_i}
          \Vertex[x=-1.5, y=-1.5, Lpos=180, Math]{s_i}
    
          \Vertex[x=-1.5, y=0.75, Lpos=180, Math, L={x^c_i}]{x_i3}
          \Vertex[x=-1.5, y=0, Lpos=180, Math, L={x^b_i}]{x_i2}
          \Vertex[x=-1.5, y=-0.75, Lpos=180, Math, L={x^a_i}]{x_i1}
    
          \Vertex[x=1.5, y=0, Lpos=0, Math, L={\ol{x}^d_i}]{xm_i1}
    
          \Edges[style={-Latex, bend left=15}](sm_i, c, tm_i)
          \Edges[style={-Latex}](sm_i, xm_i1, tm_i)
    
          \Edges[style={-Latex, bend right=15}](s_i, c, t_i)
          \Edges[style={-Latex}](s_i, x_i1, x_i2, x_i3, t_i)
    
          \Edges[style={-Latex, opacity=0.2}](c,xm_i1)
          \Edges[style={-Latex, opacity=0.2}](c,x_i1)
          \Edges[style={-Latex, opacity=0.2}](c,x_i2)
          \Edges[style={-Latex, opacity=0.2}](x_i3,c)
    
          \AddVertexColor{white}{tm_i, t_i, sm_i, s_i}
    
          \node at (0, 1.6) {$B_i$};
        \end{tikzpicture}    
        \subcaption{Structural arcs.}
    \end{subfigure}%
    ~
    \begin{subfigure}[t]{0.33\textwidth}
        \begin{tikzpicture}[scale=0.8]
          \GraphInit[unit=3,vstyle=Normal]
          \SetVertexNormal[Shape=circle, FillColor=black, MinSize=2pt]
          \tikzset{VertexStyle/.append style = {inner sep = \inners, outer sep = \outers}}
          \SetVertexLabelOut
    
          \Vertex[x=1.5, y=1.5, Lpos=0, Math, L={\ol{t}_i}]{tm_i}
          \Vertex[x=-1.5, y=1.5, Lpos=180, Math]{t_i}
    
          \Vertex[x=1.5, y=-1.5, Lpos=0, Math, L={\ol{s}_i}]{sm_i}
          \Vertex[x=-1.5, y=-1.5, Lpos=180, Math]{s_i}

          \SetVertexNoLabel
          \Vertex[x=-1.5, y=0.75, Lpos=180, Math, L={x^c_i}]{x_i3}
          \Vertex[x=-1.5, y=0, Lpos=180, Math, L={x^b_i}]{x_i2}
          \Vertex[x=-1.5, y=-0.75, Lpos=180, Math, L={x^a_i}]{x_i1}
    
          \Vertex[x=1.5, y=0, Lpos=0, Math, L={\ol{x}^d_i}]{xm_i1}

          \foreach \j in {1,2,3} {
            \Edges[style={-Latex, bend right=15}](tm_i, x_i\j)
            \Edges[style={-Latex, bend left=15}](sm_i, x_i\j)
          }
          
          \Edges[style={-Latex, bend right=15}](tm_i, s_i)
          \Edges[style={-Latex, bend left=15}](sm_i, t_i)
          \Edges[style={-Latex}](sm_i, s_i)
          \Edges[style={-Latex}](tm_i, t_i)
          
          \Edges[style={-Latex}](xm_i1, x_i3)
          \Edges[style={-Latex}](xm_i1, t_i)
          \Edges[style={-Latex}](xm_i1, x_i2)
          \Edges[style={-Latex}](xm_i1, x_i1)
          \Edges[style={-Latex}](xm_i1, s_i)
          
          \AddVertexColor{white}{tm_i, t_i, sm_i, s_i}
        \end{tikzpicture}
        \subcaption{Right-to-left arcs.}
    \end{subfigure}%
    ~
    \begin{subfigure}[t]{0.33\textwidth}
        \begin{tikzpicture}[scale=0.8]
          \GraphInit[unit=3,vstyle=Normal]
          \SetVertexNormal[Shape=circle, FillColor=black, MinSize=2pt]
          \tikzset{VertexStyle/.append style = {inner sep = \inners, outer sep = \outers}}
          \SetVertexLabelOut
    
          \Vertex[x=1.5, y=1.5, Lpos=0, Math, L={\ol{t}_i}]{tm_i}
          \Vertex[x=-1.5, y=1.5, Lpos=180, Math]{t_i}
    
          \Vertex[x=1.5, y=-1.5, Lpos=0, Math, L={\ol{s}_i}]{sm_i}
          \Vertex[x=-1.5, y=-1.5, Lpos=180, Math]{s_i}

          \SetVertexNoLabel
          \Vertex[x=-1.5, y=0.75, Lpos=180, Math, L={x^c_i}]{x_i3}
          \Vertex[x=-1.5, y=0, Lpos=180, Math, L={x^b_i}]{x_i2}
          \Vertex[x=-1.5, y=-0.75, Lpos=180, Math, L={x^a_i}]{x_i1}
    
          \Vertex[x=1.5, y=0, Lpos=0, Math, L={\ol{x}^d_i}]{xm_i1}
          
          \Edges[style={-Latex, bend left}](t_i, x_i2)
          \Edges[style={-Latex, bend left}](t_i, x_i1)
          \Edges[style={-Latex, bend left}](t_i, s_i)
          \Edges[style={-Latex, bend right}](x_i3, x_i1)
          \Edges[style={-Latex, bend right}](x_i3, s_i)
          \Edges[style={-Latex, bend right}](x_i2, s_i)
          
          \Edges[style={-Latex, bend left=15}](tm_i, sm_i)
          
          \AddVertexColor{white}{tm_i, t_i, sm_i, s_i}
        \end{tikzpicture}    
        \subcaption{Top-to-bottom arcs.}
    \end{subfigure}
    \caption{Directed butterfly gadget for variable $x_i$, occurring negated in clause $C_d$
      and unnegated in clauses $C_a, C_b,$ and $C_c$.
      White vertices represent terminals.
      The given orientations of the gray arcs will be 
      important when talking about congested versions of the problem. Vertex
      $\alpha_i$ is the \textit{center} of $B_i$, while the two disjoint paths from
      the $s$ vertices to the $t$ vertices that do not use $\alpha_i$ are its \textit{wings}.}
    \label{fig:directed_butterfly}
  \end{figure}

  After building all $n$ butterflies, add an arc from every $u \in B_j$ to every $v \in B_i$ whenever $j > i$.
  To encode our clauses, take each $C_a \in \mathcal{C}$, add new vertices $p_a, q_a$ to $T$,
  the pair $(p_a, q_a)$ to $K$, and the arc $(q_a, p_a)$; arcs between each $p_a$ and $q_b$,
  for $a \neq b$, can be added arbitrarily.
  Now, take $a \in \{1, \dots, m\}$ as an example, and suppose that the $a$-th clause of $\mathcal{C}$ is $C_a = (x_1 \vee x_2 \vee \ol{x}_3)$; in this case, we add
  arcs from $p_a$ to the vertices $x^a_1, x^a_2$, and $\ol{x}^a_3$, and from these to $q_a$.
  For simplicity, the superscript $j$ of
  each $x^j_i$ and $\ol{x}^j_i$ is the same as the subscript of the corresponding clause $C_j$.
  At this point, the only missing arcs to ensure we have a tournament are between vertices of butterfly gadgets and clause
  gadgets; we add them such that $p_a$ has no other outgoing arc and $q_a$ has no other
  incoming arc. This concludes the construction of $(T, K)$, with $K = \{(s_i, t_i), (\ol{s}_i, \ol{t}_i) \mid x_i \in X\} \cup \{(p_a, q_a) \mid C_a \in \mathcal{C}\}$. For the above example, our goal is to witness the satisfiability of $C_a$ with a path $\angled{p_a, y, q_a}$, where $y \in \{x^a_1, x^a_2, \ol{x}^a_3\}$, which will have to be the case as $N^+(p_a) = N^-(q_a) = \{x^a_1, x^a_2, \ol{x}^a_3\}$.

  \begin{lemma}
    \label{lem:forward_tournament_nph}
    If there is a satisfying assignment $\pi$ for $(X, \mathcal{C})$, there is a family
    of disjoint paths $\mathcal{P}$ satisfying $(T, K)$.
  \end{lemma}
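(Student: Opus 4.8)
The plan is to build $\mathcal{P}$ explicitly from $\pi$, gadget by gadget, and then check that the chosen paths are pairwise vertex-disjoint and route exactly the requests of $K$. For each variable $x_i$, use the value $\pi(x_i)$ to decide how to fulfil the two butterfly requests of $B_i$: if $\pi(x_i) = \truet$, add to $\mathcal{P}$ the paths $\angled{s_i, \alpha_i, t_i}$ and $\angled{\ol{s}_i, \ol{x}^d_i, \ol{t}_i}$; if $\pi(x_i) = \falset$, add instead $\angled{\ol{s}_i, \alpha_i, \ol{t}_i}$ and $\angled{s_i, x^a_i, x^b_i, x^c_i, t_i}$. In either case the two paths use only structural arcs of $B_i$ (see \autoref{fig:directed_butterfly}), are contained in $B_i$, and are internally disjoint; moreover the set of vertices of $B_i$ \emph{not} touched by them is exactly $\{x^a_i, x^b_i, x^c_i\}$ when $\pi(x_i) = \truet$ and exactly $\{\ol{x}^d_i\}$ when $\pi(x_i) = \falset$. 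Informally: a true variable frees its three positive-occurrence vertices, while a false variable frees its unique negative-occurrence vertex.

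Next, for each clause $C_a \in \mathcal{C}$, since $\pi$ satisfies $C_a$ fix one literal of $C_a$ made true by $\pi$ and let $y$ be the butterfly vertex associated with that occurrence of the variable in $C_a$. By construction $y$ has superscript $a$, so either $y = x^a_j$ for some $x_j$ with $\pi(x_j) = \truet$, or $y = \ol{x}^a_j$ for some $x_j$ with $\pi(x_j) = \falset$ (in which case $C_a$ is the unique clause containing $\ol{x}_j$, i.e.\ $d = a$ in $B_j$). Add $\angled{p_a, y, q_a}$ to $\mathcal{P}$; this is a path in $T$ because $N^+(p_a) = N^-(q_a)$ is precisely the three-element set of literal-vertices of $C_a$, which contains $y$. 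Altogether $\mathcal{P}$ has $|K|$ paths, one per request, each starting and ending at the prescribed terminals, so the routing is correct by construction.

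It remains to verify that $\mathcal{P}$ is pairwise vertex-disjoint, which is the only point requiring care. Paths living in different butterfly gadgets cannot meet since the $B_i$ are vertex-disjoint, and within a fixed $B_i$ the two butterfly paths were already observed to be disjoint. Each clause path $\angled{p_a, y, q_a}$ has its endpoints $p_a, q_a$ private to $C_a$, so two clause paths can clash only at their middle vertices; but the middle vertex of the $C_a$-path has superscript $a$ and hence belongs only to the clause-gadget of $C_a$, whence distinct clause paths have distinct middle vertices. Finally, the middle vertex $y$ of the $C_a$-path lies in some butterfly $B_j$, so the last possible collision is with the two butterfly paths of $B_j$; but by the choice of $y$ in the previous step together with the description of the freed vertices in the first step, $y$ is exactly one of the vertices of $B_j$ left untouched by those two paths ($y = x^a_j$ when $\pi(x_j) = \truet$, or $y = \ol{x}^a_j$ when $\pi(x_j) = \falset$). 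Hence no two paths of $\mathcal{P}$ share a vertex and $\mathcal{P}$ satisfies $(T, K)$. I expect the only genuine obstacle to be making this last case distinction airtight — keeping straight which butterfly vertices are released for each value of $\pi(x_i)$ and matching them to the vertex picked by each clause path — which is a matter of careful indexing rather than a real difficulty.
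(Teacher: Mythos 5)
Your proof is correct and follows essentially the same construction and verification as the paper's: both set the two butterfly paths in $B_i$ according to $\pi(x_i)$ to free precisely the literal-occurrence vertices corresponding to true literals, route each clause request through one such freed vertex of superscript $a$, and verify disjointness via the uniqueness of the superscript. Your write-up is merely more explicit about the pairwise-disjointness case analysis, which the paper compresses into one sentence.
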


  \begin{proof}
    We break our analysis in two. First, for each variable $x_i \in X$, if $\pi(x_i) = 0$,
    then add the paths $\angled{\ol{s}_i, \alpha_i, \ol{t}_i}$ and
    $\angled{s_i, x^a_1, x^b_1, x^c_1, t_i}$ to $\mathcal{P}$; otherwise add $\angled{s_i,
    \alpha_i, t_i}$ and $\angled{\ol{s}_i, \ol{x}^d_i, \ol{t}_i}$ to $\mathcal{P}$.
    Now, let $C_a$ be a clause and $\ell_i$ one of its satisfying literals.
    If $\ell_i = \ol{x}_i$, then $\pi(x_i) = 0$ and, by our choices in $B_i$, vertex
    $\ol{x}^a_i$ is not used by any path in $\mathcal{P}$, so we may now add
    $\angled{p_a, \ol{x}_i, q_a}$ to $\mathcal{P}$.
    A similar choice can be performed if $\ell_i = x_i$; note that, for each vertex in the left
    wing of $B_i$, $a$ is the unique index such that $(p_a, x^a_i), (x^a_i, q_a) \in E(T)$,
    so no other clause gadget could have used $x^a_i$ in its satisfying path.
  \end{proof}

  \begin{observation}
    \label{obs:busy_wing}
    Let $B_i$ be one of the butterfly gadgets of $(T, K)$. If $(T,K)$ admits a
    solution $\mathcal{P}$, then (i) the paths that satisfy $(\ol{s}_i, \ol{t}_i)$ and
    $(s_i, t_i)$ are internal to $B_i$, and (ii) at least one of the wings of $B_i$ is
    entirely occupied by such a path.
  \end{observation}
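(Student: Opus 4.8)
The plan is to establish parts (i) and (ii) separately, in each case reading the required adjacencies off \autoref{fig:directed_butterfly} together with the two ``global'' rules of the construction: the butterflies are transitively ordered (every vertex of $B_j$ dominates every vertex of $B_i$ whenever $j>i$), and the clause-gadget vertices $p_a,q_a$ have very restricted neighbourhoods.

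For part (i), I would first note that $p_a$ and $q_a$ are the terminals of the request $(p_a,q_a)$, so in a vertex-disjoint solution $\mathcal{P}$ no other path may touch a clause-gadget vertex; hence the path $P$ fulfilling $(s_i,t_i)$ and the path $\ol{P}$ fulfilling $(\ol{s}_i,\ol{t}_i)$ use butterfly vertices only. Now, along any walk that uses only butterfly vertices, the index of the currently visited butterfly can never increase, since all arcs between distinct butterflies go from the higher index to the lower one. As $P$ both starts and ends in $B_i$, if it ever left $B_i$ it would enter some $B_{i'}$ with $i'<i$ and could never return, contradicting that it ends at $t_i\in B_i$. Therefore $V(P)\subseteq V(B_i)$, and the same argument gives $V(\ol{P})\subseteq V(B_i)$.

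For part (ii), by (i) both $P$ and $\ol{P}$ live inside $B_i$, and by vertex-disjointness $P$ avoids $\ol{s}_i,\ol{t}_i$ while $\ol{P}$ avoids $s_i,t_i$. Since $\alpha_i$ is a single vertex, at most one of $P,\ol{P}$ uses it, so I would split into two cases. If $P$ does not use $\alpha_i$, tracing $P$ forward from $s_i$ among the vertices of $V(B_i)\setminus\{\alpha_i,\ol{s}_i,\ol{t}_i\}$ shows that each of $s_i,x^a_i,x^b_i,x^c_i$ has there a unique not-yet-visited out-neighbour, which forces $P=\angled{s_i,x^a_i,x^b_i,x^c_i,t_i}$; hence the left wing is entirely occupied by $P$. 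If $P$ uses $\alpha_i$, then $\ol{P}$ does not, and since the only in-neighbours of $\ol{t}_i$ inside $B_i$ are $\alpha_i$ and $\ol{x}^d_i$, the path $\ol{P}$ must enter $\ol{t}_i$ through $\ol{x}^d_i$; thus $\ol{x}^d_i\in V(\ol{P})$ and the right wing $\angled{\ol{s}_i,\ol{x}^d_i,\ol{t}_i}$ is entirely occupied by $\ol{P}$. In either case, one of the two wings is fully occupied by a path of $\mathcal{P}$.

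The only step with any subtlety is the forward trace of $P$ in the first case of part (ii): one has to check that, once the clause-gadget vertices and the other butterflies have been ruled out, none of $x^a_i,x^b_i,x^c_i$ has a forward arc leaving the left wing, so that $P$ cannot ``escape'' it. I do not expect a genuine obstacle here: the whole statement is a direct consequence of the gadget's design, and it is exactly what will let us decode a satisfying assignment from a solution in the backward direction of the reduction for \autoref{thm:tournament_nph}.
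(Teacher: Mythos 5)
Your proposal is correct, and it takes a slightly different and in fact cleaner route than the paper. The paper's proof begins by assuming (without loss) that $\mathcal{P}$ is \emph{minimal}, meaning no path has an internal shortcut, and then deduces from an analysis of $N^-(\ol{t}_i)$ and $N^-(t_i)$ together with minimality that $\ol{P}_i$ and $P_i$ must each be one of two explicit short paths. You dispense with minimality entirely. For part~(i), where the paper argues via the penultimate vertex of $\ol{P}_i$ and reachability in $G\setminus V(K)$, you instead observe that all inter-butterfly arcs point from higher index to lower index, so along any path that uses only (non-terminal) butterfly vertices the butterfly index is non-increasing; since both endpoints of $P$ (resp.\ $\ol{P}$) lie in $B_i$, the path cannot leave $B_i$. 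That is a nice topological shortcut. For part~(ii), where the paper relies on the exact-shape conclusions obtained via minimality, you do a direct forward trace. The one point you flag as requiring verification is indeed the only nontrivial check: from the gadget's structural, right-to-left and top-to-bottom orientation conventions one reads off $N^+(s_i)\cap V(B_i)=\{\alpha_i,x_i^a\}$, $N^+(x_i^a)\cap V(B_i)=\{x_i^b\}$, $N^+(x_i^b)\cap V(B_i)=\{x_i^c,s_i\}$ and $N^+(x_i^c)\cap V(B_i)=\{t_i,\alpha_i,x_i^a,s_i\}$, and in particular none of $x_i^a,x_i^b,x_i^c$ has an out-arc to $\ol{x}^d_i$ or to the right-wing terminals, so once $\alpha_i$ is excluded the trace is forced and $P=\angled{s_i,x_i^a,x_i^b,x_i^c,t_i}$. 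Your Case~2 (only in-neighbours of $\ol{t}_i$ in $B_i$ are $\alpha_i$ and $\ol{x}^d_i$, so $\ol{P}$ must use $\ol{x}^d_i$) is exactly what is needed to conclude the right wing is fully occupied. Both routes are sound; yours buys independence from the minimality hypothesis, while the paper's, in exchange for assuming minimality, pins down the exact shape of the paths, which is implicitly leaned on again in the proof of \autoref{lem:backward_tournament_nph}.
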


  \begin{proof}
    Let $\ol{P}_i, P_i \in \mathcal{P}$ be the paths connecting $(\ol{s}_i, \ol{t}_i)$ and
    $(s_i, t_i)$, respectively. Suppose that $\mathcal{P}$ is minimal, i.e., none of its
    paths has an internal shortcut; for example, $\angled{s_i, \alpha_i, \ol{t}_i, t_i} \notin \mathcal{P}$ because arc $(\alpha_i, t_i)$ has both endpoints contained in the path and could be used to shorten it.
    Let us first show that $\ol{P}_i \subseteq B_i$.
    Recall that $N^-(\ol{t}_i) \subseteq \{\ol{x}^d_i, \alpha_i\} \cup \{v \mid v \in
    B_j, j > i\}$ and that no terminal of
    $V(K)$ can be used as an intermediate vertex of a path,
    as they must be saturated by their own paths, so the only arcs incident to $\ol{t}_i$
    that may be used to reach it are the ones we have just listed.
    As such, the penultimate vertex of $\ol{P}_i$ is either one of $\{\ol{x}^d_i, \alpha_i\}$,
    in which case we are done as $\{\ol{x}^d_i, \alpha_i\} \subseteq N^+(\ol{s}_i)$ and
    $\mathcal{P}$ is minimal, or it is in some $B_j$ with $j > i$, but this is impossible
    as there is no path from $\ol{s}_i$ to such a vertex in $G \setminus V(K)$ by design.
    A symmetric analysis can be performed to show that $P_i \subseteq B_i$:
    if $\alpha_i \in P_i$, we are done as $\alpha_i \in N^+(s_i) \cap N^-(t_i)$, so suppose $\alpha_i \notin P_i$.
    Note that $B_j$, with $j > i$ is unreachable from $s_i$ in $G \setminus V(K)$, and so we have that $P_i \cap B_j = \emptyset$.
    Moreover, for $\ell < i$, $t_i$ is unreachable from any vertex of $B_\ell$ in $G \setminus V(K)$.
    Finally, every path between $s_i$ and the right wing of $B_i$ in $G \setminus V(K)$ passes through $\alpha_i$, which is not in $P_i$, so the right wing of $B_i$ does not intersect $P_i$.
    Consequently, if $\alpha_i \notin P_i$, then $P_i \subseteq B_i \setminus \{\alpha_i, \ol{s}_i,\ol{x}^d_i,\ol{t_i}\}$, which is precisely the left wing of $B_i$, completing the proof of (i).    
    Towards proving (ii), if $P_i = \angled{s_i, x_i^a, x_i^b, x_i^c, t_i}$ we are
    done, so suppose that $P_i = \angled{s_i, \alpha_i, t_i}$; by (i), this implies that
    $\ol{P}_i = \angled{\ol{s}_i, \ol{x}^d_i, \ol{t}_i}$.
    A completely symmetric analysis can be performed if $\ol{P}_i = \angled{\ol{s}_i, \alpha_i,
    \ol{t}_i}$ to conclude that $P_i = \angled{s_i, x_i^a, x_i^b, x_i^c}$.
  \end{proof}

  \begin{lemma}
    \label{lem:backward_tournament_nph}
    If there is a family of disjoint paths $\mathcal{P}$ satisfying $(T, K)$, there is
    a satisfying assignment $\pi$ for $(X, \mathcal{C})$.
  \end{lemma}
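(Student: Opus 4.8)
The plan is to let \autoref{obs:busy_wing} do the heavy lifting and then propagate its consequences into the clause gadgets. First I would assume that $\mathcal{P}$ is \emph{minimal}, meaning that no path of $\mathcal{P}$ contains an internal shortcut; this is harmless, since from any solution one may repeatedly shortcut a path until none remains. Write $P_i$ and $\ol{P}_i$ for the paths of $\mathcal{P}$ satisfying $(s_i,t_i)$ and $(\ol{s}_i,\ol{t}_i)$, and $R_a$ for the path satisfying $(p_a,q_a)$. As established in the proof of \autoref{obs:busy_wing} (from the neighbourhood analysis of $\ol{t}_i$ and $t_i$ together with minimality), $P_i\subseteq B_i$ is \emph{exactly} one of $\angled{s_i,\alpha_i,t_i}$ or the left wing $\angled{s_i,x^a_i,x^b_i,x^c_i,t_i}$, and symmetrically $\ol{P}_i$ is exactly one of $\angled{\ol{s}_i,\alpha_i,\ol{t}_i}$ or the right wing $\angled{\ol{s}_i,\ol{x}^d_i,\ol{t}_i}$. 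I would then define $\pi$ by setting $\pi(x_i)=\falset$ if $P_i$ is the left wing and $\pi(x_i)=\truet$ otherwise (i.e.\ if $P_i=\angled{s_i,\alpha_i,t_i}$); this is well defined precisely because $P_i$ is one of exactly these two paths, and it is the inverse of the correspondence used in \autoref{lem:forward_tournament_nph}.

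Next I would pin down the clause paths. Since the only out-neighbours of $p_a$ and the only in-neighbours of $q_a$ are the three literal-vertices of $C_a$, and since the terminals $p_a,q_a$ cannot be used internally by any other path, minimality forces $R_a=\angled{p_a,y,q_a}$ for a single literal-vertex $y$ of $C_a$: the second vertex of $R_a$ lies in $N^+(p_a)$ and the vertex preceding $q_a$ lies in $N^-(q_a)$, and were these two vertices different, the arc from $p_a$ to the latter would be an internal shortcut of $R_a$.

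Finally I would check that the literal represented by $y$ is satisfied by $\pi$. If $y=x^a_j$ is a positive occurrence of $x_j$ in $C_a$, then $x^a_j$ is used by $R_a$ and hence not by $P_j$; since the left wing of $B_j$ contains $x^a_j$, this rules out $P_j$ being the left wing, so $P_j=\angled{s_j,\alpha_j,t_j}$ and $\pi(x_j)=\truet$, which satisfies $C_a$. If instead $y=\ol{x}^a_j$ (so $x_j$ occurs negated in $C_a$, i.e.\ $d=a$ for $x_j$ and $y=\ol{x}^d_j$), then $\ol{x}^d_j$ is used by $R_a$ and hence not by $\ol{P}_j$; since the right wing of $B_j$ contains $\ol{x}^d_j$, this forces $\ol{P}_j=\angled{\ol{s}_j,\alpha_j,\ol{t}_j}$, so $\alpha_j\in\ol{P}_j$ and thus $\alpha_j\notin P_j$, leaving only $P_j=\angled{s_j,x^a_j,x^b_j,x^c_j,t_j}$; hence $\pi(x_j)=\falset$, which again satisfies $C_a$. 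As every clause is satisfied, $\pi$ is a satisfying assignment for $(X,\mathcal{C})$. I expect the only delicate point to be the bookkeeping behind the claim that $P_i$, $\ol{P}_i$, and $R_a$ have exactly the asserted shapes; but this is essentially handed to us by \autoref{obs:busy_wing} and the design of the clause gadgets, and the real content is the simple observation that the unique center $\alpha_i$ of each butterfly acts as a mutual-exclusion lock between $P_i$ and $\ol{P}_i$, which is exactly what makes the derived truth assignment consistent.
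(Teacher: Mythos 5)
Your proof is correct and follows essentially the same route as the paper's: define $\pi$ from the shape of the variable paths forced by \autoref{obs:busy_wing}, then argue each clause path $R_a$ must pass through a single literal-vertex whose occupancy forces the right truth value. The only (cosmetic) differences are that you define $\pi(x_i)$ from the shape of $P_i$ rather than $\ol{P}_i$ — both choices agree on every variable that actually appears in a clause path, and both yield satisfying assignments — and that you spell out via minimality why $R_a$ has exactly three vertices, a step the paper leaves implicit by appealing to $N^+(p_a)=N^-(q_a)$.
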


  \begin{proof}
    Let $\ol{P}_i, P_i \in \mathcal{P}$ be the paths satisfying $(\ol{s}_i, \ol{t}_i)$ and
    $(s_i, t_i)$, respectively.
    By \autoref{obs:busy_wing}, at least one of these two paths occupies a wing.
    If $\ol{P}_i = \angled{\ol{s}_i, \ol{x}^d_i, \ol{t}_i}$, we set $\pi(x_i) = 1$,
    otherwise set $\pi(x_i) = 0$.
    To see that this is a satisfying assignment, take $C_a \in \mathcal{C}$ and the
    corresponding path $P_a = \angled{p_a, v_i, q_a}$, such that $v_i \in B_i$.
    If $v_i = x_i^a$, then $x_i \in C_a$ and $x_i^a \notin P_i$ as $P_i \cap P_a =
    \emptyset$, but this means that $P_i = \angled{s_i, \alpha_i, t_i}$ as this is the unique
    path that can satisfy the request $(s_i, t_i)$ and not use $x_i^a$ by \autoref{obs:busy_wing}.
    The latter in turn implies that $\ol{P}_i = \angled{\ol{s}_i, \ol{x}^d_i, \ol{t}_i}$,
    $\pi(x_i) = 1$, and so $C_a$ is satisfied.
    If, however, $v_i = \ol{x}^a_i$, then $\ol{x}_i \in C_a$, and $\ol{P}_i =
    \angled{\ol{s}_i, \alpha_i, \ol{t}_i}$, in which case we have set $\pi(x_i) = 0$,
    satisfying $C_a$.
  \end{proof}

  \tournamentnph*

  \begin{proof}
    This follows immediately from the above construction and lemmas, as well as the
    straightforward algorithm that, given a solution $\mathcal{P}$ to $(T,K)$, checks
    whether $\mathcal{P}$ satisfies the requests $K$ and its paths are pairwise disjoint.
  \end{proof}

  \subsection{Tournaments with congestion}

  As our main focus for this work is on congested variants of \pname{Directed Disjoint
  Paths}, it is natural to extend the proof of \autoref{thm:tournament_nph} to these
  variants as well. In particular, we are interested in the case where the congestion $\congestion = |K|/d$.

  \subsubsection{Restricted $K$}

  We first consider a variant that we call \pname{Restricted $(k,c)$-DDP}, where we are additionally forbidden from using vertices of $K$ as inner
  vertices of other paths; that is, if $(u,v) \in K$, then a path $P$ has $u,v \in P$ if and
  only if $u$ is the first vertex of $P$ and $v$ is the last. Note that if $\congestion = 1$,
  this is precisely what we have in the standard \pname{$k$-DDP} problem.

  \noindent\textbf{Meta-construction.} We reduce from the instance $(T, K)$ of \pname{$k$-DDP} built in the proof of \autoref{thm:tournament_nph} to construct the
  instance $(T', K', \congestion)$ of \pname{Restricted $(k',c)$-DDP}. Let
  $\angled{B_1, \dots, B_n}$ be the ordering of the butterfly gadgets, that is, $B_j$ is
  complete to $B_i$ for all $j > i$.
  We perform the following modification to $T$: for each $i \in [n-1]$, reverse the arc
  $(x^a_{i+1}, \ol{x}^d_i)$, where $x^a_{i+1}$ is the unique out-neighbor of $s_{i+1}$ in
  the left wing of $B_{i+1}$.
  To conclude the construction of $T'$, we add two vertices $s^*, t^*$ and arcs such that
  $x^a_1$ is the unique out-neighbor of $s^*$ and $\ol{x}^d_n$ is the unique in-neighbor of $t^*$.
  Finally, we add $\congestion-1$ requests $(s^*, t^*)$ and one request $(t^*, s^*)$ to
  $K$, thus obtaining $K'$.
  Note that $\congestion$ is \textit{almost} arbitrary, as it can be set to a constant or whichever
  \textit{fraction} of $|K'|$ that we wish.
  Not every element of $o(|K'|)$, however, is assignable to $\congestion$; for example,
  $\congestion = |K'| -
  \log |K'|$ is not a viable candidate, as this would imply that $|K'|$ is exponential in
  $|K|$ since $|K'| = |K| + \congestion$.
  We illustrate the resulting tournament $T'$ in \autoref{fig:queued_butterfly}.

  \begin{figure}[!htb]
    \centering
    \begin{tikzpicture}[scale=0.8]
      \GraphInit[unit=3,vstyle=Normal]
      \SetVertexNormal[Shape=circle, FillColor=black, MinSize=2pt]
      \tikzset{VertexStyle/.append style = {inner sep = \inners, outer sep = \outers}}
      \SetVertexLabelOut
      \begin{scope}
        \Vertex[x=-2.5, y=-0.75, Lpos=180, Math, L={s^*}]{ss}
        \SetVertexNoLabel

        \Vertex[x=0, y=0, Lpos=-90, Math, Ldist=5pt, L={\alpha_1}]{c}

        \Vertex[x=1.5, y=1.5, Lpos=90, Math, L={\ol{t}_1}]{tm_1}
        \Vertex[x=-1.5, y=1.5, Lpos=90, Math]{t_1}

        \Vertex[x=1.5, y=-1.5, Lpos=-90, Math, L={\ol{s}_1}]{sm_1}
        \Vertex[x=-1.5, y=-1.5, Lpos=-90, Math]{s_1}

        \Vertex[x=-1.5, y=0.75, Lpos=180, Math, L={x^c_1}]{x_13}
        \Vertex[x=-1.5, y=0, Lpos=180, Math, L={x^b_1}]{x_12}
        \Vertex[x=-1.5, y=-0.75, Lpos=180, Math, L={x^a_1}]{x_11}

        \Vertex[x=1.5, y=0, Lpos=0, Math, L={\ol{x}^d_1}]{xm_11}

        \Edges[style={-Latex, bend left=15, opacity=0.2}](sm_1, c, tm_1)
        \Edges[style={-Latex, opacity=0.2}](sm_1, xm_11, tm_1)

        \Edges[style={-Latex, bend right=15, opacity=0.2}](s_1, c, t_1)
        \Edges[style={-Latex, opacity=0.2}](s_1, x_11)
        \Edges[style={-Latex, opacity=0.2}](x_13, t_1)
        \Edges[style={-Latex, opacity=0.2}](c,x_11)
        \Edges[style={-Latex, opacity=0.2}](c,x_12)

        \Edges[style={-Latex}](ss, x_11, x_12, x_13, c, xm_11)

        \AddVertexColor{white}{tm_1, t_1, sm_1, s_1, ss}
      \end{scope}

      \begin{scope}[xshift=4.5cm]
        \SetVertexNoLabel
        \Vertex[x=0, y=0, Lpos=-90, Math, Ldist=5pt, L={\alpha_2}]{c}

        \Vertex[x=1.5, y=1.5, Lpos=90, Math, L={\ol{t}_2}]{tm_2}
        \Vertex[x=-1.5, y=1.5, Lpos=90, Math]{t_2}

        \Vertex[x=1.5, y=-1.5, Lpos=-90, Math, L={\ol{s}_2}]{sm_2}
        \Vertex[x=-1.5, y=-1.5, Lpos=-90, Math]{s_2}

        \Vertex[x=-1.5, y=0.75, Lpos=180, Math, L={x^c_2}]{x_23}
        \Vertex[x=-1.5, y=0, Lpos=180, Math, L={x^b_2}]{x_22}
        \Vertex[x=-1.5, y=-0.75, Lpos=180, Math, L={x^a_2}]{x_21}

        \Vertex[x=1.5, y=0, Lpos=0, Math, L={\ol{x}^d_2}]{xm_21}

        \Edges[style={-Latex, bend left=15, opacity=0.2}](sm_2, c, tm_2)
        \Edges[style={-Latex, opacity=0.2}](sm_2, xm_21, tm_2)

        \Edges[style={-Latex, bend right=15, opacity=0.2}](s_2, c, t_2)
        \Edges[style={-Latex, opacity=0.2}](s_2, x_21)
        \Edges[style={-Latex, opacity=0.2}](x_23, t_2)
        \Edges[style={-Latex, opacity=0.2}](c,x_21)
        \Edges[style={-Latex, opacity=0.2}](c,x_22)

        \Edges[style={-Latex}](xm_11, x_21, x_22, x_23, c, xm_21)

        \AddVertexColor{white}{tm_2, t_2, sm_2, s_2}
      \end{scope}

      \begin{scope}[xshift=9cm]
        \Vertex[x=2.5, y=-0.75, Lpos=0, Math, LabelOut, L={t^*}]{ts}

        \SetVertexNoLabel
        \Vertex[x=0, y=0, Lpos=-90, Math, Ldist=5pt, L={\alpha_3}]{c}

        \Vertex[x=1.5, y=1.5, Lpos=90, Math, L={\ol{t}_3}]{tm_3}
        \Vertex[x=-1.5, y=1.5, Lpos=90, Math]{t_3}

        \Vertex[x=1.5, y=-1.5, Lpos=-90, Math, L={\ol{s}_3}]{sm_3}
        \Vertex[x=-1.5, y=-1.5, Lpos=-90, Math]{s_3}

        \Vertex[x=-1.5, y=0.75, Lpos=180, Math, L={x^c_3}]{x_33}
        \Vertex[x=-1.5, y=0, Lpos=180, Math, L={x^b_3}]{x_32}
        \Vertex[x=-1.5, y=-0.75, Lpos=180, Math, L={x^a_3}]{x_31}

        \Vertex[x=1.5, y=0, Lpos=0, Math, L={\ol{x}^d_3}]{xm_31}

        \Edges[style={-Latex, bend left=15, opacity=0.2}](sm_3, c, tm_3)
        \Edges[style={-Latex, opacity=0.2}](sm_3, xm_31, tm_3)

        \Edges[style={-Latex, bend right=15, opacity=0.2}](s_3, c, t_3)
        \Edges[style={-Latex, opacity=0.2}](s_3, x_31)
        \Edges[style={-Latex, opacity=0.2}](x_33, t_3)
        \Edges[style={-Latex, opacity=0.2}](c,x_31)
        \Edges[style={-Latex, opacity=0.2}](c,x_32)

        \Edges[style={-Latex}](xm_21, x_31, x_32, x_33, c, xm_31, ts)

        \AddVertexColor{white}{tm_3, t_3, sm_3, s_3, ts}
      \end{scope}

    \end{tikzpicture}
    \caption{Queuing of the butterfly gadgets performed for the construction of $T'$. Gray
      arcs do not participate in the critical path of $T'$, while black arcs do. Again,
      within the butterfly gadgets, we add the missing arcs from top-to-bottom,
      right-to-left. All other arcs are also right-to-left.
    \label{fig:queued_butterfly}}
  \end{figure}
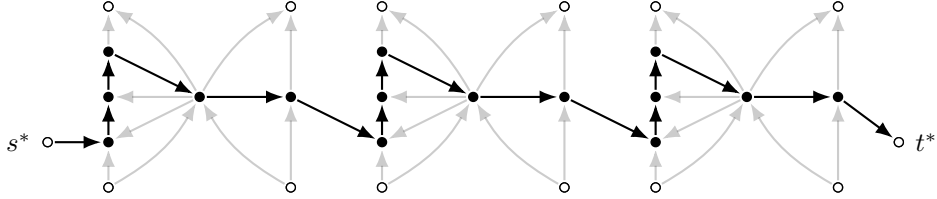

  Observe that the interior vertices of the butterflies' wings are the only vertices of
  $T'$ that do not participate in any element of $K'$.
  Moreover, by construction, they all participate in the unique $s^*$-$t^*$ path of $T'
  \setminus V(K' \setminus \{(s^*, t^*)\})$ in the same order specified by $\angled{B_1,
  \dots, B_n}$.
  We call this path the \textit{critical path} of $T'$, and in any solution to $(T', K',
  \congestion)$, all of its inner vertices participate in at least $\congestion-1$ paths;
  cf. the black arcs in \autoref{fig:queued_butterfly}.

  \begin{theorem}
    \label{thm:restricted_ddp_nph}
    \pname{Restricted $(k',c)$-DDP} remains \NP-complete even when
    restricted to tournaments and when $\congestion = \varepsilon k$ for every $\varepsilon \in (0,1)$. 
  \end{theorem}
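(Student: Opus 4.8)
The plan is to show that the meta-construction above is a polynomial-time many-one reduction from \pname{(3,1)-3-SAT} to \pname{Restricted $(k',c)$-DDP} on tournaments that, moreover, can be tuned to produce any congestion-to-requests ratio $\varepsilon\in(0,1)$. Membership in \NP is routine: a certificate is a family of $|K'|$ paths, one per request (counting the $\congestion-1$ copies of $(s^*,t^*)$ separately); it has polynomial size and the congestion bound and the ``restricted'' condition are checkable in polynomial time. To realise a prescribed rational $\varepsilon$, note that fixing $\congestion$ forces $|K'|=|K|+\congestion$, so $\congestion/|K'|=\varepsilon$ is equivalent to $\congestion=\frac{\varepsilon}{1-\varepsilon}|K|$; I would first pad $(X,\mathcal{C})$ with a bounded number of dummy variables and clauses (not affecting satisfiability, each contributing a trivially routable request) so that this value is a positive integer smaller than $|K'|$. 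One also checks that $T'$ is a tournament: $T$ is, reversing arcs preserves this, and $s^*,t^*$ are inserted with all incident arcs prescribed; in particular $(t^*,s^*)\in E(T')$.

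The crux is a structural fact about the critical path $Z=\angled{s^*,x^a_1,x^b_1,x^c_1,\alpha_1,\overline{x}^d_1,x^a_2,\dots,\overline{x}^d_n,t^*}$: it is the \emph{unique} $s^*$--$t^*$ path in $T'\setminus(V(K')\setminus\{s^*,t^*\})$. To see this I would argue that, inside this induced subdigraph, $\overline{x}^d_i$ is the only vertex of $B_i$ with an out-arc into $B_{i+1}$ (namely the reversed arc $\overline{x}^d_i\to x^a_{i+1}$; all other arcs between consecutive butterflies point backwards since $B_{i+1}$ is complete to $B_i$), and that a short case analysis of the arcs of $B_i$ restricted to $\{x^a_i,x^b_i,x^c_i,\alpha_i,\overline{x}^d_i\}$ shows $x^a_i\to x^b_i\to x^c_i\to\alpha_i\to\overline{x}^d_i$ to be the only $x^a_i$--$\overline{x}^d_i$ path there; combined with $s^*$ having unique out-neighbour $x^a_1$ and $t^*$ unique in-neighbour $\overline{x}^d_n$, this pins down $Z$. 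Since the $\congestion-1$ requests $(s^*,t^*)$ may not use vertices of $K'$ internally (this is \pname{Restricted $(k',c)$-DDP}), each is routed along $Z$; hence in every solution each internal vertex of $Z$ --- each $\alpha_i$, $x^j_i$, $\overline{x}^d_i$ --- already carries congestion $\congestion-1$, leaving at most one spare slot for the other paths.

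For completeness, given a satisfying assignment $\pi$ I would take the solution $\mathcal{P}$ of $(T,K)$ from \autoref{lem:forward_tournament_nph} (still valid in $T'$, since its paths lie inside single butterflies or are of the form $\angled{p_a,\cdot,q_a}$ and thus avoid both the reversed inter-butterfly arcs and the new vertices), add $\congestion-1$ copies of $Z$ for the $(s^*,t^*)$ requests, and add the single arc $\angled{t^*,s^*}$ for $(t^*,s^*)$. The only possibly over-congested vertices are the internal vertices of $Z$; but each such vertex is used by \emph{at most one} path of $\mathcal{P}$: $\alpha_i$ only by the centre path of $B_i$; $x^j_i$ only by the left-wing path of $B_i$ (when $\pi(x_i)=0$) or by the clause path of $C_j$ (when $\pi(x_i)=1$ and $C_j$ selects $x_i$), which are mutually exclusive; symmetrically for $\overline{x}^d_i$. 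Hence no congestion exceeds $\congestion$, and the restricted condition holds because neither $Z$ nor $\angled{t^*,s^*}$ uses a vertex of $K'$ internally.

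For soundness, let $\mathcal{P}'$ be a solution, which I may assume \emph{minimal} (shortcutting only lowers congestion and preserves the restricted condition). Since $(t^*,s^*)\in E(T')$, the $(t^*,s^*)$ path is the single arc $\angled{t^*,s^*}$, and by the structural fact each $(s^*,t^*)$ path equals $Z$; therefore, \emph{for the variable and clause requests}, every internal vertex of $Z$ acts as a capacity-one vertex. Under this restriction the argument of \autoref{obs:busy_wing} (which only uses reachability in $T'\setminus V(K')$ together with this capacity-one behaviour) goes through: for each $B_i$ the paths $\overline{P}_i,P_i$ satisfying $(\overline{s}_i,\overline{t}_i),(s_i,t_i)$ stay inside $B_i$ and jointly fill a wing; and then, as in \autoref{lem:backward_tournament_nph}, setting $\pi(x_i)=1$ iff $\overline{P}_i=\angled{\overline{s}_i,\overline{x}^d_i,\overline{t}_i}$ yields a satisfying assignment --- a clause path $\angled{p_a,v,q_a}$ (length two by minimality) routes through a literal-vertex of $C_a$, which has used up its only spare slot, forcing the matching variable path into its centre and thus making that literal true. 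I expect the main obstacle to be this soundness step: one must carefully re-verify that the uncongested case analyses of \autoref{obs:busy_wing} and \autoref{lem:backward_tournament_nph} survive once the critical-path vertices are argued to behave as capacity-one vertices, which itself hinges on the uniqueness of $Z$.
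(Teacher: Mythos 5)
Your proof is correct and follows essentially the same approach as the paper: use the queued-butterfly meta-construction, argue that the critical path is the unique $s^*$-$t^*$ path once terminals of $K'$ other than $s^*,t^*$ are removed, and observe that saturating it with $\congestion-1$ copies leaves exactly one spare slot on each internal vertex, reducing soundness and completeness to the uncongested analysis of Theorem~\ref{thm:tournament_nph}. You fill in a few details the paper leaves implicit — the explicit case analysis certifying uniqueness of the critical path, the per-vertex bookkeeping in the completeness direction, and padding the \pname{(3,1)-3-SAT} instance to hit a prescribed rational ratio rather than the paper's divisibility restriction on $|K|$ — but these are presentational, not a different argument.
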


  \begin{proof}
    Given the desired fraction $\congestion = |K'|/d$, we begin by restricting ourselves
    to instances of
    \pname{$k$-DDP} such that $|K| - 1$ is divisible by $d - 1$, and then
    constructing $(T', K', \congestion = |K'|/d)$ as above.
    Now, note that, since the critical path of $T'$ must be occupied by $\congestion-1$ paths from
    $s^*$ to $t^*$, the remaining requests of $K'$ in $T'$ are exactly the same requests of
    $K$ in $T$ with the exact same availability for each vertex in $V(T') \cap V(T)$:
    terminals of $K$ can only be used by themselves in both problems, while each non-terminal
    can either be occupied by one request between butterfly terminals or by one request
    from outside a butterfly gadget.
    As such, it follows from \autoref{thm:tournament_nph} that $(T, K)$ admits a
    solution if and only if $(T', K', \congestion)$ admits a solution.
  \end{proof}

  \subsubsection{Unrestricted $K$}

  While we are unable to prove a statement as strong as
  \autoref{thm:restricted_ddp_nph} for \pname{$(k,c)$-DDP}, we
  are able to (conditionally) rule out polynomial-time algorithms for some choices of $\congestion$;
  in particular, if $\congestion \leq |K|^\varepsilon$ for $\varepsilon \in [0,1)$ we can show
  such a lower bound.

  \epsilonddp*

  \begin{proof}
    The construction is simple: let $(T', K', \congestion')$ be an instance of \pname{Restricted $(k',c')$-DDP} obtained in the proof of
    \autoref{thm:restricted_ddp_nph} and $(T, K, \congestion)$ be the instance of
    \pname{$(k,c)$-DDP} we are building.
    We initially set $T \gets T'$, $K \gets K'$, and $\congestion \gets \congestion'$.
    Recall that there are exactly $\congestion$ requests with $s^*, t^*$ as both of its
    endpoints and all other terminals are endpoints of exactly one request.
    Now, let $X$ be the set of all requests of the latter type, and, for each $(u,v) \in
    X$, add $\congestion - 1$ requests of the form $(\bm{v,u})$ to $K$.
    As such, we have that $|K| = |K'| + |X|(\congestion-1) = (\congestion + |X|) +
    |X|(\congestion-1) = (|X| + 1)\congestion$ and, consequently, $|K|/\congestion =
    |K|^{1-\varepsilon} = |X| + 1$ is polynomial in $|X|$.

    The correctness of the reduction follows immediately from the fact that each terminal
    in $(T, K, \congestion)$ must be completely occupied by paths that have it as an
    endpoint, and
    so all that remains to be solved is the original \pname{Restricted $(k', c')$-DDP} instance.
  \end{proof}

  We would like to point out that overcoming the $|K|^\varepsilon$ barrier has proved extremely
  challenging for us, and seems to require a very different approach to the one we employed so far.
  In particular, we can rearrange the butterfly gadgets to be \textit{stacked}
  instead of queued, i.e., we glue the terminals of two butterflies, avoiding the need of
  too many paths to saturate them. This, however, is not enough, as we are unable to
  saturate the clause gadgets with few paths, instead always requiring
  $\bigO{|\mathcal{C}|}$ requests, as done in \autoref{thm:epsilon_ddp}.
  Interestingly, we can prove a weak \NP-hardness result for larger values of the
  congestion, namely if $\congestion = |K|\log^{-\varepsilon} |K|$.

  \begin{theorem}
    \label{thm:log_ratio_ddp}
    \pname{$(k,c)$-DDP} is weakly \NP-hard on tournaments if
    the congestion is of the form $|K|\log^{-\varepsilon} |K|$, for every $\varepsilon > 0$.
  \end{theorem}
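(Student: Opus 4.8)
The plan is to mimic the reduction behind \autoref{thm:epsilon_ddp}, but with a different accounting so that the number of requests $|K|$ is made very large relative to the congestion, allowing the congestion to reach the almost-linear regime $\congestion = |K|\log^{-\varepsilon}|K|$ — at the cost that the resulting instance has size exponential in the size of the original \pname{$k$-DDP} instance, whence only \emph{weak} \NP-hardness. Recall that in \autoref{thm:restricted_ddp_nph} the critical-path trick already gives a \pname{Restricted $(k',c')$-DDP} instance $(T', K', \congestion')$ for any prescribed fraction $\congestion' = |K'|/d$, and that in the proof of \autoref{thm:epsilon_ddp} we duplicated each non-$\{s^*,t^*\}$ request $(u,v)$ together with $\congestion-1$ reverse copies $(v,u)$, so that every terminal is saturated by its own bundle of paths and the unrestricted instance is equivalent to the restricted one. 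The key observation is that this duplication is the only place where the ratio $|K|/\congestion$ is controlled, and we are free to push it further.

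First, I would start from the \pname{Restricted $(k',c')$-DDP} instance $(T', K', \congestion')$ of \autoref{thm:restricted_ddp_nph} with, say, $\congestion' = |K'|/2$ (so $d=2$), and build $(T, K, \congestion)$ by setting $T \gets T'$, $K \gets K'$, $\congestion \gets \congestion'$, and then for each request $(u,v)$ in $K'$ other than the $s^*$–$t^*$ bundle, adding $\congestion - 1$ reverse requests $(v,u)$ — exactly as in the proof of \autoref{thm:epsilon_ddp}. As computed there, this makes $|K| = (|X|+1)\congestion$ where $X$ is the set of ``forward'' requests of the original construction, so $\congestion = |K|/(|X|+1)$. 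Now I would iterate: to drive $\congestion$ up to $|K|\log^{-\varepsilon}|K|$ I would instead perform $t$ rounds of a nested duplication, or equivalently a single duplication by a much larger multiplicity $\mu$ (rather than by $\congestion-1$), so that $\congestion$ becomes $|K|/(|X|+1)$ with $|X|$ now \emph{fixed} at the original value while $\congestion$ is boosted. Concretely, to achieve $\congestion = |K|\log^{-\varepsilon}|K|$ one needs $|K|/\congestion = \log^{\varepsilon}|K|$; since $|K|$ depends only on $\congestion$ and the fixed original instance, solving $|X|+1 = \log^{\varepsilon}\big((|X|+1)\congestion\big)$ forces $\congestion$, hence $|K|$, to be exponential in the size of the \pname{$k$-DDP} instance $(T,K)$ we reduced from — which is precisely why the hardness is only weak (the reduction is pseudo-polynomial, outputting an instance whose size is polynomial in the numerical value $\congestion$ but exponential in $\log\congestion$). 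Correctness is inherited verbatim from \autoref{thm:epsilon_ddp}: every terminal must be fully saturated by the bundle of paths having it as an endpoint (forward and reverse copies between the same pair pack together at the arc $(v,u)$ resp. $(u,v)$), so after removing these forced paths what remains is exactly the \pname{Restricted $(k',c')$-DDP} instance, which is a yes-instance iff the original \pname{(3,1)-3-SAT} formula is satisfiable by \autoref{thm:tournament_nph} and \autoref{thm:restricted_ddp_nph}.

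The main obstacle is the bookkeeping that certifies the instance genuinely lies in the claimed congestion regime $\congestion = |K|\log^{-\varepsilon}|K|$ \emph{for every} $\varepsilon > 0$ while remaining a valid (pseudo-polynomial) reduction: one must check that the target equation $|K|/\congestion = \log^{\varepsilon}|K|$ admits a solution with $\congestion$ an integer and with all the divisibility conditions of \autoref{thm:restricted_ddp_nph} (namely $|K|-1$ divisible by $d-1$) still satisfiable, and that the encoding length of the output is polynomial in $\congestion$ — so that the reduction is polynomial in the numerical value of the parameters but super-polynomial in the bit-length, which is exactly the definition of a weak (pseudo-polynomial) \NP-hardness reduction. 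I would also verify that adding the large reverse bundles does not accidentally create shortcuts or alternative routings in the tournament $T'$: the reverse arcs $(v,u)$ for a request $(u,v)$ were already present in $T'$ (the construction of \autoref{thm:tournament_nph} always orients ``unused'' pairs right-to-left / top-to-bottom, and the critical-path reversals are localized), so the bundles of reverse requests can only be satisfied by the single arc $(v,u)$ each, exactly as in the $\varepsilon$-DDP proof; this is the point where one leans on the rigidity established in \autoref{obs:busy_wing} and the proof of \autoref{thm:epsilon_ddp}, and no new rerouting argument is needed.
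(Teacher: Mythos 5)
Your construction matches the paper's, but the justification for why this yields a valid (weak) \NP-hardness reduction is wrong, and the actual mechanism the paper uses is missing entirely. You state that the reduction ``output[s] an instance whose size is polynomial in the numerical value $\congestion$ but exponential in $\log\congestion$,'' and later that ``the encoding length of the output is polynomial in $\congestion$ \ldots which is exactly the definition of a weak (pseudo-polynomial) \NP-hardness reduction.'' Neither claim is correct. A reduction whose output has bit-length $\mathrm{poly}(\congestion)$, with $\congestion$ exponential in the size of the source \pname{(3,1)-3-SAT} instance, is a reduction with exponential-size output, hence not a polynomial-time reduction of any kind; pseudo-polynomial reductions still produce polynomially-sized outputs (they merely allow the running time to depend polynomially on the numeric magnitudes in the \emph{input}, and are in fact the tool for proving \emph{strong}, not weak, \NP-hardness). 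As written, your argument establishes nothing.

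The ingredient you are missing is the \emph{compact encoding of $K$}, which is precisely the content of the paper's short proof. Because all the duplicated requests have the same ordered pair of endpoints, one encodes $K$ not as an explicit list but as a multiplicity map $K\colon V(T)\times V(T)\to\mathbb{N}$, with each value written in binary. The constraint $\congestion = |K|\log^{-\varepsilon}|K|$ forces $\log^{\varepsilon}|K| = |X|+1$, i.e., $\log|K| = (|X|+1)^{1/\varepsilon}$, so each multiplicity $K(u,v)\le |K|$ fits in $(|X|+1)^{1/\varepsilon}$ bits. Since $|V(T)|$ is polynomial in $|X|$, the whole compactly-encoded instance has bit-length polynomial in the size of the original \pname{(3,1)-3-SAT} instance, so the reduction \emph{is} genuinely polynomial-time. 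The result is then ``weak'' \NP-hardness in the standard sense: the hardness hinges on the multiplicities being given in binary; with unary (explicit) lists the instance blows up to size $\Theta(|K|)$ and the reduction no longer applies. Adding this encoding step, and deleting the incorrect remarks about pseudo-polynomial reductions with super-polynomial output, would align your proposal with the paper's proof.
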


  \begin{proof}
    The construction and correctness arguments are precisely the same as the ones in the
    proof of \autoref{thm:epsilon_ddp}. Instead of opting for an explicit
    representation of $K$, we instead encode it as map $K: V(T) \times V(T) \mapsto
    \mathbb{N}$, where $K(u,v)$ represents the number of requests from vertex $u$ to vertex $v$.
    Since $\congestion = |K|\log^{-\varepsilon} |K|$, it follows that $\log^\varepsilon |K| =
    |X| + 1$ and, consequently, each $K(u,v)$ can be encoded in $(|X| + 1)^{1/\varepsilon}$ bits.
  \end{proof}
\subsubsection{\NP-hardness for graphs in $\mathcal{C}_2$}
\label{sec:ch_nph}
Before proceeding, recall that $G \in \mathcal{C}_h$ if and only if the vertices of its underlying graph can be partitioned by at most $h$ cliques, i.e., its complement has chromatic number at most $h$.
Our construction is similar to the one employed on \autoref{thm:restricted_ddp_nph}
and yields an equivalent statement, however there are some key differences, so we
do not use it as directly as in \autoref{thm:epsilon_ddp}.
As such, we describe our reduction directly from \pname{(3,1)-3-SAT}.
Intuitively, we will use the flexibility of some non-arcs to extend the critical path with
some sinks of the clause requests; at the same time, we can replace the $\bigO{|T|}$ sources
with a single vertex.

\noindent\textbf{Construction} Let once again $(X, \mathcal{C})$ be the given
\pname{(3,1)-3-SAT} instance and take an arbitrary ordering $\angled{C_1, \dots, C_m}$ of
$\mathcal{C}$.
We first show how to construct an instance $(G, K, \congestion)$ of \pname{Directed
Congested Disjoint Paths} with $\congestion = |\mathcal{C}|$, and subsequently show how
to extend it to so $\congestion$ is an arbitrary ratio of $K$.
Initially, add the queued structure of butterflies shown in
\autoref{fig:queued_butterfly} to $G$ and the corresponding requests to $K$, but now
we perform some modifications to it:
(i) take the unique out-neighbor $v$ of $s^*$ and subdivide the arc $(s^*, v)$ exactly
$|\mathcal{C}|$ times, assigning to each generated $q_a$ a unique clause $C_a$;
(ii) for each $C_a$, add an arc from $x_i^a$ to $q_a$ if  $x_i \in C_a$ and from
$\ol{x}_i^a$ to $q_a$ if $\ol{x}_i \in C_a$;
(iii) for each pair $a,b \in [m]$, add the arc $(q_b, q_a)$ if and only if $b > a$;
(iv) now, add a unique vertex $p^*$ to $G$, which has all vertices in the
butterfly gadgets as out-neighbors and $t^*$ as unique in-neighbor.
Finally, add the arc and the request $(t^*, s^*)$, $\congestion-1$ requests of the form
$(s^*, t^*)$ and, for each $C_a \in \mathcal{C}$, add the request $(p^*, q_a)$ to $K$.
We exemplify the above changes in \autoref{fig:queued_butterfly2}, where the
rectangles with the thick vertical line as one of the sides correspond to the cliques
partitioning $G$.

\begin{figure}[!htb]
  \centering
  \begin{tikzpicture}[scale=0.7]
    \GraphInit[unit=3,vstyle=Normal]
    \SetVertexNormal[Shape=circle, FillColor=black, MinSize=2pt]
    \tikzset{VertexStyle/.append style = {inner sep = \inners, outer sep = \outers}}
    \SetVertexLabelOut
    \begin{scope}
      \Vertex[x=-6, y=-0.75, Lpos=180, Math, L={s^*}]{ss}
      \Vertex[x=-5, y=-0.75, Lpos=-90, Math]{q_1}
      \Vertex[x=-4, y=-0.75, Lpos=-90, Math]{q_2}
      \Vertex[x=-3, y=-0.75, Lpos=-90, Math]{q_3}

      \draw (-6.85, -2) rectangle (12.35, 2);
      \draw[line width=2pt] (-2.25, 2) -- (-2.25, -2);
    \end{scope}

    \begin{scope}
      \SetVertexNoLabel

      \Vertex[x=0, y=0, Lpos=-90, Math, Ldist=5pt, L={\alpha_1}]{c}

      \Vertex[x=1.5, y=1.5, Lpos=90, Math, L={\ol{t}_1}]{tm_1}
      \Vertex[x=-1.5, y=1.5, Lpos=90, Math]{t_1}

      \Vertex[x=1.5, y=-1.5, Lpos=-90, Math, L={\ol{s}_1}]{sm_1}
      \Vertex[x=-1.5, y=-1.5, Lpos=-90, Math]{s_1}

      \Vertex[x=-1.5, y=0.75, Lpos=180, Math, L={x^c_1}]{x_13}
      \Vertex[x=-1.5, y=0, Lpos=180, Math, L={x^b_1}]{x_12}
      \Vertex[x=-1.5, y=-0.75, Lpos=180, Math, L={x^a_1}]{x_11}

      \Vertex[x=1.5, y=0, Lpos=0, Math, L={\ol{x}^d_1}]{xm_11}

      \Edges[style={-Latex, bend left=15, opacity=0.2}](sm_1, c, tm_1)
      \Edges[style={-Latex, opacity=0.2}](sm_1, xm_11, tm_1)

      \Edges[style={-Latex, bend right=15, opacity=0.2}](s_1, c, t_1)
      \Edges[style={-Latex, opacity=0.2}](s_1, x_11)
      \Edges[style={-Latex, opacity=0.2}](x_13, t_1)
      \Edges[style={-Latex, opacity=0.2}](c,x_11)
      \Edges[style={-Latex, opacity=0.2}](c,x_12)

      \Edges[style={-Latex}](ss, q_1, q_2, q_3, x_11, x_12, x_13, c, xm_11)

      \AddVertexColor{white}{tm_1, t_1, sm_1, s_1, ss, q_1, q_2, q_3}
      \Edges[style={-Latex, dashed, bend right = 15}](x_13, q_1)
    \end{scope}

    \begin{scope}[xshift=4.5cm]
      \SetVertexNoLabel
      \Vertex[x=0, y=0, Lpos=-90, Math, Ldist=5pt, L={\alpha_2}]{c}

      \Vertex[x=1.5, y=1.5, Lpos=90, Math, L={\ol{t}_2}]{tm_2}
      \Vertex[x=-1.5, y=1.5, Lpos=90, Math]{t_2}

      \Vertex[x=1.5, y=-1.5, Lpos=-90, Math, L={\ol{s}_2}]{sm_2}
      \Vertex[x=-1.5, y=-1.5, Lpos=-90, Math]{s_2}

      \Vertex[x=-1.5, y=0.75, Lpos=180, Math, L={x^c_2}]{x_23}
      \Vertex[x=-1.5, y=0, Lpos=180, Math, L={x^b_2}]{x_22}
      \Vertex[x=-1.5, y=-0.75, Lpos=180, Math, L={x^a_2}]{x_21}

      \Vertex[x=1.5, y=0, Lpos=0, Math, L={\ol{x}^d_2}]{xm_21}

      \Edges[style={-Latex, bend left=15, opacity=0.2}](sm_2, c, tm_2)
      \Edges[style={-Latex, opacity=0.2}](sm_2, xm_21, tm_2)

      \Edges[style={-Latex, bend right=15, opacity=0.2}](s_2, c, t_2)
      \Edges[style={-Latex, opacity=0.2}](s_2, x_21)
      \Edges[style={-Latex, opacity=0.2}](x_23, t_2)
      \Edges[style={-Latex, opacity=0.2}](c,x_21)
      \Edges[style={-Latex, opacity=0.2}](c,x_22)

      \Edges[style={-Latex}](xm_11, x_21, x_22, x_23, c, xm_21)

      \AddVertexColor{white}{tm_2, t_2, sm_2, s_2}
    \end{scope}

    \begin{scope}[xshift=9cm]
      \Vertex[x=2.5, y=-0.75, Lpos=0, Math, LabelOut, L={t^*}]{ts}
      \Vertex[x=2.5, y=1.5, Lpos=0, Math, L={p^*}]{ps}

      \SetVertexNoLabel
      \Vertex[x=0, y=0, Lpos=-90, Math, Ldist=5pt, L={\alpha_3}]{c}

      \Vertex[x=1.5, y=1.5, Lpos=90, Math, L={\ol{t}_3}]{tm_3}
      \Vertex[x=-1.5, y=1.5, Lpos=90, Math]{t_3}

      \Vertex[x=1.5, y=-1.5, Lpos=-90, Math, L={\ol{s}_3}]{sm_3}
      \Vertex[x=-1.5, y=-1.5, Lpos=-90, Math]{s_3}

      \Vertex[x=-1.5, y=0.75, Lpos=180, Math, L={x^c_3}]{x_33}
      \Vertex[x=-1.5, y=0, Lpos=180, Math, L={x^b_3}]{x_32}
      \Vertex[x=-1.5, y=-0.75, Lpos=180, Math, L={x^a_3}]{x_31}

      \Vertex[x=1.5, y=0, Lpos=0, Math, L={\ol{x}^d_3}]{xm_31}

      \Edges[style={-Latex, bend left=15, opacity=0.2}](sm_3, c, tm_3)
      \Edges[style={-Latex, opacity=0.2}](sm_3, xm_31, tm_3)

      \Edges[style={-Latex, bend right=15, opacity=0.2}](s_3, c, t_3)
      \Edges[style={-Latex, opacity=0.2}](s_3, x_31)
      \Edges[style={-Latex, opacity=0.2}](x_33, t_3)
      \Edges[style={-Latex, opacity=0.2}](c,x_31)
      \Edges[style={-Latex, opacity=0.2}](c,x_32)

      \Edges[style={-Latex}](xm_21, x_31, x_32, x_33, c, xm_31, ts)

      \Edges[style={-Latex, dotted}](ts, ps)

      \AddVertexColor{white}{tm_3, t_3, sm_3, s_3, ts, ps}
    \end{scope}

  \end{tikzpicture}
  \caption{Queuing of the butterfly gadgets performed for the construction of $G$ with
    the extended critical path on the left of the figure. Each of the two smaller
    rectangles are cliques in the host graph. Within them, missing arcs are from
    top-to-bottom, right-to-left. Across the thick vertical line, the only arcs are from vertices
    of the form $\ell^a_i \in B_i$ to $q_a$, which exists if and only if $\ell_i \in
    C_a$; these are shown by the dashed arc from the first butterfly to $q_1$. The dotted
    arc is not considered part of the critical path. Arc $(t^*, s^*) \in E(G)$ is omitted
    for simplicity.
  \label{fig:queued_butterfly2}}
\end{figure}
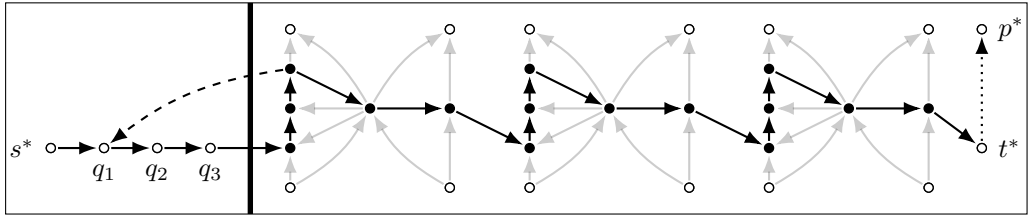

\begin{theorem}
  \label{thm:c2_hardness}
  \pname{$(k,c)$-DDP} is \NP-complete when restricted to graphs of
  $\mathcal{C}_2$.
\end{theorem}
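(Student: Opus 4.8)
The plan is to prove, for the instance $(G,K,\congestion)$ with $\congestion=|\mathcal{C}|$ constructed above, that $(X,\mathcal{C})$ is satisfiable if and only if $(G,K,\congestion)$ admits a solution, and then to lift this to an arbitrary congestion ratio exactly as in \autoref{thm:restricted_ddp_nph} and \autoref{thm:epsilon_ddp}. The genuinely new point, which I would check first, is that $G\in\mathcal{C}_2$: its vertex set splits into the clique $\{s^{*},q_1,\dots,q_m\}$ (the subdivision vertices $q_a$ are pairwise adjacent through the subdivided arc and the arcs of step (iii), and $s^{*}$ is joined to the remaining $q_a$'s by arcs whose orientation is irrelevant) and the clique formed by all the butterfly vertices together with $p^{*}$ and $t^{*}$ (the butterfly gadgets are already semicomplete among themselves, $p^{*}$ is complete to the butterflies, and the remaining pairs involving $t^{*}$ are filled in arbitrarily). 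Every arc running between the two cliques -- the head $q_m\to v$ of the critical path, the arc $(t^{*},s^{*})$, and the arcs $(\ell_i^{a},q_a)$ -- may be oriented as needed, so this two-clique partition is legitimate.

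For the ``if'' direction, from a satisfying assignment $\pi$ I would route $(t^{*},s^{*})$ along the arc $(t^{*},s^{*})$, route each of the $\congestion-1$ requests $(s^{*},t^{*})$ along a separate copy of the critical path of $G$, route the two requests inside every butterfly $B_i$ according to $\pi(x_i)$ exactly as in \autoref{lem:forward_tournament_nph}, and, for each clause $C_a$, pick a literal $\ell_i$ satisfying $C_a$ and route $(p^{*},q_a)$ along $\angled{p^{*},\ell_i^{a},q_a}$. All these arcs exist by construction, and the congestion bookkeeping is routine: each interior wing vertex and each $q_a$ lies on the critical path and is used $\congestion-1$ times by the $(s^{*},t^{*})$ paths; each $q_a$ receives exactly one further use (from $(p^{*},q_a)$) and each wing vertex at most one further use -- from its butterfly path or from a clause path but never both, since $\pi$ leaves free precisely the wing whose literal is chosen, which is the same interplay already exploited in \autoref{lem:forward_tournament_nph}. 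Hence no vertex exceeds congestion $\congestion$.

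The ``only if'' direction is where the real work lies. Given a solution, I would first show that the $\congestion-1$ requests $(s^{*},t^{*})$ are forced onto the critical path: the unique out-neighbour of $s^{*}$ is $q_1$, whose budget is $\congestion$, one unit of which is consumed by the path realizing $(p^{*},q_1)$, so all $\congestion-1$ of the $(s^{*},t^{*})$ paths must pass through $q_1$ and saturate it, and an induction along the subdivided arc and then through the queued butterflies -- in the spirit of the analysis of the \cite{FominP19} counterexamples in \autoref{sec:counterexample} and of the proof of \autoref{obs:busy_wing} -- forces each of them to follow the whole critical path. Consequently every interior vertex is used exactly $\congestion-1$ times and has residual budget exactly one, while the $q_a$'s and $p^{*}$ spend their remaining unit(s) only on their own requests. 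What is left is then, vertex for vertex, an instance of uncongested \pname{$k$-DDP} on the digraph obtained from the \autoref{thm:tournament_nph} instance by adding the arcs $(\ell_i^{a},q_a)$: every path realizing $(p^{*},q_a)$ must enter $q_a$ through some wing vertex $\ell_i^{a}$ that still has residual budget, because the other in-neighbours of $q_a$ are terminals already at capacity, so \autoref{obs:busy_wing} applies verbatim and, reading off $\pi$ from which wing of each $B_i$ is occupied exactly as in \autoref{lem:backward_tournament_nph}, the freeness of $\ell_i^{a}$ forces $\ell_i$ to satisfy $C_a$ under $\pi$; this produces a satisfying assignment.

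Finally, to reach an arbitrary congestion ratio, I would proceed as in \autoref{thm:restricted_ddp_nph} and \autoref{thm:epsilon_ddp}: pad the \pname{(3,1)-3-SAT} instance with trivially satisfiable blocks so that the required divisibility holds, and replace the $|\mathcal{C}|-1$ critical-path copies by $\congestion-1$ copies for the value of $\congestion$ making it the desired fraction of $|K|$; membership in \NP is the same trivial check as in \autoref{thm:tournament_nph}. The hard part, as indicated, is the forcing step of the backward direction: unlike in \pname{Restricted $(k,c)$-DDP}, here the vertices $q_a$ and the butterfly terminals may legitimately appear as interior vertices of other paths, so one has to argue via the congestion budgets both that the critical path is traversed in full $\congestion-1$ times and that no clause path can ``cheat'' by detouring through the butterfly region in a way that alters which wings are free; showing that exactly one unit of residual budget survives everywhere is precisely what lets the uncongested analysis behind \autoref{thm:tournament_nph} be reused unchanged.
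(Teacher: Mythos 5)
Your proposal follows essentially the same route as the paper's proof: verify the two-clique partition, force the $\congestion-1$ copies of $(s^*,t^*)$ onto the critical path to conclude that exactly one unit of residual budget remains on every interior vertex, and then reuse the uncongested analysis of \autoref{obs:busy_wing} and \autoref{lem:backward_tournament_nph}. You are more explicit than the paper about verifying that $G\in\mathcal{C}_2$ and about the budget bookkeeping, which is an improvement over the paper's terse write-up.

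Both you and the paper, however, treat the critical-path forcing step too lightly. The paper states that ``the critical path remains the unique path from $s^*$ to $t^*$'', and you likewise invoke ``an induction'' in the spirit of \autoref{obs:busy_wing}. But \autoref{obs:busy_wing} crucially removes all terminals before arguing uniqueness, and in the unrestricted congested setting terminals may carry interior traffic. Concretely, an $s^*$-$t^*$ path can detour $x^b_i \to s_i \to \alpha_i \to \ol{x}^d_i$, skipping $x^c_i$: all three arcs exist in $G$, and $s_i$ has residual budget $\congestion-1$ since it is the endpoint of only one request. Each such detour raises the residual budget on $x^c_i$ by one, so the left-wing path realising $(s_i,t_i)$ and a clause path $(p^*,q_c)$ could both pass through $x^c_i$, ``cheating'' clause $C_c$ while your read-off assignment has $\pi(x_i)=0$. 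You correctly flag that one must ``argue via the congestion budgets'' that the critical path is traversed in full and that no clause path can cheat, but this is precisely the step that does not follow as written. Note that \autoref{thm:epsilon_ddp} avoids the same pitfall by padding the request set until every terminal is saturated; the $\mathcal{C}_2$ construction has no such padding, so the backward direction needs either that modification or a dedicated exchange argument showing the detours can always be eliminated.
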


\begin{proof}
  The forward direction should follow immediately from the construction and the arguments
  used in the proof of \autoref{lem:forward_tournament_nph}, with the additional detail
  that the critical path is used to satisfy all $\congestion-1$ requests from $s^*$ to $t^*$.
  For the converse, note that the critical path remains the unique path from $s^*$ to
  $t^*$, and so every vertex of \autoref{fig:queued_butterfly2} incident to a
  continuous black edge must always be occupied by at least $\congestion-1$ paths.
  Moreover, no other path may use $q_a$ as an intermediate vertex since $q_a$ is also an
  endpoint of the request $(p^*, q_a) \in K$.
  As such, we may assume that every path satisfying a request originating from $p^*$ is
  of the form $\angled{p^*, \ell^a_i, q_a}$, as the only way to reach a $q_a$ is through
  some $\ell^a_i$ and $\ell^a_i$ is an out-neighbor of $p^*$.
  Consequently, we have our
  assignment $\pi$ for $(X, \mathcal{C})$: set $\pi(x_i) = 1$ if and only if $\{s_i,
  \ol{x}^d_i, t_i\}$ is the path satisfying $(s_i, t_i) \in K$.
\end{proof}

It is also possible to obtain similar proofs when setting $\congestion < |\mathcal{C}|$.
The proof follows the same steps as the one of \autoref{thm:c2_hardness}, but
requires splitting $p^*$ into sufficiently many vertices to accommodate the fewer
requests that can touch $p^*$. For example, if we want $\congestion = |\mathcal{C}|/2$
and this is an integer value, then we replace $p^*$ with $p^*_1, p^*_2$ and evenly divide
the requests of the form $(p^*, q_a)$ between them. We omit the more technical details for brevity.

\begin{theorem}
  \label{thm:c2_hardness_cte_ratio}
  \pname{$(k,c)$-DDP} is \NP-complete when restricted to graphs of
  $\mathcal{C}_2$ for every constant congestion value $\congestion \geq 1$.
\end{theorem}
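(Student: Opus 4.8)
The plan is to recycle the construction of \autoref{thm:c2_hardness}, whose congestion equals $|\mathcal{C}|$, and to ``spread out'' the clause requests so that the congestion drops to the prescribed constant $\congestion$. The case $\congestion = 1$ needs no work, since $(k,1)$-\pname{DDP} is exactly \pname{$k$-DDP} and every tournament lies in $\mathcal{C}_1 \subseteq \mathcal{C}_2$, so \NP-hardness is inherited from \autoref{thm:tournament_nph}; from now on fix a constant $\congestion \geq 2$ and reduce from \pname{(3,1)-3-SAT}.

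First I would take the digraph $G$ and request set $K$ built in the proof of \autoref{thm:c2_hardness} and make two changes. The first is to keep only $\congestion - 1$ of the requests $(s^*, t^*)$ instead of $|\mathcal{C}| - 1$ (the request $(t^*, s^*)$ and the arc $(t^*, s^*)$ stay). The second is to replace the single clause-source $p^*$ and its $|\mathcal{C}|$ requests $(p^*, q_a)$ by $|\mathcal{C}|$ fresh vertices $p^*_1, \dots, p^*_{|\mathcal{C}|}$, one for each clause $C_a$: add the arc $(t^*, p^*_a)$, make each $p^*_a$ complete to every butterfly-gadget vertex, orient the arcs among the $p^*_a$'s arbitrarily, and add a single request $(p^*_a, q_a)$ for each $a$. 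All the $p^*_a$'s can be absorbed into the clique of $G$ that already contains the butterfly gadgets and $t^*$, so the digraph stays in $\mathcal{C}_2$; it has $\Ocal(|X| + |\mathcal{C}|)$ vertices and, since $\congestion$ is a constant, $\Ocal(|X| + |\mathcal{C}|)$ requests, so the reduction is polynomial.

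For correctness I would argue, exactly as in \autoref{thm:c2_hardness}, that the critical path is the only way to route the $\congestion - 1$ requests $(s^*, t^*)$, so in any solution it is used $\congestion - 1$ times and the request $(t^*, s^*)$ is pushed onto the arc $(t^*, s^*)$; this saturates every internal wing vertex, every center $\alpha_i$, and every $q_a$ to congestion $\congestion - 1$, leaving exactly one free congestion unit on each $q_a$. Then, as in \autoref{thm:c2_hardness}, the request $(p^*_a, q_a)$ must be realized by a path $\angled{p^*_a, \ell^a_i, q_a}$ with $\ell_i \in C_a$, which is possible precisely when the relevant wing of $B_i$ is left unused by the variable gadget, i.e.\ (by \autoref{obs:busy_wing}) when $\ell_i$ is set to true; reading off these literals yields a satisfying assignment. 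Conversely, from a satisfying assignment one builds a solution from the $\congestion - 1$ copies of the critical path, the arc $(t^*, s^*)$, the butterfly wing/center paths as in \autoref{lem:forward_tournament_nph}, and one path $\angled{p^*_a, \ell^a_i, q_a}$ per clause for a true literal; a direct check shows the congestion is $\congestion$. Membership in \NP is routine.

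The step I expect to be the main obstacle is ruling out unintended routings that abuse the single leftover congestion unit on the $q_a$'s --- for instance a path $\angled{p^*_a, \ell^b_i, q_b, \dots, q_a}$ that slides along a segment of the critical path through several vertices $q_b$, or the request $(t^*, s^*)$ detouring as $\angled{t^*, p^*_a, \dots, q_b, \dots, s^*}$. As in \autoref{thm:c2_hardness}, the way around this is a counting argument: each $q_a$ is itself a request endpoint, there are exactly $|\mathcal{C}|$ of them, and each carries a single free congestion unit, so any such detour would fully occupy some $q_b$ and leave its own request $(p^*_b, q_b)$ unroutable --- a contradiction. Once this is established, the reduction works for every fixed $\congestion \geq 1$, which is the claim.
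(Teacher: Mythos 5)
Your proposal is correct and takes essentially the same approach the paper sketches in the paragraph preceding the statement: reuse the $\mathcal{C}_2$ construction from \autoref{thm:c2_hardness}, split the clause source $p^*$ into several copies, and drop the number of $(s^*, t^*)$ requests to $\congestion-1$ so that the critical path carries exactly $\congestion-1$ units of congestion through every $q_a$, leaving precisely one unit for its own clause request. The only cosmetic difference is that you split $p^*$ all the way down to one copy per clause, whereas the paper's hint groups $\congestion$ clause requests per copy so that each copy is saturated; this is harmless, since $t^*$ --- the only in-neighbor of the $p^*_a$'s outside their own group --- is itself saturated by the $(s^*,t^*)$ and $(t^*,s^*)$ requests, so nothing can detour through an undersaturated $p^*_a$, and you also correctly dispose of the degenerate case $\congestion = 1$ (where the critical-path argument vanishes) by falling back to \autoref{thm:tournament_nph}.
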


We conclude our polynomial-time lower bounds with the following theorem.

\begin{theorem}
  \label{thm:c2_hardness_all_ratio}
  \pname{$(k,c)$-DDP} is \NP-complete when restricted to graphs of
  $\mathcal{C}_2$ even if the congestion satisfies $\congestion = |K|/d$ for $d > 1$.
\end{theorem}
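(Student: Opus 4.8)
The plan is to reuse the construction from the proof of \autoref{thm:c2_hardness} essentially verbatim, tuning only the congestion $\congestion$ and the number of padding requests on the critical path so that the identity $\congestion = |K|/d$ holds exactly. Recall that there the request set consists of the $2n$ butterfly requests $\{(s_i,t_i),(\ol{s}_i,\ol{t}_i) \mid x_i \in X\}$, the single request $(t^*,s^*)$, the $m=|\mathcal{C}|$ clause requests $(p^*,q_a)$, and $\congestion-1$ copies of $(s^*,t^*)$; hence $|K| = \congestion + (2n+m)$ for whatever value of $\congestion$ we choose. Solving $\congestion = |K|/d$ yields $\congestion = (2n+m)/(d-1)$. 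So the first step, given the \pname{(3,1)-3-SAT} instance $(X,\mathcal{C})$, is to pad it with a satisfiable sub-formula (which, since $d>1$ is fixed, only enlarges $n+m$ by a bounded amount) so that $d-1$ divides $2n+m$, and then to set $\congestion := (2n+m)/(d-1)$. Because $d>1$, this gives $1 \le \congestion \le |K|-1$ for all sufficiently large instances, the finitely many smaller ones being decidable by brute force, and $\congestion$ is linear in the size of $(X,\mathcal{C})$, so the reduction runs in polynomial time and produces an instance of polynomial size.

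With this value of $\congestion$ fixed, I would carry out the construction of \autoref{thm:c2_hardness} without change: the queued butterflies, the critical path extended through the subdivision vertices $q_1,\dots,q_m$, the vertex $p^*$ complete to all butterfly vertices and with $t^*$ as its unique in-neighbor, the clause-encoding arcs $\ell^a_i \to q_a$, the arc and request $(t^*,s^*)$, the $\congestion-1$ requests $(s^*,t^*)$, and the $m$ requests $(p^*,q_a)$. The one genuine adjustment is that $\congestion$ is now not guaranteed to be at least $m$, so, exactly as in the remark preceding \autoref{thm:c2_hardness_cte_ratio}, if $\congestion < m$ I would replace $p^*$ by $\lceil m/\congestion\rceil$ copies, each complete to all butterfly vertices and reached only from $t^*$ (with arbitrary arcs among the copies), and distribute the $m$ requests $(p^*,q_a)$ among them so that at most $\congestion$ of them touch any single copy. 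This keeps $G$ in $\mathcal{C}_2$: one clique is $\{s^*,q_1,\dots,q_m\}$, the other consists of all butterfly vertices together with $t^*$ and the copies of $p^*$, and in each part the missing arcs are filled in arbitrarily. Crucially, this modification does not change $|K|$, so $|K| = \congestion + (2n+m) = \congestion d$, that is $\congestion = |K|/d$, is preserved.

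Correctness then follows by invoking the analysis of \autoref{thm:c2_hardness} (and its $\congestion<m$ variant) unchanged: the critical path is still the unique path from $s^*$ to $t^*$, so in any solution each of its internal vertices carries the $\congestion-1$ copies of $(s^*,t^*)$; hence every $q_a$ is used by those $\congestion-1$ paths plus exactly one path $\angled{p^*_j,\ell^a_i,q_a}$, which forces the clause semantics, while the single unit of congestion left free on the left-wing vertices of each butterfly forces, via the analogue of \autoref{obs:busy_wing}, the truth-assignment semantics. Thus $(G,K,\congestion)$ is a yes-instance precisely when $(X,\mathcal{C})$ is satisfiable, and membership in \NP is immediate. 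The main point requiring care, and the only real obstacle, is the interplay between $d$ and the structural constraints: for $d$ close to $1$ the congestion $\congestion$ (and with it the number of parallel critical paths) is large but harmless, whereas for larger $d$ we land in the regime $\congestion < m$ and must rely on the $p^*$-splitting to avoid over-congesting $p^*$; checking that both regimes are covered, that the splitting preserves both the $\mathcal{C}_2$ partition and all the path-uniqueness arguments, and that the divisibility padding never interferes with satisfiability, is where the (routine) bookkeeping lies.
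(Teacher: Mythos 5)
Your proposal is correct and follows essentially the same strategy as the paper: reuse the $\mathcal{C}_2$ construction of \autoref{thm:c2_hardness}, choose $\congestion$ so that $|K| = 2n+m+\congestion = d\congestion$ (i.e.\ $\congestion = (2n+m)/(d-1)$), pad the \pname{(3,1)-3-SAT} instance for divisibility, and split $p^*$ when $\congestion < m$ as in the remark preceding \autoref{thm:c2_hardness_cte_ratio}. The paper instead keeps the SAT instance fixed and adds $z=\congestion-m$ extra requests $(s^*,p^*)$ that force the critical path, which is just a different way of doing the same bookkeeping (it only treats $\congestion>m$ explicitly and leaves the smaller-$\congestion$ regime to the earlier remark), so if anything your write-up is slightly more explicit about both regimes.
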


\begin{proof}
  Observe that the construction of \autoref{thm:c2_hardness} gives us $|K| = 2n + m +
  \congestion = 2n + 2m$, where $n = |X|$ and $m = |\mathcal{C}|$. To obtain
  $\congestion$ as a fraction of $|K|$, it suffices to add, for each unit of congestion
  that exceeds $m$, one request
  from $s^*$ to $p^*$. This request must be satisfied by picking the entire critical path
  since $t^*$ is the unique in-neighbor of $p^*$.
  As such, $|K| = 2n + 2m + z$, and we may pick $z = \congestion - m$ so that $|K|/\congestion$ is
  the desired ratio. For example, if we wish for $|K|/\congestion = 2$, then we must pick
  $z = 2n$, since $|K| = 2n + m + \congestion$ this means that $\congestion = 2n + m = |K|/2$.
\end{proof}

\subsection{$\W[1]$-hardness on $k$ + $h$ on graphs of bounded directed pathwidth}
While our previous lower bounds were all in the polynomial-time world, a natural question
is what happens if the numbers $k = |K|$ of requests and $h$ of partitioning cliques are
now taken as parameters.
In \cite{chudnovsky_union_of_tournaments}, Chudnosvky, Scott, and Seymour showed a
$|V(G)|^{\bigO{(hk)^5}}$-time algorithm for \pname{Directed Disjoint Path} on
$\mathcal{C}_h$.
We show that their \XP\ algorithm cannot be significantly improved by proving that \pname{Directed
Congested Disjoint Paths} remains $\W[1]$-hard when parameterized by $k + h$ on graphs of
bounded directed pathwidth.
Our reduction is heavily inspired by, but requires significant increments to, the proof
of Slivkins~\cite{Slivkins2010} of the
$\W[1]$-hardness of \pname{Edge Directed Disjoint Paths} on DAGs, where the source
problem was \pname{Clique} parameterized by the solution size; to simplify a bit our
arguments, we opt to reduce from \pname{Multicolored Clique} parameterized by the solution size.
As in Section~\ref{sec:tournament_nph}, we first deal with the completely disjoint
version of the problem and then show how to extend to the congested case.

\noindent\textbf{Construction.} Let $(G, \mathcal{V})$ be the input instance to
\pname{Multicolored Clique} where $\mathcal{V} = \{V_1, \dots, V_q\}$ are the color
classes and $(D, K)$ denotes the \pname{Directed Disjoint Paths} instance we are going to build.
For simplicity, assume that every $V_i \in \mathcal{V}$ has the same size $n$.
The vertices of $D$ are partitioned in a $q \times (2 + n)$ matrix-like
fashion, with $D_{i,j} \subseteq V(D)$ corresponding to the $j$-th vertex of $V_i \in
\mathcal{V}$ and $D_{i,0}, D_{i, n+1}$ with no correspondence.
For every $i \in [q]$, let us define and add the following sets of vertices to $G$, where
each $D_{i,j}$ is a set of $2q + 3$ vertices that will be specified later:

\begin{align*}
  D_i = &\{\alpha_{i,s}, \alpha_{i,t}, a_{i,s}, a_{i,t}, b_{i,s},  b_{i,t}\}\\
  &\cup \{g^1_{i, s}, g^1_{i, t}, g^2_{i, s}, g^2_{i, t}\}\\
  &\cup \bigcup_{j \in \{0\} \cup [n+1]} D_{i,j}.
\end{align*}

For this proof, we will stick to the convention that vertices with an $s$ subscript or
$t$ subscript are sources or targets of a request of $K$, respectively.
In particular, $K$ now contains the pairs $(\alpha_{i,s}, \alpha_{i,t})$, $(a_{i,s}, a_{i,t})$,
$(b_{i,s}, b_{i,t})$, $(g^1_{i, s}, g^1_{i, t})$, and $(g^2_{i, s}, g^2_{i, t})$ for
every $i \in [q]$.
Each $D_{i,j}$ is further broken down into four tournaments: $D_{i,j}^\alpha, D_{i,j}^a,
D_{i,j}^\beta, D_{i,j}^b$.
We set $D_{i,j}^\alpha = \{\alpha_{i,j}\}$, $D_{i,j}^\beta = \{\beta^1_{i,j},
\beta^2_{i,j}\}$, and, for $x \in \{a,b\}$, we set $D_{i,j}^x = \{x_{i,j}^0, x_{i,j}^1,
\dots, x_{i,j}^q\}$; intuitively, $x^\ell_{i,j}$ is the only vertex that will connect
$D_{i,j}$ to vertices of $D_\ell$, while $x^0_{i,j}$ is used to synchronize the internal
behavior of $D_i$.
Let us divide our construction in steps:
\begin{enumerate}[(i)]
  \item The arcs of $D_{i,j}$ including $D_{i,0}$ and $D_{i,n}$ are built as follows:
    each $D_{i,j}^x$, $x \in \{\alpha, a, \beta, b\}$ has exactly one Hamiltonian path --
    known as the \textit{$x$-path} of
    $D_{i,j}$ -- and all other arcs are in the opposite direction; i.e., if
    $\angled{x^0_{i,j}, \dots, x^\ell_{i,j}}$ is the Hamiltonian path of
    $D_{i,j}^x$, then $(x^z_{i,j}, x^y_{i,j}) \in E(D)$ for every non-consecutive pair
    $z,y$ with $z > y$; w.l.o.g. we assume the visitation order of the
    Hamiltonian path is the same as the one we specified in the definition of
    $D_{i,j}^x$. The only other arcs incident to two vertices of $D_{i,j}$ are a perfect
    matching from $D_{i,j}^a$ to $D_{i,j}^b$, such that $a^\ell_{i,j}$ is matched to
    $b^\ell_{i,j}$ for $\ell \in [n]$.
    We present an example in \autoref{fig:slivkins_s1}.

    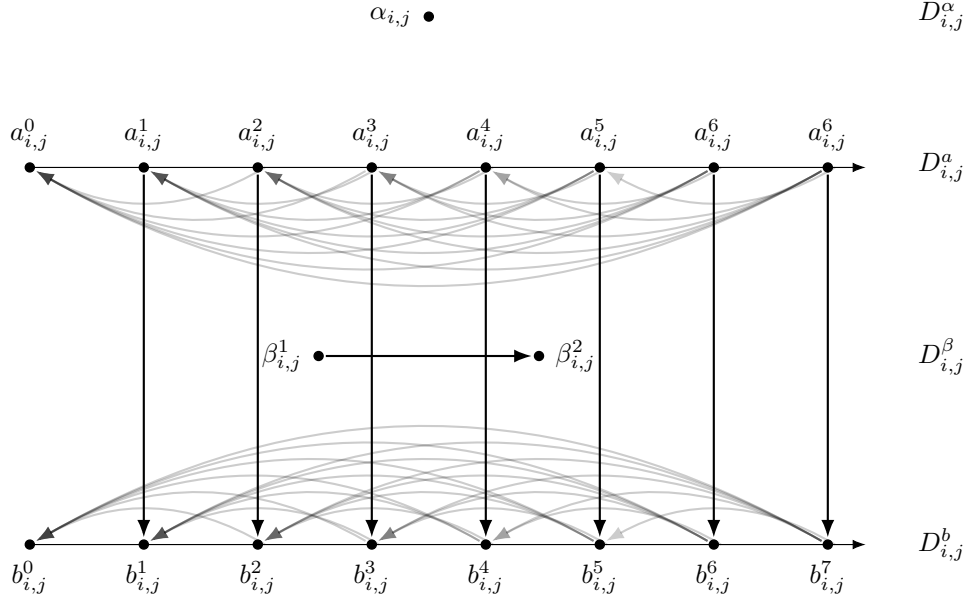
\begin{figure}[!htb]
      \centering
      \begin{tikzpicture}[scale=1]
        \GraphInit[unit=3,vstyle=Normal]
        \SetVertexNormal[Shape=circle, FillColor=black, MinSize=2pt]
        \tikzset{VertexStyle/.append style = {inner sep = \inners, outer sep = \outers}}
        \SetVertexLabelOut

        \Vertex[x=5.25, y=2, Math, Lpos=180, L={\alpha_{i,j}}]{alpha}
        \node at (12, 2) {$D^\alpha_{i,j}$};

        \begin{scope}
          \draw[line width=0.15mm, -Latex] (0,0) -- (11, 0);
          \Vertex[x=0, y=0, Math, Lpos=90, L={a^0_{i,j}}]{a0}
          \Vertex[x=1.5, y=0, Math, Lpos=90, L={a^1_{i,j}}]{a1}
          \Vertex[x=3, y=0, Math, Lpos=90, L={a^2_{i,j}}]{a2}
          \Vertex[x=4.5, y=0, Math, Lpos=90, L={a^3_{i,j}}]{a3}
          \Vertex[x=6, y=0, Math, Lpos=90, L={a^4_{i,j}}]{a4}
          \Vertex[x=7.5, y=0, Math, Lpos=90, L={a^5_{i,j}}]{a5}
          \Vertex[x=9, y=0, Math, Lpos=90, L={a^6_{i,j}}]{a6}
          \Vertex[x=10.5, y=0, Math, Lpos=90, L={a^6_{i,j}}]{a7}
          \foreach \i in {0,...,5} {
            \pgfmathsetmacro{\b}{\i+2}
            \foreach \j in {\b,...,7} {
              \Edges[style={-Latex, bend left, opacity=0.2}](a\j , a\i);
            }
          }
          \node at (12, 0) {$D^a_{i,j}$};
        \end{scope}

        \Vertex[x=3.80, y=-2.5, Math, Lpos=180, L={\beta^1_{i,j}}]{beta1}
        \Vertex[x=6.70, y=-2.5, Math, Lpos=0, L={\beta^2_{i,j}}]{beta2}
        \Edges[style={-Latex}](beta1,beta2);
        \node at (12, -2.5) {$D^\beta_{i,j}$};

        \begin{scope}[yshift=-5cm]
          \draw[line width=0.15mm, -Latex] (0,0) -- (11, 0);
          \Vertex[x=0, y=0, Math, Lpos=-90, L={b^0_{i,j}}]{b0}
          \Vertex[x=1.5, y=0, Math, Lpos=-90, L={b^1_{i,j}}]{b1}
          \Vertex[x=3 , y=0, Math, Lpos=-90, L={b^2_{i,j}}]{b2}
          \Vertex[x=4.5, y=0, Math, Lpos=-90, L={b^3_{i,j}}]{b3}
          \Vertex[x=6 , y=0, Math, Lpos=-90, L={b^4_{i,j}}]{b4}
          \Vertex[x=7.5, y=0, Math, Lpos=-90, L={b^5_{i,j}}]{b5}
          \Vertex[x=9 , y=0, Math, Lpos=-90, L={b^6_{i,j}}]{b6}
          \Vertex[x=10.5, y=0, Math, Lpos=-90, L={b^7_{i,j}}]{b7}
          \foreach \i in {0,...,5} {
            \pgfmathsetmacro{\b}{\i+2}
            \foreach \j in {\b,...,7} {
              \Edges[style={-Latex, bend right, opacity=0.2}](b\j , b\i);
            }
          }

          \foreach \i in {1,...,7} {
            \Edges[style={-Latex}](a\i , b\i);
          }
          \node at (12, 0) {$D^b_{i,j}$};
        \end{scope}

      \end{tikzpicture}
      \caption{Gadget $D_{i,j}$ and the arcs built in Step (i) of the reduction. Grey
        arcs are only used to force that their incident vertices form a tournament.
        Horizontal arrows denote the three $x$-paths of $D_{i,j}$ with at least one edge,
        i.e., $x \in \{a, b, \beta\}$.
      \label{fig:slivkins_s1}}
    \end{figure}

    Intuitively, the $a$- and $b$-paths will be used in the encoding of the
    \pname{Multicolored Clique} instance, while the $\alpha$- and $\beta$-paths are
    used to synchronize the decisions performed in the $a$- and $b$-paths.
  \item The next step in the construction is the connection of the different elements of
    $D_i$. Our goal is to add arcs so that exactly one index $j \in [n]$ has both
    $D^a_{i,j}$ and $D^b_{i,j}$ unoccupied by paths satisfying requests internal to
    gadget $D_i$, while all other vertices of $D^a_i$ and $D_b^i$ are occupied; that is,
    exactly one vertex of each color class, corresponding to that index $j$, may be
    picked in the \pname{Multicolored Clique} instance. This can be accomplished as follows:
    \begin{enumerate}
      \item For each $j \in [n+1]$ and $x \in \{\alpha, a, \beta, b\}$, take the last
        vertex of the $x$-path of $D_{i,j-1}^x$
        and add an arc to the first vertex of the $x$-path of $D_{i,j}^x$, with all other
        arcs going from $D_{i,j}^x$ to $D_{i,j-1}^x$. Doing this, $D_i^x = D_{i,0} \cup
        \bigcup_{j \in [n]} D_{i,j}$ is a tournament with a unique Hamiltonian path
        obtained by stitching the $x$-paths of $D_{i,j}^x$'s to each other.
      \item Add the arcs $(\alpha_{i,s}, \alpha_{i,0})$, $(a_{i,s}, a^0_{i,0})$,
        $(b_{i,s}, b^0_{i,0})$, the arcs $(a^q_{i,n+1}, \alpha_{i,t})$, $(\beta^2_{i,n+1},
        b_{i,t})$, $(b^q_{i,n+1}, a_{i,t})$, and all remaining right-to-left arcs so that
        the vertices in each long horizontal arrow in \autoref{fig:slivkins_s2} form a
        tournament; e.g. $a_{i,s}$ has $D_i^a \setminus \{a^0_{i,0}\}$ as its
        in-neighbors while $\alpha_{i,t}$ has $D_i^a \setminus \{a^q_{i, n+1}\}$ as
        its out-neighbors.
      \item For each $j \in \{0\} \cup [n-1]$, we add the \textit{forward jumping arcs}
        $(\alpha_{i,j}, a^0_{i,
        j+2})$, $(a^q_{i,j}, b^0_{i,j+2})$, and $(b^q_{i,j}, \beta^1_{i,j})$.
      \item For each $j \in [n]$, we add the \textit{backward jumping arcs}
        $(\alpha_{i,j}, \beta^2_{i,j-1})$ and $(a^0_{i,j}, \beta^1_{i,j-1})$
      \item Finally, for each $j \in \{0\} \cup [n]$, we add every arc from
        $g^1_{i,s}$ to $D^a_{i}$, from $g^2_{i,s}$ to $D^\alpha_{i}$, from
        $D^\beta_i$ to $g^1_{i,t}$ and $g^2_{i,t}$, and between the latter two arbitrarily.
        Consequently, $\{g^1_{i,s}\} \cup D^a_{i}$, $\{g^2_{i,s}\} \cup D^\alpha_{i}$,
        and $\{g^1_{i,t}, g^2_{i,t}\} \cup D^\beta_{i}$ are tournaments.
    \end{enumerate}

    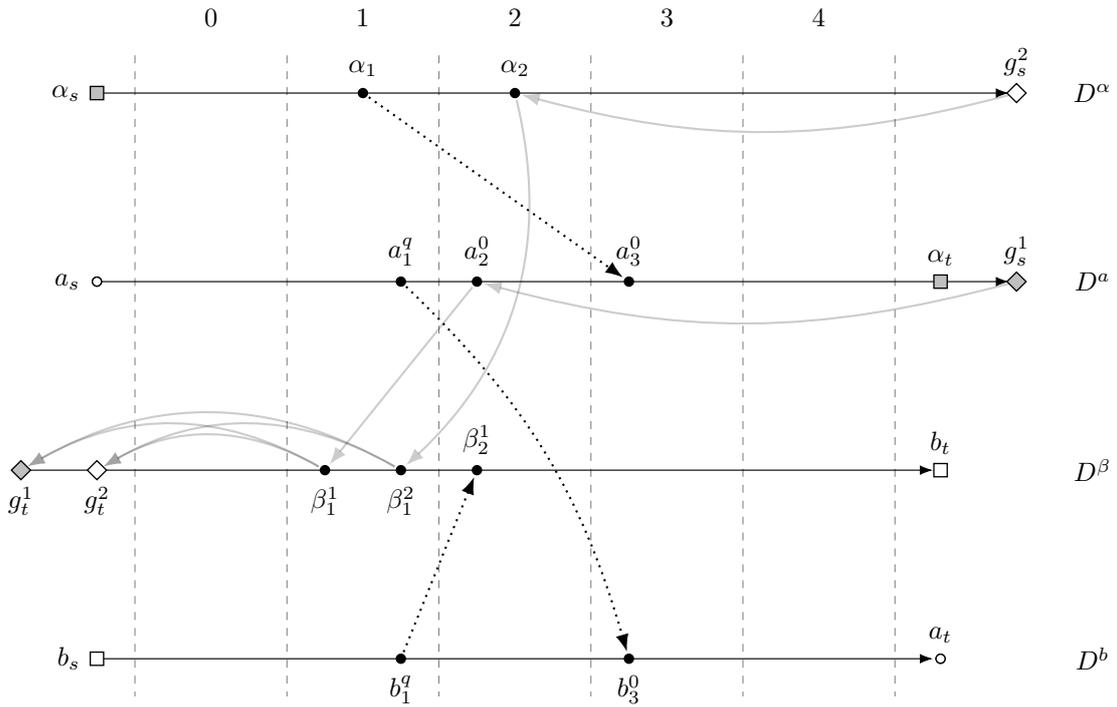
\begin{figure}[!htb]
      \centering
      \begin{tikzpicture}[scale=1]
        \GraphInit[unit=3,vstyle=Normal]
        \SetVertexNormal[Shape=circle, FillColor=black, MinSize=2pt]
        \tikzset{VertexStyle/.append style = {inner sep = \inners, outer sep = \outers}}
        \SetVertexLabelOut

        \begin{scope}[yshift=2.5cm]
          \draw[line width=0.15mm, -Latex] (0,0) -- (12, 0);
          \Vertex[x=0, y=0, Math, Lpos=180, L={\alpha_{s}}]{alphas}
          \begin{scope}[xshift=2.5cm]
            \Vertex[x=1, y=0, Math, Lpos=90, L={\alpha_{1}}]{alpha1}
          \end{scope}

          \begin{scope}[xshift=4.5cm]
            \Vertex[x=1, y=0, Math, Lpos=90, L={\alpha_{2}}]{alpha2}
          \end{scope}

          \Vertex[x=12.1, y=0, Math, Lpos=90, L={g^2_{s}}]{c2s}
          \Edges[style={opacity=0.2, -Latex, bend left=15}](c2s, alpha2)

          \node at (13.1, 0) {$D^\alpha$};

        \end{scope}

        \begin{scope}
          \draw[line width=0.15mm, -Latex] (0,0) -- (12, 0);
          \Vertex[x=0, y=0, Math, Lpos=180, L={a_{s}}]{as}
          \Vertex[x=11.1, y=0, Math, Lpos=90, L={\alpha_{t}}]{alphat}

          \begin{scope}[xshift=2.5cm]
            \Vertex[x=1.5, y=0, Math, Lpos=90, L={a^q_{1}}]{a1}
          \end{scope}

          \begin{scope}[xshift=4.5cm]
            \Vertex[x=0.5, y=0, Math, Lpos=90, L={a^0_{2}}]{a2}
          \end{scope}

          \begin{scope}[xshift=6.5cm]
            \Vertex[x=0.5, y=0, Math, Lpos=90, L={a^0_{3}}]{a3}
            \Edges[style={dotted, -Latex}](alpha1, a3)
          \end{scope}

          \Vertex[x=12.1, y=0, Math, Lpos=90, L={g^1_{s}}]{c1s}
          \Edges[style={opacity=0.2, -Latex, bend left=15}](c1s, a2)
          \node at (13.1, 0) {$D^a$};
        \end{scope}

        \begin{scope}[yshift=-2.5cm]
          \draw[line width=0.15mm, -Latex] (-1,0) -- (11, 0);
          \Vertex[x=11.1, y=0, Math, Lpos=90, L={b_{t}}]{bt}

          \begin{scope}[xshift=2.5cm]
            \Vertex[x=0.5, y=0, Math, Lpos=-90, L={\beta^1_{1}}]{beta11}
            \Vertex[x=1.5, y=0, Math, Lpos=-90, L={\beta^2_{1}}]{beta12}
            \Edges[style={opacity=0.2, -Latex}](a2, beta11)
            \Edges[style={opacity=0.2, -Latex, bend left}](alpha2, beta12)
          \end{scope}

          \begin{scope}[xshift=4.5cm]
            \Vertex[x=0.5, y=0, Math, Lpos=90, L={\beta^1_{2}}]{beta2}

          \end{scope}

          \Vertex[x=0, y=0, Math, Lpos=-90, L={g^2_{t}}]{c2t}
          \Vertex[x=-1, y=0, Math, Lpos=-90, L={g^1_{t}}]{c1t}
          \Edges[style={opacity=0.2, -Latex, bend right}](beta11, c1t)
          \Edges[style={opacity=0.2, -Latex, bend right}](beta12, c1t)
          \Edges[style={opacity=0.2, -Latex, bend right}](beta11, c2t)
          \Edges[style={opacity=0.2, -Latex, bend right}](beta12, c2t)

          \node at (13.1, 0) {$D^\beta$};
        \end{scope}

        \begin{scope}[yshift=-5cm]
          \draw[line width=0.15mm, -Latex] (0,0) -- (11, 0);
          \Vertex[x=0, y=0, Math, Lpos=180, L={b_{s}}]{bs}
          \Vertex[x=11.1, y=0, Math, Lpos=90, L={a_{t}}]{at}

          \begin{scope}[xshift=2.5cm]
            \Vertex[x=1.5, y=0, Math, Lpos=-90, L={b^q_{1}}]{b1}
            \Edges[style={dotted, -Latex}](b1, beta2)
          \end{scope}

          \begin{scope}[xshift=6.5cm]
            \Vertex[x=0.5, y=0, Math, Lpos=-90, L={b^0_{3}}]{b3}
            \Edges[style={dotted, -Latex, bend left=15}](a1, b3)
          \end{scope}

          \node at (13.1, 0) {$D^b$};
        \end{scope}

        \begin{scope}[xshift=0.5cm]
          \draw[dashed, opacity=0.5] (0, 3) -- (0, -5.5);
          \node at (1, 3.5) {0};
        \end{scope}

        \begin{scope}[xshift=2.5cm]
          \draw[dashed, opacity=0.5] (0, 3) -- (0, -5.5);
          \node at (1, 3.5) {1};
        \end{scope}

        \begin{scope}[xshift=4.5cm]
          \draw[dashed, opacity=0.5] (0, 3) -- (0, -5.5);
          \node at (1, 3.5) {2};
        \end{scope}

        \begin{scope}[xshift=6.5cm]
          \draw[dashed, opacity=0.5] (0, 3) -- (0, -5.5);
          \node at (1, 3.5) {3};
        \end{scope}

        \begin{scope}[xshift=8.5cm]
          \draw[dashed, opacity=0.5] (0, 3) -- (0, -5.5);
          \node at (1, 3.5) {4};
        \end{scope}

        \begin{scope}[xshift=10.5cm]
          \draw[dashed, opacity=0.5] (0, 3) -- (0, -5.5);
        \end{scope}

        \AddVertexColor{white}{as, at}
        \begin{scope}
          \tikzset{VertexStyle/.append style = {shape = rectangle, inner sep = 2.5pt}}
          \AddVertexColor{gray!50}{alphas, alphat}
          \AddVertexColor{white}{bs, bt}
        \end{scope}
        \begin{scope}
          \tikzset{VertexStyle/.append style = {shape = diamond, inner sep = 2pt}}
          \AddVertexColor{gray!50}{c1s, c1t}
          \AddVertexColor{white}{c2s, c2t}
        \end{scope}

      \end{tikzpicture}
      \caption{Gadget $D_i$ and the arcs built in Step (ii) of the reduction. Dotted arcs
        correspond to the forward jumping arcs of Step (ii.c), while the gray arcs
        represent the arcs of Steps (ii.d) and (ii.e). We do not illustrate the arcs
        added in Steps (ii.a), (ii.b), (ii.c) and omit the $i$ in the subscripts to
        improve readability. Differently shaped/colored vertices correspond to different
      requests of $K$.\label{fig:slivkins_s2}}
    \end{figure}
  \item While Step (ii) essentially encodes that only one vertex of each $V_i$ may be
    picked, we now must make sure that they collectively indeed form a tournament of
    $G$. Our final set of vertices are obtained by adding the following requests and
    their corresponding vertices $(\delta_{i,\ell,s},
    \delta_{i,\ell,t})$, where $i \in [q]$ and $\ell \in [q] \setminus [i-1]$; this
    implies that $|K| = 5q + \binom{q}{2} + q$.
    We proceed as follows, which is illustrated in \autoref{fig:slivkins_s3}.
    \begin{enumerate}
      \item For each triple $i, \ell \in [q], j \in [n]$ with $i \leq \ell$: add the
        arcs $(\delta_{i,\ell,s}, a^\ell_{i,j}), (b^i_{\ell, j}, \delta_{i,\ell,t})$.
      \item Finally, let $u \in V_i$ and $v \in V_\ell$ be adjacent vertices of $G$
        with $i < \ell$: add the arc $(b^\ell_{i,u}, a^i_{\ell,v})$.
    \end{enumerate}

    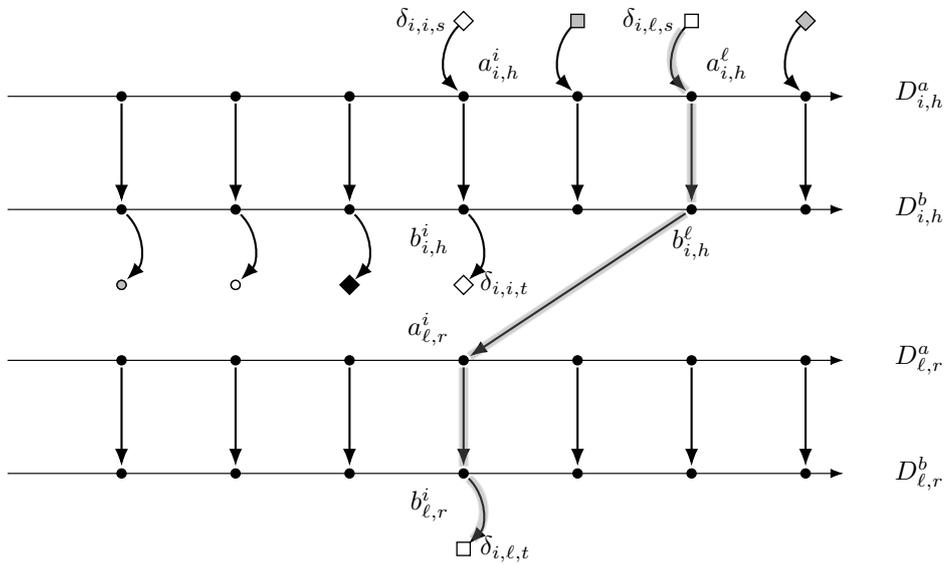
\begin{figure}[!htb]
      \centering
      \begin{tikzpicture}[scale=1]
        \GraphInit[unit=3,vstyle=Normal]
        \SetVertexNormal[Shape=circle, FillColor=black, MinSize=2pt]
        \tikzset{VertexStyle/.append style = {inner sep = \inners, outer sep = \outers}}
        \SetVertexLabelOut

        \begin{scope}
          \draw[line width=0.15mm, -Latex] (0,0) -- (11, 0);

          \Vertex[x=6, y=1.0, Math, Lpos=180, L={\delta_{i,i,s}}]{dis}
          \Vertex[x=6, y=0, Math, Lpos=45, L={a^i_{i,h}}]{a4}
          \Edges[style={-Latex, bend right=45}](dis, a4);

          \Vertex[x=9, y=1.0, Math, Lpos=180, L={\delta_{i,\ell,s}}]{dis2}
          \Vertex[x=9, y=0, Math, Lpos=45, L={a^\ell_{i,h}}]{a6}
          \Edges[style={-Latex, bend right=45}](dis2, a6);

          \SetVertexNoLabel
          \Vertex[x=1.5, y=0, Math, Lpos=90, L={a^1_{i,j}}]{a1}
          \Vertex[x=3, y=0, Math, Lpos=90, L={a^2_{i,j}}]{a2}
          \Vertex[x=4.5, y=0, Math, Lpos=90, L={a^3_{i,j}}]{a3}

          \Vertex[x=7.5, y=1.0, Math, Lpos=90, L={\delta_{i,i,s}}]{dis1}
          \Vertex[x=7.5, y=0, Math, Lpos=90, L={a^5_{i,j}}]{a5}
          \Edges[style={-Latex, bend right=45}](dis1, a5);

          \Vertex[x=10.5, y=1.0, Math, Lpos=90, L={\delta_{i,i,s}}]{dis3}
          \Vertex[x=10.5, y=0, Math, Lpos=90, L={a^7_{i,j}}]{a7}
          \Edges[style={-Latex, bend right=45}](dis3, a7);
          \node at (12, 0) {$D^a_{i,h}$};
        \end{scope}

        \begin{scope}[yshift=-1.5cm]
          \draw[line width=0.15mm, -Latex] (0,0) -- (11, 0);

          \Vertex[x=6 , y=0, Math, Lpos=-135, L={b^i_{i,h}}]{b4}
          \Vertex[x=9 , y=0, Math, Lpos=-90, L={b^\ell_{i,h}}]{b6}

          \Vertex[x=6, y=-1.0, Math, Lpos=0, L={\delta_{i,i,t}}]{dit}

          \Edges[style={-Latex, bend left=45}](b4, dit);

          \SetVertexNoLabel
          \Vertex[x=1.5, y=0, Math, Lpos=-90, L={b^1_{i,v}}]{b1}
          \Vertex[x=1.5, y=-1.0, Math, Lpos=-90, L={\delta_{i,i,t}}]{dit1}
          \Edges[style={-Latex, bend left=45}](b1, dit1);

          \Vertex[x=3 , y=0, Math, Lpos=-90, L={b^2_{i,j}}]{b2}
          \Vertex[x=3, y=-1.0, Math, Lpos=-90, L={\delta_{i,i,t}}]{dit2}
          \Edges[style={-Latex, bend left=45}](b2, dit2);

          \Vertex[x=4.5, y=0, Math, Lpos=-90, L={b^3_{i,j}}]{b3}
          \Vertex[x=4.5, y=-1.0, Math, Lpos=-90, L={\delta_{i,i,t}}]{dit3}
          \Edges[style={-Latex, bend left=45}](b3, dit3);

          \Vertex[x=7.5, y=0, Math, Lpos=-90, L={b^5_{i,j}}]{b5}
          \Vertex[x=10.5, y=0, Math, Lpos=-90, L={b^7_{i,j}}]{b7}

          \foreach \i in {1,...,7} {
            \Edges[style={-Latex}](a\i , b\i);
          }
          \node at (12, 0) {$D^b_{i,h}$};
        \end{scope}

        \begin{scope}[yshift=-3.5cm]
          \draw[line width=0.15mm, -Latex] (0,0) -- (11, 0);
          \Vertex[x=6, y=0, Math, Lpos=135, L={a^i_{\ell,r}}]{ax4}

          \SetVertexNoLabel
          \Vertex[x=1.5, y=0, Math, Lpos=90, L={a^1_{i,j}}]{ax1}
          \Vertex[x=3, y=0, Math, Lpos=90, L={a^2_{i,j}}]{ax2}
          \Vertex[x=4.5, y=0, Math, Lpos=90, L={a^3_{i,j}}]{ax3}
          \Vertex[x=7.5, y=0, Math, Lpos=90, L={a^5_{i,j}}]{ax5}
          \Vertex[x=9, y=0, Math, Lpos=90, L={a^6_{i,j}}]{ax6}
          \Vertex[x=10.5, y=0, Math, Lpos=90, L={a^7_{i,j}}]{ax7}
          \node at (12, 0) {$D^a_{\ell,r}$};
        \end{scope}

        \begin{scope}[yshift=-5cm]
          \draw[line width=0.15mm, -Latex] (0,0) -- (11, 0);

          \Vertex[x=6 , y=0, Math, Lpos=-135, L={b^i_{\ell,r}}]{bx4}
          \Vertex[x=6, y=-1.0, Math, Lpos=0, L={\delta_{i,\ell,t}}]{ditx}
          \Edges[style={-Latex, bend left=45}](bx4, ditx);

          \SetVertexNoLabel
          \Vertex[x=1.5, y=0, Math, Lpos=-90, L={b^1_{i,j}}]{bx1}
          \Vertex[x=3 , y=0, Math, Lpos=-90, L={b^2_{i,j}}]{bx2}
          \Vertex[x=4.5, y=0, Math, Lpos=-90, L={b^3_{i,j}}]{bx3}
          \Vertex[x=7.5, y=0, Math, Lpos=-90, L={b^5_{i,j}}]{bx5}
          \Vertex[x=9 , y=0, Math, Lpos=-90, L={b^6_{i,j}}]{bx6}
          \Vertex[x=10.5, y=0, Math, Lpos=-90, L={b^7_{i,j}}]{bx7}

          \Edges[style={-Latex}](b6, ax4);

          \foreach \i in {1,...,7} {
            \Edges[style={-Latex}](ax\i , bx\i);
          }
          \node at (12, 0) {$D^b_{\ell,r}$};
        \end{scope}

        \AddVertexColor{white}{dit2}
        \AddVertexColor{gray!50}{dit1}

        \begin{scope}
          \tikzset{VertexStyle/.append style = {shape = rectangle, inner sep = 2.5pt}}
          \AddVertexColor{white}{dis2, ditx}
          \AddVertexColor{gray!50}{dis1}
        \end{scope}
        \begin{scope}
          \tikzset{VertexStyle/.append style = {shape = diamond, inner sep = 2pt}}
          \AddVertexColor{white}{dis, dit}
          \AddVertexColor{gray!50}{dis3}
          \AddVertexColor{black}{dit3}
        \end{scope}

        \draw[gray,double=gray, double distance=3pt, opacity=0.2] (dis2) to [bend right=45] (a6);
        \draw[gray,double=gray, double distance=3pt, opacity=0.2] (a6) -- (b6) -- (ax4) -- (bx4);
        \draw[gray,double=gray, double distance=3pt, opacity=0.2] (bx4) to [bend left=45] (ditx);
      \end{tikzpicture}
      \caption{Partial representation of the gadgets $D_{i,h}$ and $D_{\ell, r}$ with the thicker
        arcs being those added in Step (iii) of the reduction. The grayed path encodes one
        edge of the multicolored clique of $G$. Differently shaped/colored vertices
      correspond to different requests of $K$.\label{fig:slivkins_s3}}
    \end{figure}
    \end{enumerate}
    Note that, if we can guarantee that each $i \in [q]$ has exactly one $D_{i,j}$
    unoccupied by the paths discussed in Step (ii), then the paths used to satisfy each
    $\delta_{i, \ell}$ pair must, necessarily, go through arcs of $D$ that encode
    adjacency in $G$, and all such arcs must be incident to the same vertex of $V_i$.
    We now formalize our intuition in the lemmas required to prove our main result for this section.

    \begin{observation}
      \label{obs:slivkins_structure}
      Instance $(D, K)$ of \pname{Directed Disjoint Paths} has $|K| = 6q + \binom{q}{2}$,
      $D$ can be partitioned into $6q + 2\binom{q}{2}$ tournaments and has directed
      pathwidth equal to 2.
    \end{observation}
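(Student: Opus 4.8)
The plan is to treat the three assertions separately; the first two are direct counts and essentially all of the work is in the upper bound on $\dpw(D)$. For the size of $K$: by construction $K$ contains, for each $i\in[q]$, the five requests $(\alpha_{i,s},\alpha_{i,t})$, $(a_{i,s},a_{i,t})$, $(b_{i,s},b_{i,t})$, $(g^1_{i,s},g^1_{i,t})$ and $(g^2_{i,s},g^2_{i,t})$, together with one request $(\delta_{i,\ell,s},\delta_{i,\ell,t})$ for every pair $i\le\ell$ in $[q]$. Since there are $\binom{q+1}{2}=\binom{q}{2}+q$ such pairs, $|K|=5q+\binom{q}{2}+q=6q+\binom{q}{2}$.

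For the partition into tournaments, fix $i\in[q]$. By Steps (ii.b) and (ii.e) of the construction, the four ``long horizontal arrows'' of \autoref{fig:slivkins_s2} --- the sets $\{\alpha_{i,s},g^2_{i,s}\}\cup D^\alpha_i$, $\{a_{i,s},\alpha_{i,t},g^1_{i,s}\}\cup D^a_i$, $\{b_{i,t},g^1_{i,t},g^2_{i,t}\}\cup D^\beta_i$ and $\{b_{i,s},a_{i,t}\}\cup D^b_i$ --- each induce a tournament and are pairwise disjoint; their union is the ten distinguished vertices of $D_i$ together with $D^\alpha_i\cup D^a_i\cup D^\beta_i\cup D^b_i=\bigcup_{j\in\{0\}\cup[n+1]}D_{i,j}$, i.e.\ all of $V(D_i)$. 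Taking these $4q$ tournaments and adding the $2\binom{q}{2}+2q$ singletons $\{\delta_{i,\ell,s}\}$ and $\{\delta_{i,\ell,t}\}$ yields a partition of $V(D)$ into $4q+2\binom{q}{2}+2q=6q+2\binom{q}{2}$ tournaments.

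For the lower bound on the directed pathwidth, note first that $D$ is not acyclic: for $q\ge 2$ each sub-tournament $D^a_{i,j}$ contains a directed triangle formed by two consecutive arcs of its $a$-path together with one reverse arc, so $\dpw(D)\ge 1$. To get $\dpw(D)\ge 2$, one checks that no linear ordering of $V(D)$ induces a path decomposition in which every bag has at most two vertices --- equivalently, $D$ contains a subdigraph of directed pathwidth two, obtained inside a single gadget $D_i$ by combining the reverse arcs of the $a$- and $b$-rows with the matching $\Mat$ and the jumping arcs of Step (ii).

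The upper bound is where I would put the effort: I would exhibit an explicit path decomposition of width $2$. The subtlety is that sweeping the gadgets in the order $D_1,\dots,D_q$ does not work, because the edge-encoding arcs $(b^\ell_{i,u},a^i_{\ell,v})$ --- one for each edge of $G$ between color classes $V_i$ and $V_\ell$, always with $i<\ell$ --- then all point forward, which would pin a linear number of vertices of $D_i$ in the bag. The remedy is to process the gadgets in the \emph{reverse} order $D_q,D_{q-1},\dots,D_1$, so that every such arc becomes a backward arc of the decomposition and imposes no constraint; likewise the free vertices $\delta_{i,\ell,t}$ and $\delta_{i,\ell,s}$ are inserted just before $D_\ell$ and just after $D_i$ respectively, turning their incident arcs into backward arcs. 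Inside a fixed gadget $D_i$ I advance through the columns $j=0,1,\dots,n+1$ from left to right, which makes the backward jumping arcs of Step (ii.d) backward as well, so only the matching rungs $a^\ell_{i,j}\to b^\ell_{i,j}$ and the forward jumping arcs of Step (ii.c) still need care; these are kept short by introducing the vertices of a column in a zig-zag order alternating between the $a$-path and the $b$-path, with the few $\alpha$-, $\beta$-, $g$- and distinguished vertices placed at the appropriate ends of their arrows. The main technical step is then to verify that along this sweep at most three vertices are simultaneously active --- in particular at the junctions between consecutive columns and between consecutive gadgets, where the forward jumping arcs interact with the matching --- which gives a path decomposition of width exactly $2$, and hence $\dpw(D)=2$.
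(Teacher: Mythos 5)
Your counts of $|K|$ and of the tournament partition are correct, and your placement of the $\delta$-vertices (just before $D_\ell$, just after $D_i$) is a valid variant of the paper's choice of putting all $\delta_{\cdot,\cdot,t}$ at the very start and all $\delta_{\cdot,\cdot,s}$ at the very end. Processing the gadgets in reverse order $D_q,\dots,D_1$ is also exactly what the paper does, and for the same reason you identify.

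The gap is in the \emph{inside-a-gadget} ordering. You sweep $D_i$ \emph{column by column}; the paper sweeps it \emph{row by row}, ordering the four rows as $P_i = \langle g_{i,t}^1, g_{i,t}^2, P_i^\beta, P_i^b, P_i^a, P_i^\alpha, g_{i,s}^1, g_{i,s}^2\rangle$. This difference is not cosmetic. In the row order $\beta, b, a, \alpha$, \emph{every} inter-row arc --- the matching rungs $a^\ell_{i,j}\to b^\ell_{i,j}$, all forward and backward jumping arcs of Steps~(ii.c) and~(ii.d), and the $g$-arcs of Step~(ii.e) --- has its tail in a later row than its head, hence is a backward arc of the decomposition and imposes no constraint at all. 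The only forward arcs are the consecutive Hamiltonian-path arcs inside each horizontal arrow, which is why the width stays bounded by $2$.

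In a column-by-column sweep this collapse does not happen. The stitching arcs of Step~(ii.a) go from column $j$ to column $j+1$ and, worse, the forward jumping arcs $(\alpha_{i,j},a^0_{i,j+2})$ and $(a^q_{i,j},b^0_{i,j+2})$ of Step~(ii.c) go from column $j$ to column $j+2$, \emph{skipping an entire column}. No within-column zig-zag can shorten these; it only helps with the matching rungs, which are already intra-column. Consequently $\alpha_{i,j}$ and $a^q_{i,j}$ must remain active throughout the processing of column $j+1$ (until $a^0_{i,j+2}$, resp.\ $b^0_{i,j+2}$, are introduced), and during that interval $\alpha_{i,j+1}$, $a^q_{i,j+1}$ accumulate the same obligations towards column $j+3$. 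Together with the two or three vertices you inevitably carry for the intra-column path arcs, this puts five or more vertices simultaneously active near every column boundary, i.e.\ width at least four --- the very ``main technical step'' you defer to verification in fact fails. You also omit the stitching arcs from your list of remaining forward arcs, which understates the difficulty. The fix is to switch to the paper's row-wise ordering, where the width-$2$ claim is essentially immediate. (Your lower-bound paragraph is also only an assertion; the paper itself does not argue a lower bound, and only the upper bound is used downstream.)
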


    \begin{proof}
      The first property has already been proved during the construction.
      The second one follows almost immediately: each $D_i$ has four non-trivial tournaments:
      $D_i^\alpha \cup \{g^2_{i, s}\}$, $D_{i}^a \cup \{g^1_{i, s}\}$, $D_{i}^\beta \cup
      \{g^1_{i, t}, g^2_{i, t}\}$, and $D_{i}^b$; aside from these, $D$ has
      $2\binom{q}{2}$ trivial tournaments of the form
      $\delta_{i, \ell, *}$ with $i < \ell$ plus $2q$ trivial tournaments with $i = \ell$.

      The final property requires a bit more of care, but note that each $x$-path of
      $D_i$ is a digraph of directed pathwidth 2: there is a path decomposition $P_{i}^x$ of width
      2 where each bag contains one arc of the $x$-path and consecutive bags have consecutive arcs.
      As such, we can compose a width-2 decomposition of $D_i$ by ordering its vertices
      as $P_i = \angled{g_{i,t}^1, g_{i,t}^2, P_i^\beta, P_i^b, P_i^a, P_i^\alpha,
      g_{i,s}^1, g_{i,s}^2}$.
      Finally, we order $D$ as:
      \begin{align*}
        \langle&\angled{\delta_{i,\ell,t} \mid i \in [q], \ell \in [q] \setminus[i-1]},\\
        &\angled{P_q, \dots, P_1},\\
        &\angled{\delta_{i,\ell,t} \mid i \in [q], \ell \in [q] \setminus [i-1]} \rangle.
      \end{align*}
      Note that the $P_i$'s are ordered \textit{decreasingly}, while the other two listed
      suborders can be arbitrarily internally ordered.
    \end{proof}

    We could modify our reduction to get a smaller number of cliques, but breaking the
    $\Theta(q^2)$ bound for either $|K|$ or the number of cliques seems very hard with
    this approach.
    As such, we do not further trim down either parameter, opting for the current version
    for ease of exposition.

    \begin{lemma}
      \label{lem:slivkins_forward}
      If there is a solution $Q$ to the \pname{Multicolored Clique} instance $(G,
      \mathcal{V})$, then the \pname{Directed Disjoint Path} instance $(D, K)$ admits a
      solution $\mathcal{P}$.
    \end{lemma}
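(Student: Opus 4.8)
The plan is a direct forward construction. Write $Q=\{v_1,\dots,v_q\}$ with $v_i\in V_i$, and for each $i\in[q]$ let $j_i\in[n]$ be the index of $v_i$ inside $V_i$. I would produce $\mathcal P$ as the union of two families: the \emph{gadget paths}, which satisfy inside each $D_i$ the five requests $(\alpha_{i,s},\alpha_{i,t})$, $(a_{i,s},a_{i,t})$, $(b_{i,s},b_{i,t})$, $(g^1_{i,s},g^1_{i,t})$, $(g^2_{i,s},g^2_{i,t})$; and the \emph{edge paths}, which satisfy the remaining $q+\binom{q}{2}$ requests $(\delta_{i,\ell,s},\delta_{i,\ell,t})$ with $i\le\ell$. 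Set $R_i:=\{a^\ell_{i,j_i}:\ell\in[q]\}\cup\{b^\ell_{i,j_i}:\ell\in[q]\}$, the $2q$ vertices of the ``chosen'' column $j_i$ of $D_i$ other than $a^0_{i,j_i},b^0_{i,j_i}$; the two families will meet only on the sets $R_i$.

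\textbf{Edge paths (where the clique hypothesis enters).} For $i<\ell$ route $(\delta_{i,\ell,s},\delta_{i,\ell,t})$ along $\angled{\delta_{i,\ell,s},\,a^\ell_{i,j_i},\,b^\ell_{i,j_i},\,a^i_{\ell,j_\ell},\,b^i_{\ell,j_\ell},\,\delta_{i,\ell,t}}$, and for $i=\ell$ along $\angled{\delta_{i,i,s},\,a^i_{i,j_i},\,b^i_{i,j_i},\,\delta_{i,i,t}}$. The first and last arcs of each such path exist by Step~(iii.a); the remaining arcs $(a^\ell_{i,j_i},b^\ell_{i,j_i})$ and $(a^i_{\ell,j_\ell},b^i_{\ell,j_\ell})$ are matching arcs of the gadgets $D_{i,j_i}$ and $D_{\ell,j_\ell}$; and the single ``crossing'' arc $(b^\ell_{i,j_i},a^i_{\ell,j_\ell})$ is present by Step~(iii.b) precisely because $v_i$ and $v_\ell$ are adjacent in $G$ (as $Q$ is a clique). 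These $q+\binom{q}{2}$ paths are pairwise vertex-disjoint: each $\delta$-vertex is private to its own path, and inside any $D_i$ an edge path touches only vertices of $R_i$, with $a^\ell_{i,j_i}$ used by $\delta_{i,i}$ if $\ell=i$, by $\delta_{i,\ell}$ if $\ell>i$, and by $\delta_{\ell,i}$ if $\ell<i$ (symmetrically for $b^\ell_{i,j_i}$), so distinct edge paths never share a vertex.

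\textbf{Gadget paths (the technical core).} For each $i$ I would route five pairwise vertex-disjoint paths inside $D_i$ that satisfy the five gadget requests of row $i$ and \emph{avoid $R_i$}. Following the logic of Slivkins' double-track gadget, the three main requests $(\alpha_{i,s},\alpha_{i,t})$, $(a_{i,s},a_{i,t})$, $(b_{i,s},b_{i,t})$ are each carried along a prefix of one Hamiltonian $x$-path of $D_i$ and then, via one of the \emph{forward} jumping arcs of Step~(ii.c), onto a suffix of the next $x$-path, so that collectively they perform a single detour around column $j_i$; the two auxiliary requests $(g^1_{i,s},g^1_{i,t})$ and $(g^2_{i,s},g^2_{i,t})$ enter on $D_i^a$, resp. $D_i^\alpha$, and are rerouted by the \emph{backward} jumping arcs of Step~(ii.d) onto $D_i^\beta$ and thence to $g^1_{i,t},g^2_{i,t}$, absorbing exactly the portions of the four $x$-paths (and the synchronizing vertices $x^0_{i,j}$) that the three main paths leave over. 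The tournament-completing arcs of Steps~(ii.b) and~(ii.e) guarantee that every arc used is present. Verifying that the five resulting paths really are pairwise disjoint and avoid $R_i$, and correctly handling the boundary columns $0$ and $n+1$, is where essentially all of the work lies; this is precisely what the vertices $x^0_{i,j}$, the $\alpha$- and $\beta$-tracks, and the requests $g^1,g^2$ were introduced for.

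\textbf{Assembly.} Finally $\mathcal P$ is the union of all gadget and all edge paths. Gadget paths of distinct rows are disjoint since they live in distinct gadgets $D_i$; the edge paths are disjoint among themselves (above) and from the gadget paths of $D_i$ because inside $D_i$ they only use $R_i$, which the gadget paths avoid; and no $s$- or $t$-vertex of a request occurs as an interior vertex of another path, since its in-/out-neighbourhood was chosen to force it to be saturated by its own path. Hence $\mathcal P$ is a valid solution to $(D,K)$. I expect the sole real obstacle to be the explicit routing of the five gadget paths and the proof of their disjointness and of the equality ``occupied $=\,$everything outside $R_i$''; the edge paths and the global bookkeeping are routine once that routing is in place.
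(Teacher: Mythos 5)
Your overall plan, including the explicit edge paths and the separation of the solution into ``gadget paths'' plus ``edge paths'' meeting only on $R_i$, matches the paper's proof. However, the part you yourself flag as ``where essentially all of the work lies''---exhibiting the five gadget paths inside each $D_i$ and verifying their pairwise disjointness---is precisely the content of the lemma and is left unproven in your write-up. The high-level description (prefix of one track, forward jump, suffix of the next track; $g^1,g^2$ mopped up by the backward jumps to $D^\beta$) does agree with what the paper does, but ``I expect the routing to work'' is not a proof. The paper gives the five paths explicitly: $\angled{\alpha_{i,s}, \alpha_{i,0}, \dots, \alpha_{i,u-1}, a^0_{i,u+1}, \dots, a^q_{i, n+1}, \alpha_{i,t}}$, $\angled{a_{i,s}, a^0_{i,0}, \dots, a^q_{i,u-1}, b^0_{i,u+1}, \dots, b^q_{i, n+1}, a_{i,t}}$, $\angled{b_{i,s}, b^0_{i,0}, \dots, b^q_{i,u-1}, \beta^1_{i,u}, \dots, \beta^2_{i, n+1}, b_{i,t}}$, $\angled{g^2_{i,s}, \alpha_{i,u}, \beta^2_{i,u-1}, g^2_{i,t}}$, and $\angled{g^1_{i,s}, a^0_{i,u}, \beta^1_{i,u-1}, g^1_{i,t}}$ (with $u = j_i$ in your notation), from which disjointness and avoidance of $R_i$ are a short check.

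One further inaccuracy: you assert the target invariant ``occupied $=$ everything outside $R_i$''. That is both stronger than what is needed for this direction and false for the paper's construction: for example $b^0_{i,j_i}$ and the $\alpha$-vertices $\alpha_{i,j}$ with $j > j_i$ are left unused by the gadget paths yet do not belong to $R_i$. For the forward implication you only need pairwise disjointness of the fifteen types of paths and that the gadget paths avoid $R_i$; full coverage of $D^a_i \cup D^b_i$ outside column $j_i$ is a property of \emph{every} solution (Lemma~\ref{lem:slivkins_ab_hole}, used in the converse direction), not something to aim for when constructing one.
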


    \begin{proof}
      Let $u \in V_i \cap Q$. To satisfy the internal paths of $D_i$, denoted by
      $\mathcal{I}_i$, we add the following
      disjoint paths, which are illustrated in \autoref{fig:slivkins_s4}:

      \begin{align*}
        &(\alpha_{i,s}, \alpha_{i,t}): &\angled{\alpha_{i,s}, \alpha_{i,0}, \dots,
        \alpha_{i,u-1}, a^0_{i,u+1}, \dots, a^q_{i, n+1}, \alpha_{i,t}}\\
        &(a_{i,s}, a_{i,t}): &\angled{a_{i,s}, a^0_{i,0}, \dots, a^q_{i,u-1}, b^0_{i,u+1}, \dots,
        b^q_{i, n+1}, a_{i,t}}\\
        &(b_{i,s}, b_{i,t}): &\angled{b_{i,s}, b^0_{i,0}, \dots, b^q_{i,u-1}, \beta^1_{i,u}, \dots,
        \beta^2_{i, n+1}, b_{i,t}}\\
        &(g^2_{i,s}, g^2_{i,t}): &\angled{g^2_{i,s}, \alpha_{i,u}, \beta^2_{i,u-1}, g^2_{i,t}}\\
        &(g^1_{i,s}, g^1_{i,t}): &\angled{g^1_{i,s}, a^0_{i,u}, \beta^1_{i,u-1}, g^1_{i,t}}.\\
      \end{align*}

      \begin{figure}[!htb]
        \centering
        \begin{tikzpicture}[scale=1]
          \GraphInit[unit=3,vstyle=Normal]
          \SetVertexNormal[Shape=circle, FillColor=black, MinSize=2pt]
          \tikzset{VertexStyle/.append style = {inner sep = \inners, outer sep = \outers}}
          \SetVertexLabelOut

          \begin{scope}[yshift=2.5cm]
            \draw[line width=0.15mm, -Latex] (0,0) -- (12, 0);
            \Vertex[x=0, y=0, Math, Lpos=180, L={\alpha_{s}}]{alphas}
            
            \begin{scope}[xshift=0.5cm]
              \node at (1, -0.5) {$P_\alpha$};
            \end{scope}
            
            \begin{scope}[xshift=2.5cm]
              \Vertex[x=1, y=0, Math, Lpos=90, L={\alpha_{1}}]{alpha1}
            \end{scope}

            \begin{scope}[xshift=4.5cm]
              \Vertex[x=1, y=0, Math, Lpos=90, L={\alpha_{2}}]{alpha2}
            \end{scope}

            \Vertex[x=12.1, y=0, Math, Lpos=90, L={g^2_{s}}]{c2s}
            \Edges[style={opacity=0.2, -Latex, bend left=15}](c2s, alpha2)

            \begin{scope}[xshift=8.5cm]
              \node at (1, -1) {$P_{g^2}$};
            \end{scope}

            \node at (13.1, 0) {$D^\alpha$};

          \end{scope}

          \begin{scope}
            \draw[line width=0.15mm, -Latex] (0,0) -- (12, 0);
            \Vertex[x=0, y=0, Math, Lpos=180, L={a_{s}}]{as}
            \Vertex[x=11.1, y=0, Math, Lpos=90, L={\alpha_{t}}]{alphat}

            \begin{scope}[xshift=0.5cm]
              \node at (1, -0.5) {$P_a$};
            \end{scope}

            \begin{scope}[xshift=2.5cm]
              \Vertex[x=1.5, y=0, Math, Lpos=90, L={a^q_{1}}]{a1}
            \end{scope}

            \begin{scope}[xshift=4.5cm]
              \Vertex[x=0.5, y=0, Math, Lpos=90, L={a^0_{2}}]{a2}
            \end{scope}

            \begin{scope}[xshift=6.5cm]
              \Vertex[x=0.5, y=0, Math, Lpos=90, L={a^0_{3}}]{a3}
              \Edges[style={-Latex}](alpha1, a3)
            \end{scope}
            
            \begin{scope}[xshift=8.5cm]
              \node at (1, -1) {$P_{g^1}$};
            \end{scope}

            \Vertex[x=12.1, y=0, Math, Lpos=90, L={g^1_{s}}]{c1s}
            \Edges[style={opacity=0.2, -Latex, bend left=15}](c1s, a2)
            \node at (13.1, 0) {$D^a$};
          \end{scope}

          \begin{scope}[yshift=-2.5cm]
            \draw[line width=0.15mm, -Latex] (-1,0) -- (11, 0);
            \Vertex[x=11.1, y=0, Math, Lpos=90, L={b_{t}}]{bt}

            \begin{scope}[xshift=2.5cm]
              \Vertex[x=0.5, y=0, Math, Lpos=-90, L={\beta^1_{1}}]{beta11}
              \Vertex[x=1.5, y=0, Math, Lpos=-90, L={\beta^2_{1}}]{beta12}
              \Edges[style={opacity=0.2, -Latex}](a2, beta11)
              \Edges[style={opacity=0.2, -Latex, bend left}](alpha2, beta12)
            \end{scope}

            \begin{scope}[xshift=4.5cm]
              \Vertex[x=0.5, y=0, Math, Lpos=90, L={\beta^1_{2}}]{beta2}

            \end{scope}

            \Vertex[x=0, y=0, Math, Lpos=-90, L={g^2_{t}}]{c2t}
            \Vertex[x=-1, y=0, Math, Lpos=-90, L={g^1_{t}}]{c1t}
            \Edges[style={opacity=0.2, -Latex, bend right}](beta11, c1t)
            \Edges[style={opacity=0.2, -Latex, bend right}](beta12, c2t)

            \node at (13.1, 0) {$D^\beta$};
          \end{scope}

          \begin{scope}[yshift=-5cm]
            \draw[line width=0.15mm, -Latex] (0,0) -- (11, 0);
            \Vertex[x=0, y=0, Math, Lpos=180, L={b_{s}}]{bs}
            \Vertex[x=11.1, y=0, Math, Lpos=90, L={a_{t}}]{at}

            \begin{scope}[xshift=0.5cm]
              \node at (1, -0.5) {$P_b$};
            \end{scope}

            \begin{scope}[xshift=2.5cm]
              \Vertex[x=1.5, y=0, Math, Lpos=-90, L={b^q_{1}}]{b1}
              \Edges[style={-Latex}](b1, beta2)
            \end{scope}

            \begin{scope}[xshift=6.5cm]
              \Vertex[x=0.5, y=0, Math, Lpos=-90, L={b^0_{3}}]{b3}
              \Edges[style={-Latex, bend left=15}](a1, b3)
            \end{scope}

            \node at (13.1, 0) {$D^b$};
          \end{scope}

          \begin{scope}[xshift=0.5cm]
            \draw[dashed, opacity=0.5] (0, 3) -- (0, -5.5);
            \node at (1, 3.5) {0};
          \end{scope}

          \begin{scope}[xshift=2.5cm]
            \draw[dashed, opacity=0.5] (0, 3) -- (0, -5.5);
            \node at (1, 3.5) {1};
          \end{scope}

          \begin{scope}[xshift=4.5cm]
            \draw[dashed, opacity=0.5] (0, 3) -- (0, -5.5);
            \node at (1, 3.5) {2};
          \end{scope}

          \begin{scope}[xshift=6.5cm]
            \draw[dashed, opacity=0.5] (0, 3) -- (0, -5.5);
            \node at (1, 3.5) {3};
          \end{scope}

          \begin{scope}[xshift=8.5cm]
            \draw[dashed, opacity=0.5] (0, 3) -- (0, -5.5);
            \node at (1, 3.5) {4};
          \end{scope}

          \begin{scope}[xshift=10.5cm]
            \draw[dashed, opacity=0.5] (0, 3) -- (0, -5.5);
          \end{scope}

          \AddVertexColor{white}{as, at}
          \begin{scope}
            \tikzset{VertexStyle/.append style = {shape = rectangle, inner sep = 2.5pt}}
            \AddVertexColor{gray!50}{alphas, alphat}
            \AddVertexColor{white}{bs, bt}
          \end{scope}
          \begin{scope}
            \tikzset{VertexStyle/.append style = {shape = diamond, inner sep = 2pt}}
            \AddVertexColor{gray!50}{c1s, c1t}
            \AddVertexColor{white}{c2s, c2t}
          \end{scope}

          \draw[goodblue,double=goodblue, double distance=3pt, opacity=0.2] (alphas) --
          (alpha1) -- (a3) -- (alphat);
          
          \draw[goodgreen,double=goodgreen, double distance=3pt, opacity=0.2] (bs) --
          (b1) -- (beta2) -- (bt);
          
          \draw[goodred,double=goodred, double distance=3pt, opacity=0.2] (as) to (a1) to [bend left=15] (b3);
          \draw[goodred,double=goodred, double distance=3pt, opacity=0.2] (b3) -- (at);

          \draw[goodyellow, double=goodyellow, double distance=3pt, opacity=0.2] (c2s) to [bend left=15] (alpha2);
          \draw[goodyellow, double=goodyellow, double distance=3pt, opacity=0.2] (alpha2) to [bend left] (beta12);
          \draw[goodyellow, double=goodyellow, double distance=3pt, opacity=0.2] (beta12) to [bend right] (c2t);
          
          \draw[goodteal, double=goodteal, double distance=3pt, opacity=0.2] (c1s) to [bend left=15] (a2);
          \draw[goodteal, double=goodteal, double distance=3pt, opacity=0.2] (a2) to (beta11);
          \draw[goodteal, double=goodteal, double distance=3pt, opacity=0.2] (beta11) to [bend right] (c1t);
        \end{tikzpicture}
        \caption{Paths used to satisfy the interior requests of a gadget $D_i$ if $j = 2$; the $i$
          subscript is omitted for readability. Shaded contiguous paths represent the
          appropriate paths added to the solution of $(D, K)$ to satisfy the long paths of
          $D_i$, while the gray arcs are used to satisfy its short paths. Differently
        shaped/colored vertices correspond to different requests of $K$.\label{fig:slivkins_s4}}
      \end{figure}
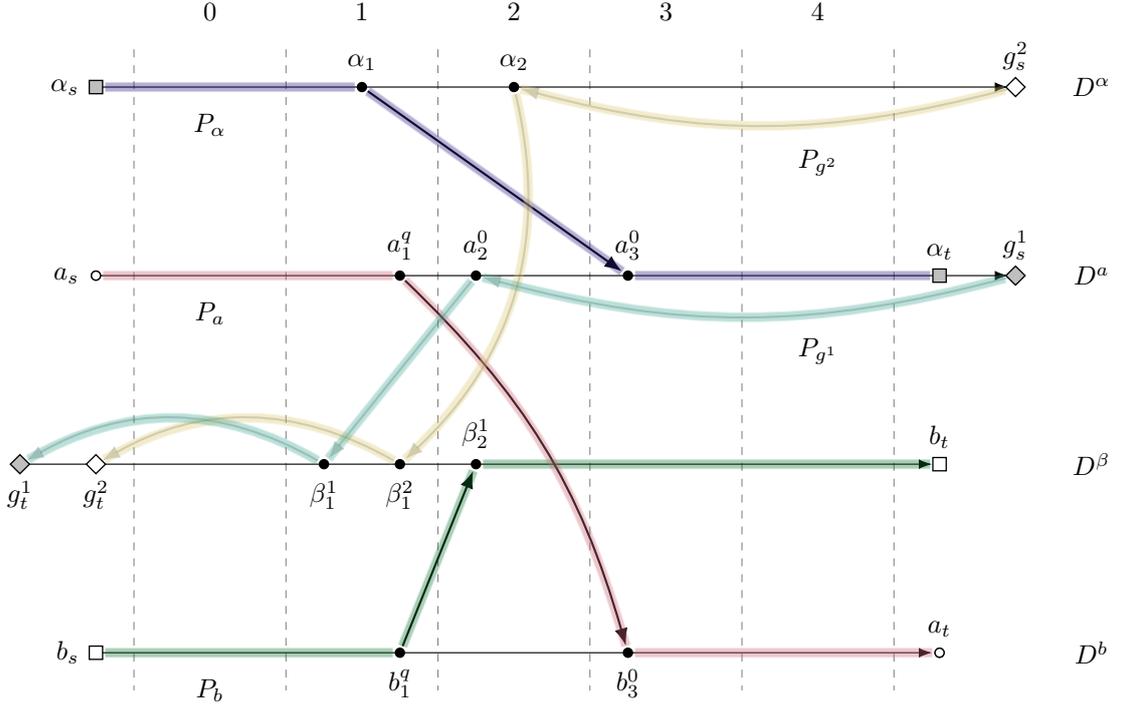

      To satisfy the requests involving the $\delta$ vertices, we add
      $\angled{\delta_{i,i,s}, a^i_{i,u}, b^i_{i,u}, \delta_{i,i,t}}$ to $\mathcal{P}$. Now, take $v
      \in V_\ell \cap Q$ with $i < \ell$. We add the path $P_{i,\ell}$, which is displayed
      in \autoref{fig:slivkins_s3}:
      \begin{equation*}
        P_{i,\ell} = \angled{\delta_{i,\ell,s}, a^\ell_{i,u}, b^\ell_{i,u}, a^i_{\ell,v},
        b^i_{\ell,v}, \delta_{i,\ell,t}}.
      \end{equation*}
      This concludes the construction of $\mathcal{P}$.
      By definition, a path $P_{i,\ell}$ occupies vertices of $D_{i,u}$ if and only if $u
      \in Q \cap V_i$, while paths of $\mathcal{I}_i$ only occupy vertices $a^0_{i,u},
      b^0_{i,u}$, which are never used by $P_{i, \ell}$, so it follows that $P_{i, \ell}$
      and $\mathcal{I}_i$ are disjoint.
      Finally, $P_{i,\ell}$ and $P_{i,r}$ are also disjoint whenever $\ell \neq r$, so
      $\mathcal{P}$ is indeed a collection of $|K|$ disjoint paths satisfying $K$.
    \end{proof}

    The converse direction, as usual, requires additional care in its proof. To this end,
    we first show that minimal solutions in $(D,K)$ must adhere to a specific format, i.e., the
    jumps between the $x$-paths of each $D_i$ happen \textit{synchronously} as in
    \autoref{fig:slivkins_s4}: all forward jumps leave a gap at precisely one index
    $j$ of each $D_i$, and this happens due to how the backward jumps were defined.

    \begin{lemma}
      \label{lem:slivkins_ab_hole}
      Let $\mathcal{P}$ be a solution of $(D, K)$ where every $P \in \mathcal{P}$ is
      induced (i.e., $P$ is minimal), $D_{i} \in D$, $P_{i,i}
      \in \mathcal{P}$ be the path satisfying $(\delta_{i,i,s}, \delta_{i,i,t}) \in K$,
      and $\mathcal{I}_i$ be the set of paths satisfying internal requests of $D_i$.
      It holds that (1) $V(\mathcal{I}_i) \subseteq D_i$ and there is some $u \in [n]$
      such that: (2) all forward jumps performed by paths
      of $\mathcal{I}_i$ originate in $D_{i, u-1}$, and (3) all backward jumps go from
      $D_{i,u} \setminus D^\beta_{i,u-1}$ to $D^\beta_{i,u-1}$.
      Moreover, (4) for every $v \in \{0, \dots, n+1\} \setminus \{u\}$, we have
      $D^a_{i,v} \cup D^b_{i,v} \subseteq V(\mathcal{I}_i)$ and $(D^a_{i,u} \cup
      D^b_{i,u}) \cap V(\mathcal{I}_i) = \{a^0_{i,u}, b^0_{i,u}\}$.
    \end{lemma}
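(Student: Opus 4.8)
Let $D' = D \setminus V(K)$; since no terminal can serve as an internal vertex of a path, every $P \in \mathcal{I}_i$ has all of its internal vertices in $D'$. The only arcs of $D$ joining non-terminal vertices of two distinct gadgets are the adjacency arcs $(b^\ell_{i',u}, a^{i'}_{\ell,v})$ with $i' < \ell$ added in Step~(iii.b); hence, contracting each set $\bigcup_j D_{i',j}$ of non-terminals of $D_{i'}$ to a single node turns $D'$ into a DAG with topological order $D_1, \dots, D_q$. Moreover, the source (resp.\ target) terminal of each of the five internal requests of $D_i$ has all of its out-neighbors (resp.\ in-neighbors) among the non-terminals of $D_i$. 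Therefore every $P \in \mathcal{I}_i$ starts and ends in $D_i$, and since it cannot leave $D_i$ without passing to a strictly larger gadget, from which it can never return, it stays entirely inside $D_i$; this proves~(1).

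\textbf{Shape of the long paths.} We now work inside $D_i$ and use two elementary facts. First, each of $D^\alpha_i, D^a_i, D^b_i, D^\beta_i$ is a tournament whose only arcs going ``to the right'' (in the layout of \autoref{fig:slivkins_s2}) are those of its $x$-path; consequently, any \emph{induced} path all of whose vertices lie in one such tournament is a contiguous sub-path of the corresponding $x$-path. Second, every inter-row arc of $D_i$ goes from an earlier to a later row in the order $D^\alpha, D^a, D^b, D^\beta$: the matching arcs $D^a_{i,j} \to D^b_{i,j}$, the forward jumping arcs $(\alpha_{i,j}, a^0_{i,j+2})$, $(a^q_{i,j}, b^0_{i,j+2})$, $(b^q_{i,j}, \beta^1_{i,j})$, and the backward jumping arcs $(\alpha_{i,j}, \beta^2_{i,j-1})$, $(a^0_{i,j}, \beta^1_{i,j-1})$ (the $g$-terminals occur only as genuine endpoints of their own paths). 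Hence, along any induced path inside $D_i$, the rows are visited in non-decreasing order. Combining this with the fixed first and last vertices of the long paths — forced by the in-/out-neighborhoods of the terminals, namely $P_\alpha$ runs from $\alpha_{i,0}$ to $a^q_{i,n+1}$, $P_a$ from $a^0_{i,0}$ to $b^q_{i,n+1}$, and $P_b$ from $b^0_{i,0}$ to $\beta^2_{i,n+1}$ — each of $P_\alpha, P_a, P_b$ performs exactly one row change (from $D^\alpha$ to $D^a$, from $D^a$ to $D^b$, and from $D^b$ to $D^\beta$, respectively) and elsewhere runs along contiguous stretches of the $x$-paths. In particular, $P_\alpha$'s row change is a forward jump that skips exactly one column of $D^\alpha_i$, say column $u$, so $P_\alpha$ occupies precisely columns $\{0,\dots,u-1\}$ of $D^\alpha_i$ and columns $\{u+1,\dots,n+1\}$ of $D^a_i$.

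\textbf{Synchronization at column $u$.} The remaining paths are short and must fit in the space left by $P_\alpha, P_a, P_b$. Since $P_{g^2}$ goes from $D^\alpha_i$ to $D^\beta_i$, is induced, and is disjoint from $P_\alpha$, it starts at a column $\geq u$ of $D^\alpha_i$ and uses a single backward jump $(\alpha_{i,j'}, \beta^2_{i,j'-1})$; disjointness from the suffix of the $\beta$-path occupied by $P_b$ then forces $j' = u$, so $P_{g^2} = \angled{g^2_{i,s}, \alpha_{i,u}, \beta^2_{i,u-1}, g^2_{i,t}}$. An identical argument with the backward jumps $(a^0_{i,j}, \beta^1_{i,j-1})$ gives $P_{g^1} = \angled{g^1_{i,s}, a^0_{i,u}, \beta^1_{i,u-1}, g^1_{i,t}}$. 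Feeding this back, disjointness of $P_a$ from $P_\alpha$ and $P_{g^1}$, together with the fact that $P_{i,i}$ must start at some vertex $a^i_{i,j}$ that is \emph{not} used by $\mathcal{I}_i$ (its only non-terminal out-neighbors are the $a^i_{i,j}$), forces the row changes of $P_a$ and $P_b$ to happen at column $u$ as well. Consequently all jumps of $\mathcal{I}_i$ take place at column $u$, which is exactly~(2) and~(3); moreover $a^0_{i,u} \in V(P_{g^1})$, while $b^0_{i,u}$ is absorbed by $P_a$ or $P_b$. Finally, a bookkeeping check shows that $P_\alpha, P_a, P_b, P_{g^1}, P_{g^2}$ jointly cover every vertex of $D^a_i \cup D^b_i$ except $a^1_{i,u}, \dots, a^q_{i,u}, b^1_{i,u}, \dots, b^q_{i,u}$, which gives~(4).

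\textbf{Where the difficulty lies.} The delicate point is the synchronization: the transition column of each of $P_\alpha, P_a, P_b$ and the attachment columns of $P_{g^1}, P_{g^2}, P_{i,i}$ are a priori independent, and one has to show that disjointness and inducedness force them all to coincide. This uses the precise endpoints of the jumping arcs (a forward jump skips exactly one column; a backward jump from column $j$ lands in column $j-1$) and, crucially, rules out the extra freedom afforded by the matching arcs $D^a_{i,j} \to D^b_{i,j}$, i.e.\ a ``staggered'' crossing where $P_a$ switches rows inside a single column at an intermediate height; such a configuration must be excluded because it would either strand a $\delta$-request or leave a vertex of $D^a_i \cup D^b_i$ uncoverable. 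Carrying out this exclusion is the bulk of the case analysis.
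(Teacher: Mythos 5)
Your proof attempt is incomplete, and the incompleteness is precisely at the heart of the lemma. Part~(1) is fine and matches the paper's argument (once one notes that no terminal can be internal and all inter-gadget non-terminal arcs go from $D_i$ to $D_\ell$ with $\ell>i$). The ``shape of the long paths'' observations — each of $D^\alpha_i,\dots,D^b_i$ is an acyclic tournament whose only forward arcs form the $x$-path, inter-row arcs are monotone in the row order $\alpha,a,b,\beta$, hence each of $P_\alpha, P_a, P_b$ makes exactly one row change — are correct and a nice reformulation. But the synchronization step, which is where the lemma actually has content, is not carried out, and you say so yourself (``Carrying out this exclusion is the bulk of the case analysis''). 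That exclusion is not a routine bookkeeping check; it is exactly the part that rules out the $\delta$-paths ``cheating'' (as the paper notes right after the lemma), and leaving it as a remark means the lemma is not proved.

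There is also a structural/ordering problem in the part you did write. You fix $u$ from the column skipped by $P_\alpha$'s unique forward jump, and then argue that $P_{g^2}$ must jump at column $u$ by invoking ``disjointness from the suffix of the $\beta$-path occupied by $P_b$''. But at that point you have not established where $P_b$'s jump is; that is exactly what you are trying to synchronize. The paper avoids this circularity by pinning down $P_b$ \emph{first}: since $D^\beta_i$ has no outgoing arcs and the only arcs into it from $D^b_i$ are the forward jumps $b^q_{i,j}\to\beta^1_{i,j+1}$, the induced path $P_b$ is forced to be the $b$-path prefix up to $b^q_{i,u-1}$, a single jump, and then the $\beta$-path suffix — this defines $u$ cleanly, and then $P_a$, $P_{g^1}$, $P_\alpha$, $P_{g^2}$ are determined in that order, with the $a^q_{i,u-2}\to b^0_{i,u}$ exception explicitly killed because it would swallow all of $D^b_i$ and make $(\delta_{i,i,s},\delta_{i,i,t})$ unsatisfiable. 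A smaller gap of the same flavor: your claim that $P_{g^2}$ (and $P_{g^1}$) ``uses a single backward jump'' is asserted from inducedness alone, but inducedness does not by itself exclude the detour $g^2_{i,s}\to\alpha_{i,j'}\to a^0_{i,j'+2}\to\beta^1_{i,j'+1}\to g^2_{i,t}$; this is excluded only because $a^0_{i,j'+2}$ for $j'\ge u$ is occupied by $P_\alpha$, and that needs to be said.

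In short, your approach to~(1) and to the gross shape of $P_\alpha, P_a, P_b$ is sound and roughly parallels the paper, but you reverse the paper's determination order in a way that introduces a circular dependence on $P_b$, and you explicitly defer the key exclusion argument (staggered matching-arc crossings and the $a^q_{i,u-2}$ exceptional jump), so parts~(2)--(4) are not actually established.
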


    \begin{proof}
      To prove (1), observe that the vertices of $G \setminus V(K)$ that
      can be reached from a vertex of $D_i$ are precisely those in $D_\ell$ for $\ell >
      i$; that is, paths of $\mathcal{I}_i$ only use vertices of $D_i$.

      Towards proving (2) and (3), take $P_{i,b}$ as the path satisfying request
      $(b_{i,s}, b_{i,t})$ and note that it \textit{must} use a forward jumping arc
      $(b^q_{i,u-1},\beta^1_{i,u})$ for some $u \in [n]$, as these are the
      only arcs that connect the $b$- and $\beta$-paths of $D_i$ and to reach $b_{i,t}$
      we must use part of the $\beta$-path of $D_i$.
      Moreover, this arc is unique, as $D^\beta_i$ has \textit{no} arc leaving it.
      Since the only arcs going from $D^b_i$ to $D_i$ are of this form, it follows that:

      \begin{equation*}
        P_{i,b} = \angled{b_{i,s}, b^0_{i,0}, \dots, b^q_{i, u-1}, \beta^1_{i, u}, \dots,
        \beta^2_{i, n+1}, b_{i,t}}.
      \end{equation*}

      Define $P_{i,a}$ similarly, that is, the path satisfying request $(a_{i,s},
      a_{i,t}) \in K$, and note that it must reach $D^b_{i}$ at some point
      $b^x_{i,v}$ for some $v \geq u$ and, after that, it only occupies vertices of $D^b_i$.
      Consequently, the last vertex of $D^a_i$ in $P_{i,a}$ belongs to the suffix of the
      $a$-path that starts at $a^q_{i,u-1}$ and its superscript is not 0, as these
      vertices have no arc to $D^b_i$.
      Moreover, $P_{i,a}$ contains the prefix of the $a$-path ending at
      $a^{q-1}_{i,u-1}$: these vertices' only neighbors are either in $D^\beta_i$, which
      is a dead-end for $P_{i,a}$, or have already been used by $P_{i,b}$.
      The sole exception to this analysis is vertex (if it exists) $a^q_{i,u-2}$, which
      has $b^0_{i,u}$ not yet accounted for; but if this jump is performed, then
      $D^b_i \subset P_{i,a} \cup P_{i,b}$ and becomes impossible to satisfy the request
      $(\delta_{i,i,s}, \delta_{i,i,t}) \in K$.
      If the last vertex of $D^a_i$ in $P_{i,a}$ was not $a^q_{i,u-1}$ we would have
      $a^0_{i,u} \in P_{i,a}$, but then observe that $(g^1_{i,s}, g^1_{i,t}) \in K$ would
      be unsatisfiable: the only neighbors of $a^0_{i,v}$ with $v > u$, which are the
      only neighbors of $g^1_{i,s}$ not in $P_{i,a}$, would already be occupied by $P_{i,b}$.
      As such, we conclude that:

      \begin{equation*}
        P_{i,a} = \angled{a_{i,s}, a^0_{i,0}, \dots, a^q_{i, u-1}, b^0_{i, u+1}, \dots,
        b^q_{i, n+1}, a_{i,t}}.
      \end{equation*}

      Let us now look at the paths $P_{i,g^x}$ that satisfy the requests
      $(g^x_{i,s}, g^x_{i,t})$. The above also implies that $P_{i,g^1} =
      \angled{g^1_{i,s}, a^0_{i,u}, \beta^1_{i,u-1}, g^1_{i,t}}$.
      Now, observe that the first arc of $P_{i, g^2}$ must have $\alpha_{i,v}$ as an
      endpoint and $v \leq u$; since $P_{i, g^2}$ is minimal, no other $\alpha_i$ vertex
      is in it and $v \not> u$ as $D^\beta_{i,w} \subset P_{i,b}$ whenever $w > u$.
      In fact, it must be the case that $v = u$: the $P_{i,\alpha}$, which satisfies the
      request $(\alpha_{i,s}, \alpha_{i,t})$, cannot use vertices of
      $D^\beta_i$ and it cannot perform its unique jump to $D^a_{i}$ before
      $\alpha_{i,u-1}$, or it would clash with $P_{i,a}$. Together, these imply that:

      \begin{align*}
        P_{i,\alpha} &= \angled{\alpha_{i,s}, \alpha_{i,0}, \dots, \alpha_{i, u-1},
          a^0_{i, u+1}, \dots,
        a^q_{i, n+1}, \alpha_{i,t}}\\
        P_{i,g^2} &= \angled{g^2_{i,s}, \alpha_{i,u}, \beta^2_{i,u-1}, g^2_{i,t}}.
      \end{align*}

      Consequently, (2) and (3) hold. Finally, (4) follows immediately from the proofs of (1)-(3).
    \end{proof}

    Property (4) of \autoref{lem:slivkins_ab_hole} is the key ingredient of our proof
    and the main complication to generalize Slivkins' result. In particular, we need to
    fully use $D^a_i$ and $D^b_i$ to avoid cheating by the $\delta$ paths: without this
    guarantee, we could have a $\delta$ path hitting an $a$ vertex at some index $u$,
    going to an $a$ vertex of index $v$ using an arc added to force the appearance of a
    tournament, moving to its matched $b$ vertex and then moving on to any $D_j$,
    completely breaking down the desired behavior.
    With these bizarre paths safely ruled out, proving the converse becomes quite direct,
    as we show in the following lemma.

    \begin{lemma}
      \label{lem:slivkins_backward}
      If $(D, K)$ admits a solution $\mathcal{P}$, then $(G, \mathcal{V})$ contains a
      multicolored clique $Q$ of appropriate size.
    \end{lemma}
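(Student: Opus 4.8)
The plan is to start from an arbitrary solution $\mathcal{P}$ of $(D,K)$, replace each path by an induced subpath with the same endpoints (shortcutting along chords), and thus assume that every $P\in\mathcal{P}$ is induced, so that \autoref{lem:slivkins_ab_hole} applies. For each $i\in[q]$ that lemma produces an index $u_i\in[n]$ such that the internal paths $\mathcal{I}_i$ occupy all of $D^a_i\cup D^b_i$ except the $2q$ ``holes'' $H_i:=\{a^1_{i,u_i},\dots,a^q_{i,u_i},b^1_{i,u_i},\dots,b^q_{i,u_i}\}$, and moreover the explicit descriptions of the five internal paths obtained in its proof determine exactly which vertices of $D^\alpha_i$, $D^\beta_i$ and of the $g$-sets are used by $\mathcal{I}_i$. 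Let $v_i$ be the vertex of $V_i$ whose index is $u_i$ and set $Q:=\{v_1,\dots,v_q\}$; this is multicolored and has size $q$, so everything reduces to showing that $Q$ is a clique of $G$.

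The core of the argument is to pin down, inside each $D_i$, the behavior of the paths of $\mathcal{P}$ satisfying the $\delta$-requests. Call a $\delta$-request \emph{incident to $D_i$} if one of its two indices equals $i$; there are exactly $q$ of them, namely $(\delta_{i,i,s},\delta_{i,i,t})$, the requests $(\delta_{i,\ell,s},\delta_{i,\ell,t})$ with $\ell>i$, and the requests $(\delta_{m,i,s},\delta_{m,i,t})$ with $m<i$ (no other $\delta$-request can reach $D_i$, since from $D_i$ one only reaches gadgets $D_\ell$ with $\ell>i$, and all arcs leaving $D_i$ originate at $b$-vertices). First I would observe that every arc leaving $D_j$ originates at a $b$-vertex of $D_j$, and that every arc entering $D_j$ enters an $a$-vertex; combined with property~(4) of \autoref{lem:slivkins_ab_hole}, this shows that each of the $q$ paths incident to $D_i$ uses at least one free $a$-vertex of $D_i$ (for a source-side it enters through the unique free out-neighbor $a^\ell_{i,u_i}$ of $\delta_{i,\ell,s}$; for a target-side it must enter $D_i$ via a cross-gadget arc, hence at an $a$-hole) and at least one free $b$-vertex of $D_i$ (to leave $D_i$, or to reach a $\delta$-target via its unique free in-neighbor $b^m_{i,u_i}$). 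Since there are exactly $q$ free $a$-vertices and $q$ free $b$-vertices, the $q$ paths are pairwise vertex-disjoint, and each uses at least one of each, a pigeonhole/counting argument forces equality: each of the $q$ incident paths uses \emph{exactly} one free $a$-vertex $a^{z}_{i,u_i}$ and \emph{exactly} one free $b$-vertex $b^{w}_{i,u_i}$ of $D_i$. Finally, inside $D_i$ the out-neighborhood of $a^{z}_{i,u_i}$ meets $H_i$ only in other free $a$-vertices and in its matched partner $b^{z}_{i,u_i}$, and the out-neighborhood of a free $b$-vertex meets $H_i$ only in other free $b$-vertices — the only potential escapes, such as $a^q_{i,u_i}\to b^0_{i,u_i+2}$ or $b^q_{i,u_i}\to\beta^1_{i,u_i}$, land on vertices of $V(\mathcal{I}_i)$ — so together with the previous sentence this forces the portion of the path inside $D_i$ to be precisely the matching arc $a^{z}_{i,u_i}\to b^{z}_{i,u_i}$ (in particular $z=w$).

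With this structure in hand, fix $i<\ell$ and consider the path $P_{i,\ell}\in\mathcal{P}$ satisfying $(\delta_{i,\ell,s},\delta_{i,\ell,t})$. By the above it enters $D_i$ at $a^\ell_{i,u_i}$, continues to $b^\ell_{i,u_i}$, and then leaves $D_i$; every arc out of $b^\ell_{i,u_i}$ leading outside $D_i$ enters $D_\ell$ (its superscript being $\ell$), and among those the only one ending at a vertex outside $V(\mathcal{I}_\ell)$ is $(b^\ell_{i,u_i},a^i_{\ell,u_\ell})$, so this arc is used; symmetrically the path then traverses the matching arc $a^i_{\ell,u_\ell}\to b^i_{\ell,u_\ell}$ and finishes with $(b^i_{\ell,u_\ell},\delta_{i,\ell,t})$, giving
\[
  P_{i,\ell}=\angled{\delta_{i,\ell,s},\,a^\ell_{i,u_i},\,b^\ell_{i,u_i},\,a^i_{\ell,u_\ell},\,b^i_{\ell,u_\ell},\,\delta_{i,\ell,t}}.
\]
By Step~(iii.b) of the construction the arc $(b^\ell_{i,u_i},a^i_{\ell,u_\ell})$ exists if and only if $v_i$ and $v_\ell$ are adjacent in $G$. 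As $i<\ell$ were arbitrary, $Q$ is a clique, which completes the proof (together with \autoref{lem:slivkins_forward} this proves the equivalence).

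The step I expect to be most delicate is the confinement claim in the second paragraph — rigorously verifying, against the explicit internal paths produced in \autoref{lem:slivkins_ab_hole}, that a $\delta$-path cannot loiter inside $D_i$ through the free $\beta$-vertices or through any vertex outside $H_i$, nor detour through an intermediate gadget $D_{m}$ with $i<m<\ell$. This is exactly the cheating behavior that property~(4) was engineered to block, and once it is nailed down the counting argument and the deduction of adjacency are routine.
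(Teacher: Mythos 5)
Your proof follows essentially the same strategy as the paper's: apply \autoref{lem:slivkins_ab_hole} to a minimal solution, define $Q$ from the free indices $u_i$, argue that each $\delta$-path must traverse the unique free matching arc $a^z_{i,u_i}\to b^z_{i,u_i}$ in each gadget $D_i$ it meets, and read off adjacency in $G$ from the cross-gadget arcs $(b^\ell_{i,u_i},a^i_{\ell,u_\ell})$. The only local difference is in how the $\delta$-paths are pinned down: the paper does this by a brief induction on $i$, whereas you use a per-gadget counting/pigeonhole over the $q$ incident $\delta$-requests against the $q$ free $a$-vertices (and $q$ free $b$-vertices) of $D_i$; both work, though the parenthetical reason you give for why a non-incident $\delta$-request cannot reach $D_i$ (``from $D_i$ one only reaches higher-indexed gadgets'') does not by itself exclude a detour $D_j\to D_i\to D_m$ with $j<i<m$ — that confinement is in fact a \emph{consequence} of your pigeonhole (the $q$ incident paths already saturate the $q$ free $a$-vertices) rather than something that needs to be established beforehand.
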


    \begin{proof}
      Suppose that $\mathcal{P}$ is minimal, i.e., none of its paths has a shortcut.
      By \autoref{lem:slivkins_ab_hole}, for each $i \in [q]$ there is exactly one $u
      \in [n]$ that is not completely occupied by $V(\mathcal{I}_i)$. We claim that the
      collection of vertices $Q \subseteq V(G)$ corresponding to these indices forms a
      clique in $G$.
      It suffices to show that $u,v \in Q$ implies $uv \in E(G)$.
      First, note that every $D^a_{i,w}$ for $i \in [q], w \in Q$ must be occupied by a path that
      satisfies a request between $\delta$ vertices, and each one is occupied by a
      different path; this follows directly by induction on $i$, with the base case $i =
      1$ being straightforward.
      Suppose that $u \in V_i$, $v \in V_\ell$ and $i < \ell$ and let $P_{i,\ell}$ be the
      path satisfying $(\delta_{i,\ell,s}, \delta_{i,\ell,t}) \in K$.
      Again by \autoref{lem:slivkins_ab_hole}, $a^\ell_{i,u}$ is the unique out-neighbor
      of $\delta_{i,\ell,s}$ not occupied by another path, so $(\delta_{i,\ell,s},
      a^\ell_{i,u}) \in P_{i,\ell}$. By a similar argument, $(a^\ell_{i,u}, b^\ell_{i,u})
      \in P_{i,\ell}$.
      Since $P_{i, \ell}$ exists and $b^\ell_{i,u}$ is not its terminal, the latter must
      have exactly one out-neighbor in the former. By our construction, this vertex must be
      in $D^a_\ell$ and, by \autoref{lem:slivkins_ab_hole}, this is precisely
      $a^i_{\ell, v} \in D^a_{\ell, v}$. In turn, this arc only exists if $uv \in E(G)$,
      concluding the proof.
    \end{proof}

    \whch*

    \begin{proof}
      The proof follows directly from \autoref{obs:slivkins_structure} and
      Lemmas~\ref{lem:slivkins_forward} and~\ref{lem:slivkins_backward}.
    \end{proof}

    \subsubsection{Into the congestionverse}
    We employ a trick very similar to the one used in \autoref{sec:ch_nph}.
    Essentially, we introduce a path that is \textit{almost} Hamiltonian, missing only
    the $\delta$ vertices of \autoref{thm:w1h_ch}. It is also possible to include
    them in this new path, but for simplicity's sake we do not.
    Formally, we proceed as follows: let $(D', K')$ be the instance of \pname{Directed
    Disjoint Paths} built in the proof of \autoref{thm:w1h_ch} and $\congestion > 1$ be an
    integer; to obtain $(D, K, \congestion)$,
    we add $\congestion-1$ requests of the form $(z_s, z_t)$ to $K'$ and the corresponding vertices
    to $D'$. Now, add the arcs $(z_s, g^1_{q, t})$ and $(g^2_{1, s}, z_t)$ to $D$.
    Then, for each $i \in [q]$, we add the arcs $(b_{i,t}, b_{i,s})$, $(a_{i,t}, a_{i,
    s})$, $(g^1_{i,s}, \alpha_{i,s})$ and, if $i > 1$, $(g^2_{i,s}, g^1_{i-1,t})$.
    This completes the construction of $(D, K)$; note that $|K| = |K'| + (\congestion-1)$ and that
    the directed pathwidth of $D$ remains 2.

    \begin{observation}
      Let $\mathcal{P}$ be a solution to $(D,K)$. There exists a unique path $P_z$ that
      can satisfy $(z_s, z_t)$ and, for every $i \in [q]$, this path is obtained by
      concatenating the $\beta$-, $b$-, $a$- and $\alpha$-paths, in this order.
    \end{observation}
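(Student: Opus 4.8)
The plan is to show that $D$ contains exactly one directed path from $z_s$ to $z_t$; in any solution $\mathcal{P}$ the $\congestion-1$ copies of the request $(z_s,z_t)$ are then all realized by this single path $P_z$, and the claimed description follows from its trace. Concretely, $P_z$ should turn out to be the concatenation, over $i$ running from $q$ down to $1$, of the unique Hamiltonian paths of the four tournaments making up $D_i$ -- those spanned by $D^\beta_i$, $D^b_i$, $D^a_i$, $D^\alpha_i$ together with their ``gate'' vertices (as depicted in \autoref{fig:slivkins_s2}, i.e.\ $\{g^1_{i,t},g^2_{i,t},b_{i,t}\}$, $\{b_{i,s},a_{i,t}\}$, $\{a_{i,s},\alpha_{i,t},g^1_{i,s}\}$, and $\{\alpha_{i,s},g^2_{i,s}\}$ respectively) -- glued together by the newly added arcs $(b_{i,t},b_{i,s})$, $(a_{i,t},a_{i,s})$, $(g^1_{i,s},\alpha_{i,s})$ inside $D_i$, by $(g^2_{i,s},g^1_{i-1,t})$ between consecutive gadgets, and finally by $(z_s,g^1_{q,t})$ at the front and $(g^2_{1,s},z_t)$ at the back. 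This is exactly the ``critical path'' mechanism already used in \autoref{thm:restricted_ddp_nph} and \autoref{thm:c2_hardness}, now threaded through all $q$ gadgets.

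The core is a downward induction on $i$. Since $z_s$ is a fresh vertex whose only outgoing arc is $(z_s,g^1_{q,t})$, any $z_s$--$z_t$ path starts at $g^1_{q,t}$; assuming such a path has just reached $g^1_{i,t}$, I would argue that it is forced, with no choice at any vertex, to follow in order the Hamiltonian path of the $\beta$-tournament of $D_i$, then the new arc $(b_{i,t},b_{i,s})$, then the Hamiltonian path of the $b$-tournament, then $(a_{i,t},a_{i,s})$, then the Hamiltonian path of the $a$-tournament, then $(g^1_{i,s},\alpha_{i,s})$, then the Hamiltonian path of the $\alpha$-tournament, ending at $g^2_{i,s}$ and leaving $D_i$ through $(g^2_{i,s},g^1_{i-1,t})$ (or reaching $z_t$ if $i=1$). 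This rests on two properties of the construction of $(D',K')$ in \autoref{thm:w1h_ch}: (a)~each of these four tournaments has a unique Hamiltonian path and all of its remaining arcs point strictly backwards along that path, so a sub-walk entering one of them at its source can leave it only from its sink; and (b)~the only arcs of $D$ leaving a vertex of one of these tournaments towards a not-yet-visited vertex are the last arc of its Hamiltonian path and the newly added arcs -- every other arc of $D$, namely the backward arcs inside the tournaments, the matching arcs $a^\ell_{i,j}\to b^\ell_{i,j}$, the forward and backward jumping arcs of Steps~(ii.c)--(ii.d), the arcs incident to the $g$-gates added in Step~(ii.e), and all arcs inside $D_\ell$ with $\ell\ge i$, points strictly backwards in the order induced by $P_z$ and is therefore useless for a path that still has to reach $z_t$. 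Combining (a) and (b), every vertex of $P_z$ has a unique admissible successor, so $P_z$ is the only $z_s$--$z_t$ path; and because no arc of $D$ runs from $D_i$ into any $D_\ell$ with $\ell\ge i$, the trace never re-enters a gadget it has left, which is what shows it is a simple path rather than a recurrent walk.

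I expect the genuine work to lie in verifying (b) junction by junction: that the last Hamiltonian-path arc of each tournament and the newly added arcs really are the only ways to move between the four tournaments of $D_i$ and between consecutive gadgets, i.e.\ that none of the gate vertices $g^1_{i,t},g^2_{i,t},g^1_{i,s},g^2_{i,s},b_{i,t},a_{i,t},\alpha_{i,t}$ carries a ``shortcut'' arc letting $P_z$ omit part of a tournament or land in the wrong gadget. This is where the precise arc set of \autoref{thm:w1h_ch} and the partition of $D$ into tournaments recorded in \autoref{obs:slivkins_structure} are needed, the point being that in each of those tournaments every non-Hamiltonian arc points backwards, so no shortcut is available before the sink. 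Finally, since the construction of $(D,K)$ adds no arc incident to any $\delta$-vertex of \autoref{thm:w1h_ch}, none of them can lie on $P_z$; hence $P_z$ is almost Hamiltonian, missing precisely the $\delta$-vertices, as claimed.
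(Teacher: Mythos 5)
Your proof is correct and follows essentially the same line as the paper's: you thread the path through each gadget by noting that, once the endpoint $z_s$ forces entry into the $\beta$-row of $D_q$, the unique forward arc out of each tournament (all other arcs being backward or dead) pushes the path through the entire Hamiltonian path of $\beta$, then $b$, $a$, $\alpha$, before jumping to $D_{i-1}$, exactly as the paper argues. The only caveat worth recording is that your taxonomy of ``useless'' arcs should also include the arcs from $D^b$ into the $\delta_{i,\ell,t}$ vertices of Step~(iii) — these are not backward arcs, but since $\delta_{i,\ell,t}$ is a sink they are dead ends and cannot participate in any $z_s$--$z_t$ walk; once that is noted your case analysis is exhaustive and matches the paper's.
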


    \begin{proof}
      Our instance $(D', K')$ in \autoref{thm:w1h_ch} was carefully constructed so
      that $D^\beta$ was a path with no outgoing arc; we used this extensively in the
      proof of \autoref{lem:slivkins_ab_hole}.
      This is no longer true for $(D, K)$, but note that the \textit{unique} outgoing arc is from
      $b_{q,t}$; that is, in order for a path to leave $D^\beta_q$ it must necessarily pass
      through this arc, so it follows that $D^\beta_q$ belongs to every path from $z_s$ to $z_t$.
      Consequently, the path now finds itself at $b_{q,s}$ and, aside from $(a_{q, t},
      a_{q,s})$, the only arcs that leave $D^b_q$ go to $D^\beta_q$, \textit{but the
      latter has already been used by the path we are constructing}, implying that $P_z$
      must also traverse the entire $b$-path of $D_q$ and then leave through $(a_{q, t}, a_{q,s})$.
      This happens again with the $a$- and $\alpha$-paths of $D_q$, with $P_z$ using the
      arc $(g^2{q,s}, g^1_{q-1, t})$ to arrive at $D_{q-1}$; note that every vertex
      reachable by $D_{q-1}$ that is not in $D_{q-1}$ has already been visited by $P_z$,
      so the argument extends all the way to $D_1$, when $P_z$ is finally able to reach $z_t$.
    \end{proof}

    As in the proof of \autoref{thm:c2_hardness_cte_ratio}, we are now free to choose
    $\congestion$ to get whichever fraction of $K$ we desire, and so we obtain the
    following theorem.

    \begin{theorem}
      \label{thm:w1h_congestion}
      \pname{$(k,c)$-DDP} jointly parameterized by the number of requests
      $|K|$ and the minimum clique cover of the host graph is \W[1]-hard even if the input instance has directed pathwidth 2 and the congestion $\congestion$ is a fraction of, but not equal to, $|K|$.
    \end{theorem}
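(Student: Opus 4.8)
The plan is to argue that the construction $(D,K,c)$ described just above is an FPT-reduction from \pname{Multicolored Clique} whose correctness reduces, via the near-Hamiltonian path $P_z$, to the correctness of the reduction behind \autoref{thm:w1h_ch}. First I would record the parameters. Let $(D',K')$ be the instance built in the proof of \autoref{thm:w1h_ch} from a \pname{Multicolored Clique} input $(G,\mathcal{V})$ with $q=|\mathcal{V}|$ colours; by \autoref{obs:slivkins_structure} we have $|K'|=6q+\binom{q}{2}$, the digraph $D'$ is covered by $6q+2\binom{q}{2}$ cliques, and $\dpw(D')=2$. After adding the $c-1$ requests $(z_s,z_t)$ and the extra arcs, we get $|K|=|K'|+(c-1)$, two further trivial cliques cover $\{z_s,z_t\}$, and, as observed in the construction, $\dpw(D)=2$ still. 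To make $c$ a prescribed fraction $|K|/d$ with $d>1$ a fixed integer, note that $c=|K|/d$ is the same as $c(d-1)=|K'|-1$; after padding \autoref{thm:w1h_ch}'s reduction with a bounded number of dummy requests (each being a fresh arc and the request using it), $d-1$ divides $|K'|-1$, so $c=(|K'|-1)/(d-1)$ is a positive integer with $c=|K|/d<|K|$. Everything is polynomial-time computable and both $|K|$ and the clique-cover number of $D$ are bounded by a function of $q$, so it is an FPT-reduction and only correctness remains.

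For the forward direction I would take a multicolored clique of $G$, feed it to \autoref{lem:slivkins_forward} to obtain a pairwise vertex-disjoint solution $\mathcal{P}'$ of $(D',K')$, which is also a disjoint path family in $D$, and add $c-1$ copies of $P_z$. Every vertex of $P_z$ is then used $c-1$ times by these copies plus at most once by $\mathcal{P}'$, hence at most $c$ times, and every vertex off $P_z$ --- exactly the $\delta$ vertices --- is used at most once; so the result is a congestion-$c$ solution of $(D,K,c)$.

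For the backward direction I would start from a congestion-$c$ solution $\mathcal{P}$ of $(D,K,c)$. By the Observation preceding the statement, every $(z_s,z_t)$-path of $\mathcal{P}$ equals $P_z$, so the $c-1$ of them occupy each vertex of $V(P_z)=V(D)\setminus\{\delta\text{ vertices}\}$ exactly $c-1$ times; hence the sub-family $\mathcal{P}''\subseteq\mathcal{P}$ satisfying $K'$ uses each non-$\delta$ vertex at most once and each $\delta$ vertex at most once (being a $K'$-terminal), so $\mathcal{P}''$ is vertex-disjoint. The key remaining point is that no path of $\mathcal{P}''$ can use any arc of $D$ that is not in $D'$ --- namely $(z_s,g^1_{q,t})$, $(g^2_{1,s},z_t)$, and $(b_{i,t},b_{i,s})$, $(a_{i,t},a_{i,s})$, $(g^1_{i,s},\alpha_{i,s})$, $(g^2_{i,s},g^1_{i-1,t})$ for $i\in[q]$: each endpoint of such an arc is either a $K'$-terminal lying on $P_z$, and therefore already used $c-1$ times by the copies of $P_z$ and once by its own request, or one of $z_s,z_t$, which is not a $K'$-terminal and has a single incident arc whose other endpoint is exactly such a saturated terminal; in all cases no $K'$-path can traverse the arc. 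Thus $\mathcal{P}''$ is a disjoint solution of $(D',K')$, and \autoref{lem:slivkins_backward} returns a multicolored clique of $G$.

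The step I expect to be the real work is this last one: the extra arcs are precisely the ones that stitch the $x$-paths of consecutive blocks into $P_z$, and faithfulness of the reduction back to $(D',K')$ hinges on showing these arcs are invisible to the $K'$-paths --- which holds exactly because they connect terminals whose entire congestion budget of $c$ is eaten by the $c-1$ copies of $P_z$ together with the unique request incident to the terminal. The rest (the arithmetic making $c=|K|/d$, the clique-cover and directed-pathwidth bookkeeping, and the forward direction) is a routine reprise of \autoref{thm:c2_hardness_cte_ratio} and \autoref{thm:w1h_ch}.
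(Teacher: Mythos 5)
Your argument follows the same route as the paper: build the near-Hamiltonian path $P_z$, invoke the Observation showing $P_z$ is the unique $(z_s,z_t)$-path, and then reduce correctness to the uncongested reduction of \autoref{thm:w1h_ch}. The paper itself compresses the correctness argument into a one-line appeal to \autoref{thm:c2_hardness_cte_ratio}, whereas you spell out the saturation argument for the backward direction and the arithmetic padding for $c=|K|/d$ — both correct and both consistent with what the paper implicitly relies on.
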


\section{Algorithmic results}
\label{sec:algo}

In this section we prove our main algorithmic result and discuss its applications.
\begin{theorem}\label{thm:FPT-large-congestion}
  For all integers $k \geq 1$ and $h \geq 0$, there is a function $f(k,h)$ such
  that every instance of $(k,c)$-\textsc{DDP} with $2c > k$ on an
  $h$-semicomplete digraph $D$ containing a $f(k,h)$-triple has an irrelevant
  vertex which can be found in $\Ocal(f^2(k) \cdot n^2)$ time.
  In particular, $f(k,0) = \Ocal(2^{k\cdot \log k})$.
\end{theorem}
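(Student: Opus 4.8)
\medskip
\noindent
The plan is a win/win-style irrelevant-vertex argument. If $(D,K,c)$ is a \textsc{no}-instance then every vertex is irrelevant and we are done, so assume it is a \textsc{yes}-instance and, for the correctness analysis only, fix a solution $\mathcal{P}$ minimizing $\sum_{P\in\mathcal{P}}|V(P)|$. Minimality forces every $P\in\mathcal{P}$ to be chordless --- an arc joining two non-consecutive vertices of $P$ in the direction of $P$ would shorten the family --- and, more generally, admits no length-decreasing rerouting of one (or of a few) of its paths. Write $\mathcal{K}=(A,B,C)$ for the given triple, with orderings $(a_1,\dots,a_N)$, $(b_1,\dots,b_N)$, $(c_1,\dots,c_N)$ and $N=f(k,h)$; by \autoref{def:k-triple}, $A$ is complete to $B$, $B$ is complete to $C$, and $\Mat$ is a perfect matching from $C$ to $A$. (These completeness relations inside $\mathcal{K}$ hold verbatim for every $h\ge 0$; only the internal structure of $A,B,C$, the off-matching arcs from $C$ to $A$, and the exterior neighbourhoods of $\mathcal{K}$ are merely $h$-semicomplete.) Crucially, the algorithm will \emph{not} compute $\mathcal{P}$: it will only compute, in $\Ocal(n^2)$ time, two matchings between the exterior of $\mathcal{K}$ and $\mathcal{K}$, and from them select one vertex $v\in B$ depending only on $D$, $K$, $c$ and $\mathcal{K}$; correctness then reduces to showing that $\mathcal{P}$ can always be rerouted so as to avoid $v$.

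\medskip
\noindent\textbf{Structure of $\mathcal{P}$ inside $\mathcal{K}$.} Chordlessness together with ``$A$ complete to $B$'' and ``$B$ complete to $C$'' rigidly constrains how a path can traverse $\mathcal{K}$: whenever a path $P$ meets a vertex of $B$ strictly before a vertex of $C$ those two vertices must be consecutive on $P$ (otherwise the arc between them is a forward chord), and symmetrically for an $A$-vertex preceding a $B$-vertex; hence $P$ makes at most one $A\!\to\!B$ transition and at most one $B\!\to\!C$ transition, and any maximal run of $P$ inside $B$ not entered from $A$ is entered from outside $\mathcal{K}$ --- and, since $\Mat$ is present and $\mathcal{K}$ is large, a run of length more than $3$ can be shortcut through a spare matched pair via $b\to c_\ell\to a_\ell\to b'$. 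For $b\in B$ let $S(b)\subseteq[k]$ be the set of paths of $\mathcal{P}$ through $b$, and call $b$ \emph{saturated} when $|S(b)|=c$.

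\medskip
\noindent\textbf{Core lemma: few saturated vertices.} The heart of the proof is to bound the number of saturated vertices of $B$ (in the part of $\mathcal{K}$ relevant to the reroute) by $\Ocal\!\left(\binom{k}{c}\right)$, and this is exactly where the hypothesis $2c>k$ is used: for saturated $b,b'$ we have $|S(b)|+|S(b')|=2c>k$, so $S(b)\cap S(b')\neq\emptyset$, i.e.\ any two saturated $B$-vertices lie on a common path; and if there were more than $\binom{k}{c}$ of them, two would share the \emph{same} signature, hence the very same $c$ paths. Since a single rerouting of one such path through another $B$-vertex merely swaps one $B$-vertex for another and does not shorten $\mathcal{P}$, one must construct \emph{two} shortcuts at once --- substituting two currently-unsaturated $B$-vertices (abundant because $\mathcal{K}$ is large) into two shared paths simultaneously so that the total length strictly decreases, contradicting minimality. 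Running this double-shortcut argument over the $\Ocal(\binom{k}{c})$ signatures and the constantly many relative orders of a pair of shared paths yields the bound. I expect this step --- performing two simultaneous reroutings while preserving congestion $c$ everywhere --- to be the main obstacle.

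\medskip
\noindent\textbf{Certifying the irrelevant vertex.} Since only $\binom{k}{c}\cdot\mathrm{poly}(k)<N$ vertices of $B$ are saturated (and, analogously, only boundedly many vertices of $A$ and of $C$ are ``blocked''), and since at most $k$ paths of $\mathcal{P}$ meet $\mathcal{K}$, each entering through $N^-(\mathcal{K})$ and leaving through $N^+(\mathcal{K})$, the algorithm computes in $\Ocal(n^2)$ time a maximum matching $M^-$ from $N^-(\mathcal{K})$ into $A$ and a maximum matching $M^+$ from $C$ into $N^+(\mathcal{K})$ (the internal steps $A\to B$ and $B\to C$ come for free, being complete bipartite up to the $\le h$ exceptions). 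Concatenating $M^-$, the complete bipartite core $A\to B\to C$, and $M^+$, I would then show that the $\le k$ paths meeting $\mathcal{K}$ can be rerouted all at once and while respecting congestion $c$ --- this is where the extra slack guaranteed by $2c>k$, together with the matchings entering and leaving $\mathcal{K}$, is needed --- so that collectively they use only a fixed sub-triple $\mathcal{K}_0\subseteq\mathcal{K}$ whose size is $2^{\Ocal(k\log k)}$ when $h=0$, the $k^{\Ocal(k)}$ factor coming from enumerating the assignment of the $\le k$ paths to entry/exit slots and their orders. Choosing $f(k,h)$ larger than $|\mathcal{K}_0|$, any vertex $v\in B\setminus\mathcal{K}_0$ picked deterministically from the orderings is irrelevant: the reroute turns $\mathcal{P}$ into a solution of $(D,K,c)$ avoiding $v$, while deleting $v$ can never create solutions. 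The running time is dominated by the two matching computations and scans of $\mathcal{K}$, namely $\Ocal(f^2(k)\cdot n^2)$, and for $h=0$ this gives $f(k,0)=\Ocal(2^{k\log k})$.
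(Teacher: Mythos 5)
Your proposal correctly captures the opening moves of the paper's argument: you fix a minimum-total-length solution $\mathcal{P}$, observe that minimality forbids length-decreasing reroutings, and you rightly place the weight on the \emph{core lemma} that only $\Ocal\bigl(\binom{k}{c}\bigr)$ vertices of $B$ can be saturated, using exactly the pigeonhole consequence of $2c>k$ (any two saturated $B$-vertices share a path) and the necessity of building two shortcuts simultaneously (one temporarily overshoots congestion, the second repairs it). This matches the paper's Lemma~\ref{lem:bounded-congested-in-B} both in spirit and in mechanism, and you are right that this is where the hypothesis $2c>k$ is consumed.

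The gap is in your final ``certifying the irrelevant vertex'' step, and it is not cosmetic. You propose to compute a matching $M^-$ from $N^-(\mathcal{K})$ into $A$ and a matching $M^+$ from $C$ into $N^+(\mathcal{K})$, and to reroute all $\leq k$ paths through $M^-$, the complete core $A\to B\to C$, and $M^+$. But the hard case is precisely the one this scheme cannot treat: a path may enter some $b\in B$ directly from a vertex $v\notin V(\mathcal{K})$ without touching $A$ first, and $v$ may have \emph{no} arc at all into $A$ (semicompleteness only forces an arc between $v$ and each $a\in A$, and it may be oriented $A\to v$). So there is no way to divert such a path through an $M^-$ edge into $A$, and your proposed reroute simply isn't available. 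The paper's resolution is substantially more delicate and does not involve matchings between the exterior of $\mathcal{K}$ and $A$ or $C$: it classifies exterior in-neighbours of $B$ by their out-degree into $B$ (sets $R_b$ and $S_b$), builds an auxiliary \emph{semicomplete} digraph $\Gamma$ on $B$ encoding when one ``small-out-degree'' neighbourhood $S_b$ can be pushed onto another $S_{b'}$ via an arc of $T$, finds high-out-degree vertices of $\Gamma$ to form the set $X$, and only then constructs a matching $Y_0$ from $\bigcup_b S_b$ \emph{into} $B$ (not into $A$) along which a path $v\to b$ can be rerouted as $v\to v_j\to b_j\to u\to \Mat(u)\to b$. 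A symmetric second round with $\Gamma'$ and $Y'$ handles paths leaving $b$ outside of $C$. Your sketch also leans on full chordlessness (``at most one $A\to B$ transition''), which does not account for the congestion-constrained notion of shortcut the paper needs and does not by itself bound the number of separate visits a path makes to $B$; the paper derives the bound $|V(P_i)\cap B|\leq 4$ (Corollary~\ref{cor:bounded-paths-in-B}) only after Lemma~\ref{lem:bounded-intersection-in-A-C}, which in turn rests on the core lemma. In short: the first two-thirds of your plan is aligned with the paper, but the reroute-through-$A$-and-$C$ matchings in the last step would fail on exactly the paths the argument must control, and the machinery needed there ($\Gamma$, $\Gamma'$, $Y_0$, $Y'$, two rounds of reroutings) is a genuinely different and essential layer of the proof.
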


Fomin and Pilipzcuk~\cite[Theorem 6.3]{FominP19} showed that \textsc{Directed Edge-Disjoint Paths} is \FPT  on tournaments parameterized by the number of paths.
They solve the problem with a win-win approach.
First they show that the problem is \FPT parameterized by the directed pathwidth $\dpw(T)$ of the input semicomplete digraph $T$ plus the number $k$ of paths.
Then, through a series of results and constructions, they show that there is a computable function $g(k)$ such that every semicomplete digraph $T$ with $\dpw(T) \geq g(k)$ contains a $k$-triple, and that every such triple contains a vertex $v$ that is irrelevant for the given instance.
This second proof essentially shows that only the instances with $\dpw(T)$ bounded by some function of $k$ need to be solved.

Although they leave open whether \textsc{Directed Disjoint Paths} is \FPT on semicomplete digraphs, they mention that an \FPT dynamic programming algorithm, parameterized by the number of paths, can be done for \textsc{Directed Disjoint Paths} to obtain a result analogous to \cite[Theorem 6.3]{FominP19}.

Another approach to obtain such an \FPT algorithm relies on \emph{directed cliquewidth}, which is bounded from above by $\dpw(T) + 2$ on every semicomplete digraph $T$~\cite[Lemma 2.14]{FominP19}.
Then one can apply \cite[Theorem 2.16]{FominP19} which states that, given an MSO$_1$ formula $\phi$ and a semicomplete digraph $T$, there is an algorithm  which checks whether $\phi$ is satisfied by $T$ in \FPT time parameterized by the order of $\phi$ and $\dpw(T)$.
This implies the existence of an \FPT algorithm on semicomplete digraphs for \textsc{Directed Disjoint Paths} parameterized by the number of paths, since this problem can be modeled by MSO$_1$ (see, for instance, \cite[Proposition 4.7]{GANIAN201488}).
In any case, the following is obtained.

\begin{proposition}[Fomin and Pilipzcuk~\cite{FominP19}]\label{prop:ddp-fpt-fomin-pilipzcuk}
{\sc Directed Disjoint Paths} on semicomplete digraphs is \FPT parameterized by the number of paths and the directed pathwidth of the input semicomplete digraph.
\end{proposition}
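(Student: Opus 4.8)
The plan is to take the second route sketched above: rather than designing an explicit dynamic-programming table along a directed path decomposition (which is possible, as claimed in~\cite{FominP19}, but is fiddly because in a directed path decomposition arcs may run backwards over arbitrarily many bags, so a forgotten vertex may still receive arcs from later vertices), reduce the problem to an MSO$_1$ model-checking question and invoke the metatheorem for semicomplete digraphs recalled in the excerpt. Concretely, the argument has three steps: (Step~1) express $k$-\pname{DDP} as a single MSO$_1$ sentence whose size depends only on $k$; (Step~2) observe that the input semicomplete digraph, together with its terminals, has directed cliquewidth bounded by a function of $k+\dpw(T)$; (Step~3) run the MSO$_1$ model-checking algorithm.

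For Step~1, fix $k$ and work over the vocabulary consisting of the arc relation $E(\cdot,\cdot)$ together with $2k$ unary predicates $S_1,T_1,\dots,S_k,T_k$ marking the terminals. The sentence $\phi_k$ existentially quantifies $k$ vertex sets $P_1,\dots,P_k$ and asserts: (a)~the $P_i$ are pairwise disjoint; (b)~the unique vertex of $S_i$ and the unique vertex of $T_i$ both lie in $P_i$; and (c)~inside $T[P_i]$, the vertex of $S_i$ reaches the vertex of $T_i$. Reachability ``inside $P$'' is MSO$_1$-expressible in the standard way: $s$ reaches $t$ inside $P$ iff there is no set $X$ with $s\in X\subseteq P$, $t\notin X$, and $X$ closed under taking out-neighbours that lie in $P$. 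The observation that makes this sound, and that avoids having to quantify an explicit ``path'' object, is that $k$ pairwise vertex-disjoint $s_i$--$t_i$ paths exist \emph{if and only if} there exist pairwise disjoint vertex sets $P_i$ inside which $s_i$ reaches $t_i$: from such sets one extracts an arbitrary $s_i$--$t_i$ path inside each $T[P_i]$, and disjointness of the $P_i$ yields disjointness of the paths; conversely take $P_i=V(Q_i)$. Thus $|\phi_k|$ is a computable function of $k$ alone.

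For Steps~2 and~3, recall from~\cite[Lemma 2.14]{FominP19} that every semicomplete digraph $T$ has directed cliquewidth at most $\dpw(T)+2$, with a clique-expression of that width computable in polynomial time from a width-$\dpw(T)$ path decomposition; distinguishing the $2k$ terminals with fresh labels keeps the width bounded by a function of $k+\dpw(T)$, since marking a bounded number of vertices affects the cliquewidth by at most a multiplicative factor depending only on that number. Applying the MSO$_1$ model-checking metatheorem~\cite[Theorem 2.16]{FominP19} to $\phi_k$ and the labeled semicomplete digraph $(T;S_1,\dots,T_k)$ then decides $T\models\phi_k$ -- equivalently, the given instance of $k$-\pname{DDP} -- in time $f(k,\dpw(T))\cdot n^{\Ocal(1)}$, which is exactly the claimed bound; the same conclusion follows just as well from the dynamic-programming algorithm asserted in~\cite{FominP19}, or from the MSO$_1$ modelling of directed disjoint paths in~\cite[Proposition 4.7]{GANIAN201488} combined with the cliquewidth bound. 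There is no deep obstacle here: the only points requiring care are writing $\phi_k$ correctly -- in particular exploiting that pairwise-disjoint \emph{reachability sets} already certify pairwise-disjoint paths -- and checking that marking the $\Ocal(k)$ terminals does not spoil the directed-cliquewidth bound; the remainder is an assembly of ingredients already recalled above.
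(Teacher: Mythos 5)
Your proof is correct and follows essentially the same route the paper sketches: bound the directed cliquewidth of a semicomplete digraph by $\dpw(T)+2$ via~\cite[Lemma 2.14]{FominP19}, express $k$-\textsc{DDP} in MSO$_1$ (the paper points to~\cite[Proposition 4.7]{GANIAN201488}, you spell out the encoding via disjoint reachability sets, which is the standard trick), and invoke the MSO$_1$ model-checking metatheorem~\cite[Theorem 2.16]{FominP19}. The only thing you add is a careful justification of the MSO$_1$ formula and of the effect of marking terminals on cliquewidth, both of which the paper leaves implicit.
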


Applying classical techniques, one can show easily show that \autoref{prop:ddp-fpt-fomin-pilipzcuk} also implies an \FPT algorithm for $(k,c)$-\textsc{DDP}.
Indeed, given an instance of this problem on a semicomplete digraph $T$, it suffices to add copies $v_2, \ldots, v_c$ of each $v \in V(T)$ with the same in- and out-neighborhood as $v$. 
Then, we add arcs among vertices in $\{v, v_2, \ldots, v_c\}$ in order to ensure that this set induces an acyclic tournament.
This procedure easily implies that the directed pathwidth of $T$ increases by at most a factor of $c$. Hence we can apply \autoref{prop:ddp-fpt-fomin-pilipzcuk} to solve $k$-\textsc{DDP} in the newly constructed digraph and transport any solution, whether positive or negative, to the original instance in $T$.
This implies that $(k,c)$-\textsc{DDP} is also \FPT parameterized by $k$ and the directed pathwidth of the input digraph.

When applying their irrelevant vertex rule, Fomin and Pilipzcuk~\cite{FominP19} use a series of results to construct in \FPT time a triple of large order on semicomplete digraphs of sufficiently large directed pathwidth.
Namely, given a semicomplete digraph $T$ and an integer $k$, they first apply \cite[Theorem 4.12]{FominP19} to either produce a directed path decomposition of width bounded by some computable function $g(k)$ or find one of the two certificates that $\dpw(T) > k$: a \emph{degree tangle} or a \emph{matching tangle}.
Then, they show how to produce a \emph{short jungle} from any of those two objects in \cite[Lemmas 3.9 and 3.12]{FominP19}.
We remark that all these constructions run in polynomial time.
They argue that although the proof of how to extract triples from short jungles by Fradkin and Seymour~\cite{FradkinS13} is not explicitly algorithmic, it is easy to extract an algorithm from it.
In any case,  the proof of~\cite[Theorem 9.9]{FominP19} shows how to circumvent the lack of an explicit algorithmic result to obtain triples from jungles, essentially obtaining the following result one way or another when applying the other results mentioned in this paragraph.

\begin{proposition}[Fomin and Pilipzcuk~\cite{FominP19}]\label{prop:path-decomposition-or-triple}
Let $T$ be a semicomplete digraph.
There is a computable function $g(t)$ and an \FPT algorithm parameterized by $t$ that either outputs a directed path decomposition of $T$ with width at most $g(t)$ or finds a $t$-triple in $T$.
\end{proposition}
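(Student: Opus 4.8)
The plan is to assemble this statement by chaining together three ingredients of Fomin and Pilipczuk~\cite{FominP19}, each of which is already algorithmic and runs in \FPT time (indeed, polynomial time for the first two steps) parameterized by $t$. In essence, the proposition is a repackaging of their win/win machinery for directed pathwidth on semicomplete digraphs, stated in the precise form that the irrelevant vertex argument of \autoref{sec:algo} needs.

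First, I would run the algorithm of \cite[Theorem 4.12]{FominP19} on the input semicomplete digraph $T$ with a threshold chosen as a function of $t$. This algorithm either outputs a directed path decomposition of $T$ of width bounded by some computable function of $t$ --- in which case the first alternative of the proposition is already witnessed --- or it produces a certificate that $\dpw(T)$ exceeds the chosen threshold, namely a \emph{degree tangle} or a \emph{matching tangle}, each of size bounded by a computable function of $t$ and found in polynomial time.

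Second, I would convert whichever obstruction was produced into a \emph{short jungle}: a degree tangle is turned into a short jungle via \cite[Lemma 3.9]{FominP19}, and a matching tangle via \cite[Lemma 3.12]{FominP19}; both conversions run in polynomial time and yield a structure whose size is bounded in terms of $t$. Third, I would extract a $t$-triple from the short jungle. Its existence is guaranteed by the analysis of Fradkin and Seymour~\cite{FradkinS13}; the only algorithmic subtlety is that their argument is existential rather than constructive. I would resolve this exactly as in the proof of \cite[Theorem 9.9]{FominP19}: since the jungle has size bounded by a computable function of $t$, one can search it by brute force (or invoke the self-reduction/circumvention described there), so the entire pipeline stays \FPT parameterized by $t$. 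Finally, taking $g(t)$ to be the width bound of \cite[Theorem 4.12]{FominP19} instantiated at the threshold for which the obstruction-to-jungle-to-triple chain goes through yields the claimed function and completes the proof.

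The step I expect to be the main obstacle is this last one: the passage from short jungles to $t$-triples in \cite{FradkinS13} is non-constructive, so some care --- either a bounded brute force exploiting the size bound on the jungle, or the circumvention argument of \cite[Theorem 9.9]{FominP19} --- is needed to make it algorithmic without harming the running time. The first two steps, by contrast, are already spelled out algorithmically in \cite{FominP19} and can be invoked essentially verbatim.
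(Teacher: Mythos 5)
Your proposal matches the paper's own justification exactly: the paper's preceding paragraph describes the same three-step pipeline (apply \cite[Theorem 4.12]{FominP19} to get either a path decomposition or a degree/matching tangle, convert the tangle to a short jungle via \cite[Lemmas 3.9 and 3.12]{FominP19}, then extract a triple from the jungle), and it flags the same algorithmic subtlety in the Fradkin--Seymour extraction, resolved via the circumvention in \cite[Theorem 9.9]{FominP19}. Nothing to add.
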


In light of \cref{prop:path-decomposition-or-triple,prop:ddp-fpt-fomin-pilipzcuk}, it is now easy to apply \autoref{thm:FPT-large-congestion} to obtain an \FPT algorithm for $(k,c)$-\textsc{DDP} when $2c > k$.
We restate \autoref{thm:FPTinSemicomplete} for convenience.

\FPTinSemicomplete* 

\begin{proof}
Let $\mathcal{I}$ be an instance of $(k,c)$-\textsc{DDP} on a semicomplete digraph with $2c > k$.
Let $f(k,h)$ be the function as in the statement of \autoref{thm:FPT-large-congestion}, and $f(k) = f(k, 0)$.
Applying \autoref{prop:path-decomposition-or-triple} with input $T$ and $t = f(k)$, we either receive a directed path decomposition of $T$ with width at most $g(f(k)))$  or an $f(k)$-triple in $T$.
In the first case, we solve the instance applying \autoref{prop:ddp-fpt-fomin-pilipzcuk} (we remind the reader of the discussion immediately after this proposition).
In the second case, we apply \autoref{thm:FPT-large-congestion} to find a vertex $v \in V(T)$ that can deleted without changing the answer to $\mathcal{I}$.

After deleting an irrelevant vertex, we apply \autoref{prop:path-decomposition-or-triple} again.
Each application of this proposition that yields the second possible output results in the deletion of one vertex from $T$.
Thus, after at most $|V(T)|$ iterations the first output is guaranteed to occur, and we solve the problem applying \autoref{prop:ddp-fpt-fomin-pilipzcuk}.
\end{proof}

\subsection{Irrelevant vertices in
\texorpdfstring{$k$}{k}-triples}\label{sec:irrelevant-vertex-in-triples}
This section is dedicated to the proof of \autoref{thm:FPT-large-congestion}.

We first prove the version of \autoref{thm:FPT-large-congestion} restricted to semicomplete digraphs, and at the very end of this section we explain which are the (easy) changes to be made to the proof so that it holds for any $h \geq 1$. We abbreviate $f(k,0)$ by $f(k)$ and assume
that an $f(k)$-triple is given.
The choice of $f(k)$ is discussed later.
Let $\mathcal{I} = (T, K, c)$ be an instance of $(k,c)$-\textsc{DDP} on a
semicomplete digraph $T$ with $2c > |K| = k$.
For the remainder of this section, as well as in \autoref{sec:free_b}, \autoref{sec:free_ac}, and \autoref{sec:finding-irrelevant-vertex}, we assume that $\mathcal{K}$ is an
$f(k)$-triple $(A,B,C)$ of $T$ and that $\mathcal{P}$ is a set of paths satisfying the
following property.
\begin{property}
  \label{prop:minimizing-for-solution}
  $\mathcal{P}$ is a solution $\{P_1, \ldots, P_k\}$ for $\mathcal{I}$ minimizing
  $\sum_{i \in [k]}|V(P_i)|$.
\end{property}

Since congestion is allowed, any vertex is counted in the summation as many
times as it is used by a path.
This per-path measure is necessary as we often shortcut (we formally define shortcuts
later in this section) a path $P \in \mathcal{P}$ through a vertex that is used by other
paths in the collection
$\mathcal{P}$.

To avoid some technicalities, here we make two assumptions about $\mathcal{K}$.\
First, by discarding at most $2k$ vertices of each set $A, B$, and $C$, we assume that the terminals of $\mathcal{I}$ do not appear in $V(\mathcal{K})$.
Second, by deleting parallel arcs of $T$ if necessary, we assume that there are no arcs in $E(\mathcal{K})$ from $C$ to $B$ and no arcs from $B$ to $A$.
The removal of those arcs poses no issue for the irrelevant vertex argument since any irrelevant vertex of $T' \subseteq T$ is also irrelevant in $T$.

The set $B$ plays as distinguished role in this proof since it concentrates the heads of all arcs of $\mathcal{K}$ between $A$ and $B$ and the tails of all arcs of $\mathcal{K}$ between $B$ and $C$.
The goal is to show that there is room to reroute paths in $\mathcal{K}$ in order to show that if $\mathcal{I}$ is a  {\sf yes}-instance, then there is a vertex $b \in B$ and a solution $\mathcal{Q}$ for $\mathcal{I}$ such that every $Q \in \mathcal{Q}$ accessing $b$ does so from a vertex in $A$ and to a vertex in $C$.


We follow the blueprint of the proof for \textsc{Directed Edge Disjoint Paths} on semicomplete
digraphs by Fomin and Pilipzcuk~\cite{FominP19}.
A fundamental property in their case is that any vertex with out-degree (resp. in-degree) at
least $k+1$ in the given triple has at least one arc leaving (resp. entering) it that is not
used by any solution $\mathcal{Q}$ minimizing the sum of the lengths of its paths.
This property allows them to prove that every such solution cannot use more than
two arcs of the matching from $C$ to $A$.
This, in turn, is used to show that every path in $\mathcal{Q}$ uses at most $2k+4$ vertices from
$A$ and from $C$, and at most $4k$ vertices of $B$.
The hard part of their proof is about how to apply these results to prove that an irrelevant
vertex is guaranteed to exist in a sufficiently large triple, and how to find it in polynomial time.

In our case, the existence of a vertex $v \in V(\mathcal{K})$ with large out-degree or
in-degree in the triple does {\sl not} guarantee that we can find a vertex in $N^+(v) \cup
N^-(v)$ that is not used by a path of $\mathcal{P}$.
In fact, in many places of our proof we do not guarantee that at all, since we produce
shortcuts for paths in $\mathcal{P}$ through vertices that have been used by other paths
of $\mathcal{P}$ while but \textsl{at most} $c-1$ of them, hence being careful not to exceed the allowed congestion on each vertex.
The fundamental property that we use comes from the assumption that $2c > k$: if $u,v$
are both used by $c$ paths of $\mathcal{P}$, then by the pigeonhole principle there is a path $P \in \mathcal{P}$ that
uses \textsl{both} $u$ and $v$.
The need to rely on this property is justified by the counterexample provided in  \autoref{sec:counterexample}, which shows that no irrelevant vertex is guaranteed
to exist in arbitrarily large triples whenever $2c \leq k$, and makes our analysis
significantly harder than the one in~\cite{FominP19}.
For example, in order to show that a bounded number of vertices of $A$ and $C$ are used
by paths of $\mathcal{P}$ (\autoref{lem:bounded-intersection-in-A-C}), we first need to show that some vertices of $B$ can be
used for shortcuts (\autoref{lem:B-xor-free-pair}) and these two proofs are already much harder than their counterparts
in~\cite{FominP19}.


\medskip
In addition to the definitions introduced in the beginning of this section, we adopt the following notation.
We denote by $\Mat$ the arcs of the matching from $C$ to $A$ and, as discussed
right below \autoref{def:k-triple}, we may also
refer to $\Mat$ as a bijective mapping from $C$ to $A$ according to the arcs of
the matching defining $\mathcal{K}$.
In addition, for $C' \subseteq C$ (resp. $A' \subseteq A$) we call the set $\Mat(C') =
\{\Mat(v) \mid v \in C'\}$ (resp. $\Mat^{-1}(A') = \{\Mat^{-1}(u) \mid u \in A'\}$) the
\emph{mirror} of $C'$ (resp. of $A'$) in~$\mathcal{K}$.

For $P_i \in \mathcal{P}$ we denote by $\prec_{i}$ the order in which $V(P_i)$ appears in $P_i$.
A \emph{shortcut} for $P_i$ is a walk $R$ such that there is some subpath $R' \subseteq P_i$
where $|V(R)| < |V(R')|$ and $R$ starts and ends in the same vertices as $R'$.
See
\autoref{fig:typical-shortcut-example} for
examples of shortcuts.
We may refer to a shortcut $R$ as the sequence $v_1 \to v_2 \to \ldots \to v_m$
of the vertices of $R$ as they appear in the walk.
The existence of a shortcut for any path in $\mathcal{P}$ respecting the congestion
measure contradicts \autoref{prop:minimizing-for-solution},
and the goal of the first part of the proof of
\autoref{thm:FPT-large-congestion} is to exploit
this fact to prove useful properties on how the paths of $\mathcal{P}$ can
intersect an $f(k,h)$-triple.
To avoid repetition, we refrain from stating that these shortcuts lead to
contradictions in the proofs of this section.
Note that simply replacing $R'$ with $R$ may not lead to a minimal solution: there could be forward arcs that further shorten the resulting object, or cycles could be introduced, depending on the visiting order of the involved vertices. For example, in~\autoref{fig:typical-shortcut-example}, it could be the case that the vertex of $C$ appeared in the depicted path \emph{after} the vertex of $A$, but with the addition of the blue segment we now have a cycle.

\begin{figure}[h!]
\centering
  \begin{tikzpicture}
  \node[blackvertex] (x) at (0,0) {};
  \node[blackvertex] (v) at ($(x) + (5,0)$) {};
  \draw[arrow] (x) -- node[blackvertex, scale=.8, pos = 1/6] (x1) {}
  node[blackvertex, scale=.8, pos = 2/6] {}
  node[blackvertex, scale=.8, pos = 3/6] {}
  node[blackvertex, scale=.8, pos = 4/6] {}
  node[blackvertex, scale=.8, pos = 5/6] (v1) {} (v);
  \draw[arrow, thick, goodblue] (x1) to [bend left=30] (v1);

  \begin{scope}[xshift=8cm, yshift = -2cm]
  \node[blackvertex] (b1) at (0,0) {};
  \node[blackvertex] (b5) at ($(b1) + (0,4)$) {};
  \draw[ultra snake, thick, arrow, dashed, shorten >= 0pt] (b1) -- 
  node[blackvertex, scale =.8, pos = 1/4, yshift = -7pt] {}
  node[blackvertex, scale =.8, pos = 2/4, yshift = -4pt] (mid) {}
  node[blackvertex, scale =.8, pos = 3/4] {}
  (b5);

  \node[rectangle, draw, fit=(b1)(b5), label=90:$B$] {};

  \node at ($(mid) + (-1, 0)$) {\Large$P_i$};
  \node (w1) at ($(b1) + (-2,0)$) {};
  \node (w2) at ($(b5) + (-2,0)$) {};
  \draw[arrow] (w1) -- (b1);
  \draw[arrow] (b5) -- (w2);

  \node[blackvertex] (u) at ($(b5) + (3.5, -1)$) {};
  \node (c1) at ($(u) + (-1,0)$) {};
  \node (c2) at ($(u) + (1,0)$) {};
  \node[rectangle, draw, fit=(c1)(c2), label=0:$C$] {};

  \node[blackvertex] (v) at ($(u) + (0, -2)$) {};
  \node (a1) at ($(v) + (-1,0)$) {};
  \node (a2) at ($(v) + (1,0)$) {};
  \node[rectangle, draw, fit=(a1)(a2), label=0:$A$] {};

  \draw[thick, goodblue, arrow] (b1) to [bend right=10] (u);
  \draw[thick, goodblue, arrow] (u) -- (v);
  \draw[thick, goodblue, arrow] (v) to [bend right = 20] (b5);

  \end{scope}
  \end{tikzpicture}
  \caption{Examples of shortcuts.
    In both cases, the blue path denotes a shortcut.
    On the right, we give an illustration of a typical shortcut, internal to a
    $k$-triple $(A,B,C)$, that is often built in the proof of
    \autoref{thm:FPT-large-congestion}.
  The dashed subpath of $P_i$ is exchanged by the blue path in the figure.}
  \label{fig:typical-shortcut-example}
\end{figure}
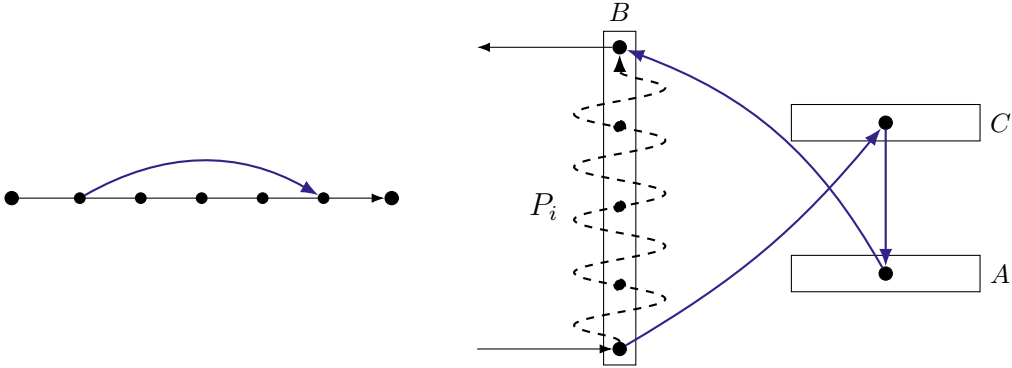

For every $v \in V(\mathcal{K})$, we assign to $v$ a list of indices
$\List(v) \subseteq 2^{[k]}$ representing which paths of $\mathcal{P}$ are
using $v$.
Thus $|\List(v)| \leq c$ holds for any vertex $v$ since $\mathcal{P}$ is a solution for
$\mathcal{I}$.
For $i \in [k]$, we say that $v$ is $\emph{$i$-free}$ if $i \in \List(v)$ or if
$|\List(v)| \leq c - 1$, and that $v$ is \emph{$i$-saturated} otherwise.
Intuitively, an $i$-free vertex can be used to construct shortcuts for path
$P_i$ since either $P_i$ is already using $v$ and thus the route may be replaced with a shorter one without increasing the congestion of $v$, or the congestion of
$v$ is not yet saturated and can be increased by one to generate a better solution.
We denote by $\mathcal{L}$ the set of all possible lists of indices that can be assigned to
a vertex by a solution.
More precisely, $\mathcal{L} = \{L \subseteq 2^{[k]} \mid |L| \leq c\}$.
For $L \in \mathcal{L}$ we denote by $V(L)$ the set $\{v \in V(\mathcal{K})
\mid \List(v) = L\}$ of vertices of $\mathcal{K}$ which are assigned indices in $L$.
Finally, for $u \in C$ and $v = \Mat(u)$ we say that $\{u,v\}$ is an \emph{$i$-free pair}
if both $u$ and $v$ are $i$-free.

\subsection{Freeing \texorpdfstring{$B$}{B}}
\label{sec:free_b}
In this section, we show that only a bounded number of vertices $b \in B$ have $|\List(b)| = c$.
We begin with the following warm up lemma.

\begin{lemma}\label{lem:B-xor-free-pair}
  For all $P_i \in \mathcal{P}$, if $|V(P_i) \cap B| \geq 5$ then there are no
  $i$-free pairs.
\end{lemma}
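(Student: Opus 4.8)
The plan is to argue by contradiction. Suppose some $P_i \in \mathcal{P}$ has $|V(P_i) \cap B| \geq 5$ while there also exists an $i$-free pair $\{u, v\}$ with $u \in C$ and $v = \Mat(u) \in A$; I will build a shortcut for $P_i$ respecting the congestion, contradicting \autoref{prop:minimizing-for-solution}.

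First I would fix five vertices $b_1 \prec_i b_2 \prec_i b_3 \prec_i b_4 \prec_i b_5$ of $V(P_i) \cap B$ and let $R'$ be the subpath of $P_i$ from $b_1$ to $b_5$, so that $|V(R')| \geq 5$. Then I would consider the walk $R \colon b_1 \to u \to v \to b_5$ and check that it is a legal walk in $T$: the arc $(b_1, u)$ exists because $B$ is complete to $C$ in $\mathcal{K}$, the arc $(u, v)$ is exactly the matching arc recorded by $\Mat$, and the arc $(v, b_5)$ exists because $A$ is complete to $B$. Since $|V(R)| = 4 < 5 \leq |V(R')|$, and $u, v \notin B$ so in particular $u, v \notin \{b_1, b_5\}$, the walk $R$ is a genuine shortcut for $P_i$ in the sense of the definition above.

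Next I would check that this shortcut yields a cheaper solution. Replacing $R'$ by $R$ inside $P_i$ produces a walk between the two endpoints of $P_i$, and every walk contains a path between its endpoints; extracting such a path $P_i''$ gives $V(P_i'') \subseteq V(P_i) \cup \{u, v\}$ and $|V(P_i'')| < |V(P_i)|$. For the congestion of $(\mathcal{P} \setminus \{P_i\}) \cup \{P_i''\}$, the only vertices that can gain an occurrence are $u$ and $v$, since every other vertex of $R$ already lies on $P_i$; but $u$ and $v$ are $i$-free, so each is either already used by $P_i$ (and its list of occupying paths is unchanged) or used by at most $c - 1$ paths (and one extra unit of congestion still fits within the budget $c$). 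Hence we obtain a solution for $\mathcal{I}$ of strictly smaller total length, the desired contradiction.

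I expect the main care to be in the congestion bookkeeping: one must verify that extracting a path from a walk never raises the congestion of any vertex, and that no vertex besides $u$ and $v$ becomes newly occupied. It is also worth recording why the constant five is tight for this approach. Because $\mathcal{K}$ carries no arc from $C$ to $B$ and none from $B$ to $A$, one cannot shortcut a stretch of $P_i$ through $B$ using a single vertex of $A \cup C$; the cheapest available detour that touches an $i$-free pair therefore uses the four vertices $b \to u \to v \to b'$, and this beats a subpath of $P_i$ lying in $B$ only once that subpath has at least five vertices.
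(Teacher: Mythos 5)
Your proof is correct and uses the same shortcut $b_1 \to u \to v \to b_5$ as the paper. The paper additionally performs an explicit case analysis on whether $u$ or $v$ already lies on $P_i$ (first deriving $u \prec_i b_2$ and $b_4 \prec_i v$ from simpler shortcuts), which you handle uniformly by extracting a path from the replaced walk and checking $V(P_i'') \subseteq V(P_i) \cup \{u,v\}$ with $|V(P_i'')| < |V(P_i)|$ --- exactly the behavior the paper's remark following the definition of a shortcut anticipates.
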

\begin{proof}
  By contradiction, assume that there is $P_i \in \mathcal{P}$ such that
  $|V(P_i) \cap B| \geq 5$ and there is an $i$-free pair $u,v$ with $u \in C$ and $v = \Mat(u)$.
  Let $\angled{b_1, \ldots, b_5}$ be the first five vertices of $P_i$ that appear in $B$.
  If none of $u,v \not \in V(P_i)$ then we can shortcut the subpath of $P_i$ from $b_1$
  to $b_5$ by following the arcs $b_1 \to u \to v \to b_5$.
  Thus at least one of $u,v$ appears in $P_i$.

  If both $u,v \in V(P_i)$ then $b_4 \prec_i v$, since otherwise we can shortcut
  using $v \to b_5$, and $u \prec_i b_2$, since otherwise $b_1 \to u$ is a
  shortcut.
  As such, we have that $u \prec_i b_2 \prec_i b_4 \prec_i v$, and we shortcut through $b_1 \to u \to v \to b_5$.
  The cases when only one of $u,v$ is not in $P_i$ follow similarly.
\end{proof}

\begin{lemma}\label{lem:bounded-congested-in-B}
  For all $L \in \mathcal{L}$ with $|L| = c$, it holds that $|V(L) \cap B| \leq
  4$.
\end{lemma}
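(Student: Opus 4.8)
The plan is to argue by contradiction. Suppose $L \in \mathcal{L}$ satisfies $|L| = c$ and $|V(L) \cap B| \geq 5$, and fix five distinct vertices $b_1, \dots, b_5 \in V(L) \cap B$ together with an index $i \in L$ (which exists since $c \geq 1$). Because every vertex of $V(L)$ is used by \emph{exactly} the $c$ paths indexed by $L$, the path $P_i$ uses all of $b_1, \dots, b_5$; after relabelling we may assume $b_1 \prec_i b_2 \prec_i b_3 \prec_i b_4 \prec_i b_5$, so the sub-path $R'$ of $P_i$ from $b_1$ to $b_5$ contains $b_2, b_3, b_4$ and hence $|V(R')| \geq 5$. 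On the other hand, in the $f(k)$-triple $A$ is complete to $B$ and $B$ is complete to $C$, so for every $u \in C$ the walk $R \colon b_1 \to u \to \Mat(u) \to b_5$ exists and has only $4$ vertices. Thus $R$ is a shortcut for $P_i$ --- contradicting \autoref{prop:minimizing-for-solution} --- as soon as both $u$ and $\Mat(u)$ are $i$-free, so that grafting $R$ into $P_i$ respects the congestion bound. This is exactly why $5$ is the right threshold: with only four vertices of $V(L)$ in $B$ the corresponding walk $b_1 \to u \to \Mat(u) \to b_4$ would be no shorter than the piece of $P_i$ it replaces.

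By \autoref{lem:B-xor-free-pair}, since $|V(P_i) \cap B| \geq 5$ there is no $i$-free pair, so every matching pair $\{u, \Mat(u)\}$ has an $i$-saturated endpoint, that is, one whose list has size exactly $c$ and misses $i$. (When $c = 1$ there is nothing to do: $2c > k$ forces $k = 1$, every vertex is then $1$-free, every matching pair is a $1$-free pair, and \autoref{lem:B-xor-free-pair} already yields $|V(L) \cap B| \leq |V(P_1) \cap B| \leq 4$; so assume $c \geq 2$.) The core of the proof is to manufacture a \emph{composite improving move}: select a matching pair $\{u, \Mat(u)\}$ together with indices $j \in \List(u) \setminus \{i\}$ and, if needed, $j' \in \List(\Mat(u)) \setminus \{i\}$, reroute $P_j$ off $u$ and $P_{j'}$ off $\Mat(u)$ by \emph{length-neutral} diversions, and then apply the genuine shortcut $b_1 \to u \to \Mat(u) \to b_5$ to $P_i$. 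Such diversions are available because $A$, $B$, $C$ each contain $f(k)$ vertices while only a bounded number of them can be ``blocked'', so there are spare vertices in $A \cup C$ (and $B$) onto which the rerouted pieces can be pushed without exceeding congestion; after the diversions both $u$ and $\Mat(u)$ are $i$-free. The composite move leaves every path other than $P_i$ of the same length and strictly shortens $P_i$, contradicting \autoref{prop:minimizing-for-solution}.

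The enabling combinatorial tool is the pigeonhole principle powered by $2c > k$. Since $f(k) = |A| = |C|$ dwarfs the number $|\mathcal{L}|^2$ of possible pairs of lists, many matching pairs share the same ``type'' $(\List(u), \List(\Mat(u)))$, which lets us fix single indices $j$ (and $j'$) usable across all of them; and because any two lists of size $c$ intersect when $2c > k$, a vertex we wish to divert a path onto is guaranteed to be free for that path. The main obstacle is carrying out the two reroutings \emph{simultaneously and compatibly}: one must choose the matching pair so that the pieces of $P_j$ and $P_{j'}$ around $u$ and $\Mat(u)$ stay inside $V(\mathcal{K})$ (or handle the sub-cases where they leave it), guarantee that the two diversions and the final $P_i$-shortcut are pairwise disjoint on their newly used vertices, and verify that no diversion secretly creates an even shorter walk that should have been excluded. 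Bounding how many matching pairs can fail to admit such a diversion is precisely what pins the constant at $4$; once these technical points are discharged, the contradiction with \autoref{prop:minimizing-for-solution} follows.
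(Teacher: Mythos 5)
Your opening paragraph correctly identifies the target contradiction and why $5$ is the right threshold, and you correctly flag \autoref{lem:B-xor-free-pair} and the $2c>k$ pigeonhole as the key tools. From there, though, the proposal has a genuine gap and in fact departs from what the paper does.

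The centerpiece of your argument is a ``composite improving move'' in which you reroute $P_j$ off $u$ and $P_{j'}$ off $\Mat(u)$ by \emph{length-neutral diversions} and only then apply the shortcut $b_1 \to u \to \Mat(u) \to b_5$ to $P_i$. You never establish that such diversions exist, and it is not clear that they can: moving $P_j$ from $u\in C$ to some substitute $u'\in C$ requires both the predecessor and the successor of $u$ in $P_j$ to be adjacent to $u'$, and nothing in the triple structure (which gives completeness from $A$ to $B$ and $B$ to $C$, plus a single matching from $C$ to $A$) guarantees this. You also need the two diversions to be compatible with each other and with the $P_i$-shortcut, and to stay within congestion, and you explicitly say these are the obstacles but do not discharge them. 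The paper sidesteps length-neutral moves entirely: it first splits on whether some $i\in L$ has $|A_i| = |V(P_i)\cap A| \geq 4c+1$. If not, a counting argument produces a $j$-free pair for some $j\in L$, directly contradicting \autoref{lem:B-xor-free-pair}. If yes, it shows every $u\in C_i=\Mat^{-1}(A_i)$ has $|\List(u)|=|\List(\Mat(u))|=c$, uses $2c>k$ to find a single common index $\ell\in\List(u)\cap\List(\Mat(u))\subseteq[k]\setminus L$ shared by at least five pairs, and then performs \emph{two genuine shortcuts}, not diversions: a shortcut of $P_\ell$ through one of the $b$-vertices that temporarily raises that vertex to congestion $c+1$, followed by a shortcut of $P_i$ through a surviving free pair that removes $i$ from that vertex's list and restores feasibility. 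That transient-overload-then-repair mechanism is what makes the bound work, and your outline does not capture it. Note also that picking possibly distinct $j$ and $j'$ from the two lists separately, as you do, loses the crucial structure: the paper needs a \emph{single} $\ell$ that is in both lists for many pairs, which is exactly what the $2c>k$ intersection argument and the pigeonhole over $[k]\setminus L$ provide.
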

\begin{proof}
  By contradiction, assume that there is $L \in \mathcal{L}$ such that $|L| =
  c$ and $|V(L) \cap B| \geq 5$.
  For all $i \in L$, let $A_i = V(P_i) \cap A$ and $C_i = \Mat^{-1}(A_i)$.

  If all $i \in L$ satisfy $|A_i| \leq 4c$, then $|V(L) \cap A| \leq 4c^2$ and thus,
  since $|L| = c$,  the set $A' = A \setminus V(L)$ has size at least $f(k) - 4c^2$ and
  all its vertices are used by at most $c-1$ paths of $\mathcal{P}$ since $2c > k$.
  By definition, all vertices in $A'$ are $j$-free for all $j \in [k]$.
  If there is $u \in \Mat^{-1}(A')$ with $|L(u)| = c$, then it holds that there is some $j \in L \cap \List(u)$ and, consequently, that $\{u, \Mat(u)\}$ is a $j$-free pair.
  Otherwise, $|\List(u)| \leq c-1$ for all $u \in \Mat^{-1}(A')$.
  Both cases contradict \autoref{lem:B-xor-free-pair} and the result follows.

  Assume now that there is an $i \in L$ such that $|A_i| \geq 4c+1$ and notice that every
  $u \in C_i$ has $|\List(u)| = c$, as otherwise $\{u,\Mat(u)\}$ would be an $i$-free
  pair, contradicting \autoref{lem:B-xor-free-pair} as $|V(P_i) \cap B| \geq |V(L) \cap B| \geq 5$.  
  The goal now is to group all pairs of the form $\{u,\Mat(u)\}$ with $u \in C_i$
  according to which indices appear in $\List(u) \cap \List(\Mat(u))$.
  The size of $|A_i|$ guarantees that at least one index $\ell$ appears in the lists of the endpoints of many of these pairs and we use it to shortcut two paths.
  First, we shortcut $P_\ell$ through some $b \in V(L) \cap B$.
  By doing this, we no longer have a solution for our instance since a vertex of this set
  is used by $c+1$ paths.
  However, the shortcut for $P_\ell$ frees some vertices of $C$, which are now used by at
  most $c-1$ paths.
  Thus we can shortcut $P_i$ in a way that avoids $b$ to generate a new solution
  contradicting \autoref{prop:minimizing-for-solution}.

  Formally, we proceed as follows. For $j \in [k] \setminus L$, let $X_j$ be the set of pairs $\{u, \Mat(u)\}$ such that
  $j \in \List(u) \cap \List(\Mat(u))$ with $u \in C_i$.
  Notice that these sets are not necessarily pairwise disjoint and that the exclusion of $L$ is natural as the occurrence of any of its indices in $\List(u) \cap
  \List(\Mat(u))$ would contradict \autoref{lem:B-xor-free-pair}.
  In addition, the pigeonhole principle implies that at least one set $X_\ell$ satisfies
  $|X_\ell| \geq 5$ as $|A_i| \geq 4c+1$.

  Let $A'$ be the subset of vertices of $A$ appearing in pairs of $X_\ell$ and $C' =
  \Mat^{-1}(A')$, and let $b_1, \ldots, b_q$ be the first $q \geq 5$ vertices of $V(L) \cap B$  ordered as they appear in $P_i$.
  We now consider two cases.

   \begin{enumerate}
      \item Assume first that $P_\ell$ uses a vertex of $A'$ before any vertex of $C'$.
      Denote by $v_1$ the first vertex of $A'$ used by $P_\ell$, and by $u$ the last vertex
      of $C'$ used by $P_\ell$.
      We construct the path $P'_\ell$ by shortcutting $P_\ell$ through $v_1 \to b_3 \to u$.
      This indeed forms a shortcut since at least four vertices used by $P_\ell$ in $C'$ are
      skipped by $P'_\ell$.
      The collection $\mathcal{P}' = (\mathcal{P} \setminus \{P_\ell\}) \cup \{P'_\ell\}$
      improves on the summation in \autoref{prop:minimizing-for-solution} with relation to
      $\mathcal{P}$, but $b_3$ is now used by $c+1$ paths and thus $\mathcal{P'}$ is not a
      solution for the instance $\mathcal{I}$.
      The choice of $P'_\ell$ ensures that at least three vertices of $C' \setminus
      \{\Mat^{-1}(v_1)\}$ are \textsl{not} used by it, and therefore we can chose one of those
      vertices, say $u'$, to shortcut $P_i$ through $b_1 \to u' \to \Mat(u') \to b_q$; by
      using this new path instead of $P_i$ in $\mathcal{P}'$, we remove $i$ from the list of $b_3$ and no other vertex exceeds the allowed congestion, thus contradicting
      \autoref{prop:minimizing-for-solution}.
      See \autoref{fig:proof-of-bounded-in-B} for an example of the shortcuts that are built in the analysis of this case.
    
      \begin{figure}[h]
        \centering
        \begin{tikzpicture}[yscale=.6]
          \foreach \i in {1,...,4} {
            \node[blackvertex] (a\i) at (\i,-.5) {};
            \node[blackvertex] (c\i) at ($(a\i) + (-1,5)$) {};
          }
          \foreach \i in {1,...,4} {
            \node[blackvertex, label={[xshift=-.1cm]180:{$b_\i$}}] (b\i) at (-2,\i-1) {};
          }
          \node[blackvertex, label={[xshift=-.1cm]180:{$b_q$}}] (b5) at (-2,5-1) {};
          \node[blackvertex, label={[yshift=-.1cm]-90:{$v_1$}}] (v1) at (0,-.5) {};
          \node[blackvertex, label={[yshift=.1cm]90:{$u$}}] (u) at ($(a4) + (0,5)$) {};
          \draw[arrow] (u) -- (a4);
          \draw[arrow] (c1) -- (v1);
          \draw[arrow] (c3) -- (a2);
          \draw[arrow] (c4) -- (a3);
    
          \node[rectangle,draw,fit=(v1)(a4), label=0:{$A \cap V(P_i) \cap V(P_\ell)$}] {};
          \node[rectangle,draw,fit=(b1)(b5), label=90:{$B \cap V(L)$}] {};
          \node[rectangle,draw,fit=(c1)(u), label=0:{$C \cap V(P_\ell)$}] {};
    
          \draw[arrow, goodblue, thick] (v1) -- (b3);
          \draw[arrow, goodblue, thick] (b3) -- (u);
          \draw[arrow, goodred, thick] (b1) -- (c2);
          \draw[arrow, goodred, thick] (c2) -- (a1);
          \draw[arrow, goodred, thick] (a1) -- (b5);
        \end{tikzpicture}
        \caption{Shortcuts built in the first case of the proof of
          \autoref{lem:bounded-congested-in-B}. The blue path starting in $v_1$ is the shortcut
          for $P_\ell$, forming $P'_\ell$. The red path starting in $b_1$ is the shortcut for
        $P_i$. The remaining arcs of $\mathcal{K}$ are omitted.}
        \label{fig:proof-of-bounded-in-B}
      \end{figure}
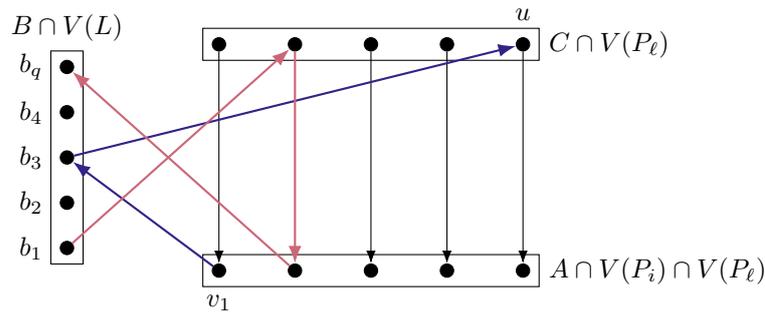
    
      \item Assume now that $P_\ell$ uses a vertex of $C'$ before any vertex of $A'$.
      Denote by $u_1$ and $w$ the first and last vertices of $C'$ used by $P_\ell$, respectively.
      Similarly to the last case, we form $P'_\ell$ by shortcutting $P_\ell$ through $u_1 \to
      \Mat(u_1) \to b_3 \to w$, thus increasing the congestion on the vertex $b_3$ to $c+1$.
      Again, there are at least three vertices of $C'$ not used by $P'_\ell$ and we use one
      of those, say $u'$, to shortcut $P_i$ through $b_1 \to u' \to \Mat(u') \to b_q$ in
      order to avoid $b_3$. This contradicts \autoref{prop:minimizing-for-solution} and the
      result follows.
      Notice that the fact that $|A'| \geq 5$ is needed only in this case due to the size of the shortcut
      for $P_\ell$.
    \end{enumerate}
    \vspace{-\baselineskip}
\end{proof}

\autoref{lem:bounded-congested-in-B} implies that at least $f(k) - 4\binom{k}{c}$
vertices of $B$ are used by at most $c-1$ paths of $\mathcal{P}$.
Thus, all such vertices are $i$-free for any $i \in [k]$ and can be used to build
shortcuts for any path in $\mathcal{P}$.

\subsection{Freeing $A$ and $C$}
\label{sec:free_ac}

We now show that only a bounded number of vertices of $A$ and $C$ are used by paths of
$\mathcal{P}$.
\begin{lemma}\label{lem:bounded-intersection-in-A-C}
  If there is $b \in B$ such that $|\List(b)| \leq c-1$, then for all $\mathcal{P}_i \in
  \mathcal{P}$ it holds that
  \begin{enumerate}
    \item $|V(P_i) \cap A| \leq 8c+4$; and
    \item $|V(P_i) \cap C| \leq 8c+4$.
  \end{enumerate}
\end{lemma}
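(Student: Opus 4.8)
The plan is to prove both inequalities by contradiction, and to treat only the bound on $|V(P_i)\cap C|$ in detail, since the bound on $|V(P_i)\cap A|$ follows from the symmetric argument obtained by exchanging the roles of $A$ and $C$ and reversing every arc (this operation preserves the structure of a $k$-triple and of the instance). So suppose some $P_i\in\mathcal{P}$ has $|V(P_i)\cap C|\ge 8c+5$. Throughout I would use that, by \autoref{lem:bounded-congested-in-B}, at least $f(k)-4\binom{k}{c}$ vertices of $B$ have list of size at most $c-1$; since we are free to make $f(k)$ large, we may assume there are as many such \emph{free} vertices of $B$ as needed, and each of them is $j$-free for every $j\in[k]$.

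First I would split $V(P_i)\cap C$ according to mirrors: call $u\in V(P_i)\cap C$ \emph{good} if $\Mat(u)$ is $i$-free, and \emph{bad} otherwise. If $u\prec_i u'$ are two good vertices, then, by \autoref{def:k-triple} and since $b$ receives all its triple-arcs from $A$ and sends them to $C$, the walk $u\to\Mat(u)\to b\to u'$ has all three arcs in $T$ and its two internal vertices $\Mat(u),b$ are $i$-free, so it is a legal shortcut for $P_i$ whenever the subpath of $P_i$ from $u$ to $u'$ has length at least $4$. Taking $u,u'$ to be the first and fifth good vertices along $P_i$ would then contradict \autoref{prop:minimizing-for-solution}; hence at most four vertices of $V(P_i)\cap C$ are good, so more than $8c$ of them are bad, and for each bad $u$ the mirror $\Mat(u)\in A$ is used by exactly $c$ paths of $\mathcal{P}$, none of them $P_i$.

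Next I would apply the pigeonhole principle in the form enabled by $2c>k$: the more than $8c$ bad mirrors are used by $c$ of the at most $k-1<2c$ paths of $\mathcal{P}\setminus\{P_i\}$, so some $P_\ell$ with $\ell\ne i$ uses at least $4c+1$ of them, whence $|V(P_\ell)\cap A|\ge 4c+1$. The goal is then to produce \emph{two simultaneous} shortcuts, along the lines of the proof of \autoref{lem:bounded-congested-in-B}: first reroute $P_\ell$ through the free vertex $b$ so as to skip a large block of its $A$-vertices (each of which is the mirror of a bad vertex of $P_i$), which frees those vertices; then reroute $P_i$ through one of those now-$i$-free mirrors $\Mat(u)$, for a bad $u$ occurring at a non-final position of $P_i$, via $u\to\Mat(u)\to b'\to u_{\mathrm{last}}$ using a \emph{second} free vertex $b'\in B$, where $u_{\mathrm{last}}$ is the last vertex of $C$ on $P_i$. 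Since $b$ and $b'$ both had slack and the reused mirror was vacated by the first rerouting, the combined modification stays within congestion $c$ while strictly decreasing $\sum_{i}|V(P_i)|$, contradicting \autoref{prop:minimizing-for-solution}.

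The step I expect to be the main obstacle is precisely the rerouting of $P_\ell$: since from $A$ one can only leave the triple through $B$, and to re-enter an $A$-vertex $v'$ one is forced through $\Mat^{-1}(v')\in C$, the shortcut $v\to b\to\Mat^{-1}(v')\to v'$ is legal only when $\Mat^{-1}(v')$ is $\ell$-free. Handling the bad case, in which \emph{all} the relevant mirrors have $\ell$-saturated preimages, is where the analysis becomes heavier than in \cite{FominP19}: there I would iterate the pigeonhole, noting that each such pair $\{v',\Mat^{-1}(v')\}$ then has both endpoints fully congested, so by $2c>k$ their lists share an index $m^*\notin\{i,\ell\}$; choosing $m^*$ common to many pairs yields a path $P_{m^*}$ using \emph{both} endpoints of many matching arcs of $\mathcal{K}$, and for such a path $\Mat(u)$ is $m^*$-free, so $P_{m^*}$ itself admits an internal shortcut $u\to\Mat(u)\to b\to u'$ through $b$, again contradicting minimality. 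The delicate part is to keep every vertex within congestion $c$ while performing two reroutings at once, and to make the nested counting survive down to the worst case $k=2c-1$; this is exactly what forces $f(k)=2^{\Ocal(k\log k)}$ rather than a polynomial. Combining both cases rules out $|V(P_i)\cap C|>8c+4$, and the symmetric argument gives $|V(P_i)\cap A|\le 8c+4$.
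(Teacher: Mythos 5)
Your proposal captures the two essential ingredients the paper also uses — shortcuts that go through the under-congested vertex $b$ and through the matching from $C$ to $A$, and the pigeonhole enabled by $2c > k$ — and your first step (at most four ``good'' vertices, i.e.\ at most four vertices of $V(P_i)\cap C$ whose mirrors are $i$-free) is exactly the paper's opening move. However, your route through the rest of the argument diverges in a way that introduces both an unnecessary detour and a real gap.

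The main issue is the ``two simultaneous shortcuts'' plan. You import it from the proof of \autoref{lem:bounded-congested-in-B}, but the whole reason two shortcuts are needed there is that $b_3$ is fully congested, so the first shortcut temporarily pushes it to congestion $c+1$ and a compensating second shortcut is required. Here, by hypothesis, $|\List(b)| \leq c-1$, so a single shortcut routed through $b$ never overloads any vertex; the paper explicitly notes that ``only one shortcut is needed since $|L(b)| \leq c-1$.'' Concretely, in your ``good case'' the rerouting $v \to b \to \Mat^{-1}(v') \to v'$ of $P_\ell$ is already a congestion-respecting shortcut, hence already contradicts \autoref{prop:minimizing-for-solution} by itself; the subsequent rerouting of $P_i$ is dead weight. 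Relatedly, your intermediate step of first finding a single $P_\ell$ using $\geq 4c+1$ bad mirrors and then iterating is a detour: the paper instead splits the $\geq 8c+1$ vertices of $V(P_i)\cap A$ whose mirrors are $i$-saturated into those with small list ($A'$, handled by a one-shot shortcut of some $P_j$, giving $|A'| \leq 4c$) and those with full list ($A''$, where it applies the pigeonhole directly to the shared indices $\ell \in \List(v') \cap \List(\Mat^{-1}(v'))$ and shortcuts $P_\ell$). Your eventual fallback via $m^*$ is essentially this last step, but with the $P_\ell$ scaffolding it is hard to get the counting to close (you need enough pairs sharing $m^*$ so that the subpath of $P_{m^*}$ you replace is actually longer than the length-$4$ replacement, and you do not verify this). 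Finally, the claim that the nested counting in this lemma is ``exactly what forces $f(k) = 2^{\Ocal(k\log k)}$'' is incorrect: the bound here is linear in $c$, and the exponential blowup comes entirely from the $4\binom{k}{c}$ term in \autoref{lem:bounded-congested-in-B}. The symmetry reduction from the $A$-bound to the $C$-bound by arc reversal is a legitimate alternative to the paper's direct (nearly identical) treatment of item~2.
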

\begin{proof}
  Let $b \in B$ with $|\List(b)| \leq c-1$. To prove \lipItem{1}, by contradiction let
  $P_i$ be a path of $\mathcal{P}$ using $q \geq 8c + 5$ vertices of $A$.
  Let $v_1 \prec_i v_2 \prec_i \cdots \prec_i v_q$ be the vertices in $V(P_i) \cap A$
  ordered as they appear in $P_i$.
  If $\Mat^{-1}(v_j)$ is $i$-free for some $j \geq 5$, then we can shortcut $P_i$ through
  $v_1 \to b \to \Mat^{-1}(v_j) \to v_j$ to build a new solution.
  Thus, we now assume that every vertex in $C' = \{\Mat^{-1}(v_j) \mid j \geq 5\}$ is $i$-saturated.

  Let $A' = \{v \in \Mat(C') \mid |\List(v)| \leq c-1\}$ and, towards a contradiction, assume that $|A'| \geq 4c+1$.
  Then there is a $P_j \in \mathcal{P}$ using at least five vertices of $\Mat^{-1}(A')$.
  If $u,u' \in \Mat^{-1}(A')$ are the first and last vertices used by $P_j$ in $A'$, respectively, then we can
  shortcut $P_j$ through $u \to \Mat(u) \to b \to u'$, again contradicting
  \autoref{prop:minimizing-for-solution}.
  We conclude that $|A'| \leq 4c$.

  Now, let $A'' = (V(P_i) \cap A) \setminus (A' \cup \{v_1, v_2, v_3, v_4\})$.
  Thus $|A''| \geq 4c+1$ and every $w \in A'' \cup \Mat^{-1}(A'')$ has $|\List(w)| = c$.
  The remaining part of this proof follows similarly to what is done in the proof of
  \autoref{lem:bounded-congested-in-B}, but only one shortcut is needed since $|L(b)| \leq c-1$.
  The size of $A''$ implies that there is some $\ell \in [k]$ and at least five vertices
  $v'_1, \ldots, v'_5 \in A''$ such that $\ell \in \List(\Mat^{-1}(v'_j)) \cap
  \List(v'_j)$ for $j \in [5]$.
  If $P_\ell$ uses a vertices of $A''$ before any vertex of its mirror and
  $v'_1$ and $v'_5$ are the first and last of the vertices used by $P_\ell$ in $A''$,
  respectively, then we can shortcut this path through $v'_1 \to b \to \Mat^{-1}(v'_5) \to v_5$.
  The case when $P_\ell$ uses a vertex of $\Mat^{-1}(A'')$ before any vertex of $A''$ is
  analogous and item \lipItem{1} follows.

  The proof of item~\lipItem{2} is mostly the same as in the first item, but with one small difference.
  By contradiction, let $P_i$ be a path of $\mathcal{P}$ using $q \geq 8c + 5$ vertices of $C$.
  Since all arcs between $B$ and $C$ are orienteed from $B$ to $C$, in this first step we
  need to exclude the four last vertices of $P_i$ in $C$ to say that the mirror of the
  remaining vertices of $C$ contains only vertices which are used by $c$ paths of $\mathcal{P}$.
  Let $u_1 \prec_i u_2 \prec_i \cdots \prec_i u_q$ be the vertices in $V(P_i) \cap C$
  ordered as they appear in $P_i$.
  If $\Mat(u_j)$ is $i$-free for some $j \leq q-4$, then we can shortcut $P_i$ through
  $u_j \to \Mat(u_j) \to b \to u_q$ to create a new solution.
  Thus, every vertex in $A' = \{\Mat(u_j) \mid j \leq q-4\}$ is $i$-saturated.
  The remaining part of the proof follows analogously to the proof of item~\lipItem{1}
  and the result follows.
\end{proof}

\subsection{Finding the irrelevant vertex in polynomial time}
\label{sec:finding-irrelevant-vertex}
Applying \autoref{lem:bounded-intersection-in-A-C} we show an improved version of \autoref{lem:bounded-congested-in-B} which is needed in our proof.

\begin{corollary}\label{cor:bounded-paths-in-B}
For all $i \in [k]$, $|V(P_i) \cap B| \leq 4$.
\end{corollary}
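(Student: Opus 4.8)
The plan is to argue by contradiction: suppose there is a path $P_i \in \mathcal{P}$ with $|V(P_i) \cap B| \geq 5$, and derive the existence of an $i$-free pair, which contradicts \autoref{lem:B-xor-free-pair}.

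The first step is to check that the hypothesis of \autoref{lem:bounded-intersection-in-A-C} is satisfied, i.e., that some $b \in B$ has $|\List(b)| \leq c-1$. This is immediate from \autoref{lem:bounded-congested-in-B}: there are only $\binom{k}{c}$ lists of size exactly $c$, and each is assigned to at most four vertices of $B$, so at most $4\binom{k}{c}$ vertices of $B$ are saturated; since $f(k)$ is chosen much larger than $4\binom{k}{c}$, a non-saturated vertex of $B$ exists. Applying \autoref{lem:bounded-intersection-in-A-C} then yields $|V(P_j) \cap A| \leq 8c+4$ and $|V(P_j) \cap C| \leq 8c+4$ for every $P_j \in \mathcal{P}$.

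The second step is an averaging argument. Summing the first bound over all paths and using $\sum_{v \in A}|\List(v)| = \sum_{j \in [k]} |V(P_j) \cap A|$, we get that at most $k(8c+4)$ vertices of $A$ have a nonempty list; the same holds for $C$. Hence at least $f(k) - k(8c+4)$ vertices of $A$ have empty list, and likewise for $C$. Since $\Mat$ is a bijection from $C$ to $A$, the number of $u \in C$ with $\List(u) = \List(\Mat(u)) = \emptyset$ is at least $f(k) - 2k(8c+4)$, which is positive once $f(k)$ is large enough. Any such $u$ gives an $i$-free pair $\{u, \Mat(u)\}$ (both lists have size $0 \leq c-1$), contradicting \autoref{lem:B-xor-free-pair}; this proves the corollary.

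As for difficulty, there is no genuinely hard step: \autoref{cor:bounded-paths-in-B} is essentially bookkeeping on top of the earlier lemmas. The one subtlety worth stressing is that a vertex merely avoided by $P_i$ need not be $i$-free; one really needs the stronger fact that almost all of $A$ and $C$ is untouched by the \emph{entire} solution $\mathcal{P}$, which is why summing over all $j \in [k]$ (rather than a bound referring only to $P_i$) is the key move.
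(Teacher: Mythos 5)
Your proof is correct and takes essentially the same approach as the paper: both use \autoref{lem:bounded-intersection-in-A-C} to count how many mirror pairs $\{u,\Mat(u)\}$ can fail to be $i$-free and then observe that a large enough triple must contain an $i$-free pair. The only cosmetic difference is that you derive the contradiction by invoking \autoref{lem:B-xor-free-pair}, whereas the paper constructs the shortcut $b_1 \to u \to \Mat(u) \to b_z$ directly; these are equivalent, since the shortcut is precisely how \autoref{lem:B-xor-free-pair} is proved. Your emphasis that one must sum the bounds over \emph{all} paths $P_j$, not just $P_i$, is well placed: the paper's phrase ``pairs which can be intersected by $V(P_i)$'' is an imprecision (the $2k(8c+4)$ figure only makes sense after summing over $j \in [k]$), and your version spells this out correctly.
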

\begin{proof}
Assume that there is a $P_i$ using $z \geq 5$ vertices of $B$.
Let $b_1$ and $b_z$ be the first and last of those vertices under $\prec_i$, respectively.

Applying both items of \autoref{lem:bounded-intersection-in-A-C}, we conclude that there is an $i$-free pair $\{u, \Mat(u)\}$ with $u \in C$ after eliminating at most $2k(8c+4)$ pairs $\{u', \Mat(u')\}$ which can be intersected by $V(P_i)$.
Thus we can shortcut $P_i$ through $b_1 \to u \to \Mat(u) \to b_z$ and the result follows.
\end{proof}


We are now ready to prove \autoref{thm:FPT-large-congestion} restricted to semicomplete digraphs.
The goal is to give a polynomial-time algorithm that finds a vertex $b \in B$ such that if there is a solution for $\mathcal{I}$, then there is another solution whose paths all avoid $b$.

We follow the blueprint of the proof by Fomin and Pilipzcuk~\cite[Theorem 9.1]{FOMIN201978}.
The goal is to first find a large set $X \subseteq B$ such that every path of a solution entering some $b' \in X$ from a vertex not in $A$  can be rerouted to access $b$ from a vertex in $A$.
Then, we must show that there is a $b \in X$ such that every path of the solution leaving $b$ to a vertex not in $C$ can be rerouted to leave $b$ to a vertex in $C$.
As these rerouting steps do not use too many extra vertices in the triple, by \autoref{cor:bounded-paths-in-B}, and our hypothesis that the triple is large, we can then argue that $b$ be can be replaced with another $b^* \in B$, that is unused by the solution, and thus $b$ can be safely removed from the graph.

Thus, at some point in the proof, we need to analyze how a path of a solution for $\mathcal{I}$ can enter and leave $B$, and show how to reroute paths not respecting the desired behavior using vertices \emph{outside} of the triple.
In Fomin and Pilipczuk's proof, the fact that they consider collisions in arcs plays  a major role here in this step.
Informally, in a particular case of their analysis and using our notation, if a path $P_i$ accesses $b' \in B$ through a vertex $v \in V(T) \setminus A$ and $v$ has ``sufficiently many'' out-neighbors in $V(T) \setminus A$ which are in-neighbors of distinct vertices of $B$, then at least one of those out-neighbors, say $v'$, can be used to reroute $P_i$ through $v \to v' \to u \to \Mat(u) \to b$.
This holds due to three main properties: (i) the fact that at most $k$ arcs leaving a vertex are used by any solution, and thus any vertex of out-degree at least $k+1$ has one arc leaving it that is free, (ii) their versions of \cref{lem:bounded-congested-in-B,lem:bounded-intersection-in-A-C}, and (iii) a clever analysis on the behavior of the arcs between in-neighbors of vertices in $B$ outside of $A$.

In our case, an analogous of the first of those three points requires more work since, a priori, it may seem that we have little control on how a solution intersects the in-neighborhood of $B$ outside of $A$.
We prove that this is not the case by applying another shortcutting argument which allows us to bound how many vertices a path $P_i \in \mathcal{P}$ can use in a particular subset of in-neighbors of $B$.
In addition, after the first rerouting round, we can no longer rely on  \autoref{prop:minimizing-for-solution}, as we do increase the length of each path by an $\bigO{1}$ factor.
Thus a small trick is needed to apply a similar shortcutting argument as the one used in the in-neighbors of $B$, this time to the out-neighbors of $B$; this happens when we want to find a vertex of $B$ that, when used by a path, is always followed by a vertex of $C$.

\begin{proof}[Proof of \autoref{thm:FPT-large-congestion} with $h = 0$]
We remind the reader of the assumptions made in the beginning of~\autoref{sec:irrelevant-vertex-in-triples}.
Let:
\begin{align*}
  d_2(k) &= 8k(4k+1) + 8k + c,\\
  m_2(k) &= 8k + c,\\
  x(k) &= 2(d_2(k) \cdot m_2(k)),\\
  d_1(k) &= 7k(4k+1) + 8k + x(k), \text{ and}\\
  m_1(k) &= 8k + x(k).
\end{align*}

Note that $x(k) = \Ocal(k^3)$ is the asymptotically greatest among these five functions.
However, in order to apply \autoref{lem:bounded-congested-in-B}, and consequently obtain \autoref{lem:bounded-intersection-in-A-C} and finally \autoref{cor:bounded-paths-in-B}, we need to start with $f(k) = \Ocal(2^{k \log k})$.
This is the only point of the proof where an exponential dependency on $k$ is requested. \autoref{cor:bounded-paths-in-B} hints that there might be a way to push this bound to a polynomial in $k$, but we do not see how to circumvent the $\binom{k}{c}$ blowup originating from \autoref{lem:bounded-congested-in-B}.

Since the size of $B$ has the largest impact in the requested order of $\mathcal{K}$, in this proof we refrain from repeatedly proving that $i$-free pairs ${u, \Mat(u)}$ with $u \in C$ can be found when they are needed.
By \autoref{lem:bounded-intersection-in-A-C} only $\Ocal(k^2)$ such pairs are initially used by $\mathcal{P}$ and each rerouting done below uses at most one extra pair.

\medskip
\noindent\textbf{First round of reroutings.}
The goal is to first find a set $X \subseteq B$ of size at least $x(k)$ such that if there is a solution for $\mathcal{I}$ then there is another solution $\mathcal{Q} = \{Q_1, \ldots, Q_k\}$ where:
\begin{romanenumerate}
  \item for all $b \in X$, if $b$ is accessed by some $v \in V(Q_i)$ with $i \in [k]$ then $v \in A$,
  \item at most $4k$ new vertices of $B$ and $4k$ new pairs $\{u, \Mat(u)\}$ with $u \in C$ are used by $\mathcal{Q}$ in comparison with $\mathcal{P}$, and
  \item each $Q_i \in \mathcal{Q}$ uses at most one new vertex of $V(T) \setminus V(\mathcal{K})$ in comparison with $\mathcal{P}$.
\end{romanenumerate}

For every $b \in B$, let
\[R_b = \{v \in V(T) \setminus A \mid (v,b) \in E(T) \text{ and } |N^+(v) \cap B| \geq m_1(k)\},\]
\[S_b = \{v \in V(T) \setminus A \mid (v,b) \in E(T) \text{ and } |N^+(v) \cap B| \leq m_1(k) - 1\},\] 
and $B_\emptyset = \{b \in B \mid S_b = \emptyset\}$.
If $|B_\emptyset| \geq x(k)$, we claim that we can choose $X = B_\emptyset$.

To prove this claim, let $b \in B_\emptyset$ and assume that there is a path $P_i \in \mathcal{P}$ that accesses $b$ from a vertex $v \in V(T) \setminus A$.
With the choice of $m_1(k)$, we apply \autoref{cor:bounded-paths-in-B} to conclude that $v$ has at least $4k$ out-neighbors in $B \setminus X$ which are not used by paths of $\mathcal{P}$, and \autoref{lem:bounded-intersection-in-A-C} implies that at least $4k$ pairs $\{u, \Mat(u)\}$ with $u \in C$ are $j$-free for any $j \in [k]$.
Thus we can choose an $i$-free vertex $w \in N^+(v) \cap B \setminus X$ and an $i$-free pair $\{u, \Mat(u)\}$ to reroute $P_i$ through $v \to w \to u \to \Mat(u) \to b$.
By \autoref{cor:bounded-paths-in-B} at most $4k$ reroutings are needed since each $P_i$ can enter $X$ at most four times, and each rerouting can be done through distinct choices of $w$ and $\{u, \Mat(u)\}$.
Note that it is important to push this rerouting to vertices outside of $X$.
If, for instance, $w \in X$, we would not be decreasing how many accesses from outside of the triple are done to $X$, essentially voiding our rerouting.

Assume now that $|B_\emptyset| \leq x(k)-1$ and let $B_S = B \setminus B_\emptyset$.
We build an auxiliary semicomplete digraph $\Gamma$ with vertex set $B_S$ as follows.
For every ordered pair $\{b,b'\} \subseteq B_S$, we add $(b, b')$ to $E(\Gamma)$ if, for all $v \in S_b$, it holds that $v\in S_{b'}$ or there is a $w \in S_{b'}$ such that $(v,w) \in E(T)$.
Intuitively, the goal is to use arcs of $\Gamma$ to identify when we can reroute a path entering $b$ from outside of $\mathcal{K}$ through $b'$.
To see that $\Gamma$ is indeed semicomplete, let $b,b' \in V(\Gamma)$.
If there is no arc from $b$ to $b'$, then some $v \in S_b \setminus S_{b'}$ is an in-neighbor in $T$ of every vertex in $S_{b'} \setminus S_b$. By definition $(b',b) \in E(\Gamma)$ and therefore $\Gamma$ is semicomplete.
The definition of $\Gamma$ immediately implies that it can be built in time $\Ocal(f^2(k) \cdot n^2)$

Now, the goal is to find a set of $x(k)$ vertices of $V(\Gamma)$ which have out-degree at least $d_1(k) \cdot m_1(k)$ in $\Gamma$. 
By discarding $B_\emptyset$ from $B$, we conclude that $|V(\Gamma)| \geq |B| - x(k)+1 = f(k) - x(k) + 1$.
Assume that at most $x(k)-1$ vertices of $\Gamma$ have degree at least $d_1(k) \cdot m_1(k)$.
Then $\Gamma$ has at most $\alpha = (x(k) - 1)(|V(\Gamma)|-1) + (|V(\Gamma)| - x(k) + 1)(d_1(k)\cdot m_1(k) - 1) \geq (x(k) - 1)(f(k)-x(k)) + (f(k) - 2x(k) + 2)(d_1(k)\cdot m_1(k) - 1)$ arcs.
Thus by choosing $f(k)$ large enough to guarantee that $\binom{|V(\Gamma)|}{2} > \alpha$, we guarantee that at least $x(k)$ vertices of $\Gamma$ have the desired out-degree.
Let $X$ contain all $b \in V(\Gamma)$ such that $\deg^+_{\Gamma}(v) \geq d_1(k) \cdot m_1(k)$.
We now prove that paths of $\mathcal{P}$ accessing $X$ from $V(T) \setminus A$ can be rerouted in the desired way.
To achieve this, we introduce another shortcutting argument for $\mathcal{P}$.

\begin{claim}\label{claim:bounded-used-outside-of-B-first-step}
Let $W = \bigcup_{b \in B} S_b$ and $M^+$ be a matching from $W$ to $B$.
Then for all $i \in [k]$, the path $P_i$ uses at most $7(4k+1)-1$ vertices which are tails of arcs in $M^+$.
\end{claim}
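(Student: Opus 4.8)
The plan is a proof by contradiction in the spirit of \autoref{lem:bounded-intersection-in-A-C}. Suppose some $P_i$ uses at least $7(4k+1)$ vertices that are tails of arcs of $M^+$, and list them as $w_1 \prec_i w_2 \prec_i \dots \prec_i w_q$ with $q \ge 7(4k+1)$. For each $j$ put $b_j = M^+(w_j) \in B$, so that $(w_j,b_j) \in E(T)$ and, since $M^+$ is a matching, the $b_j$ are pairwise distinct. Crucially, every $w_j$ lies in some $S_b$, so $|N^+(w_j) \cap B| \le m_1(k)-1$; hence every $w_j$ has at least $f(k) - m_1(k)$ \emph{in}-neighbours inside $B$. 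The three ingredients I would combine are: (a) \autoref{cor:bounded-paths-in-B}, to bound how many of the $b_j$ can lie on a single path; (b) \autoref{lem:bounded-congested-in-B}, which keeps the number of $c$-congested vertices of $B$ below $4\binom{k}{c} \ll f(k)$; and (c) the blanket fact following \autoref{lem:bounded-intersection-in-A-C} that a fresh $i$-free pair $\{u,\Mat(u)\}$ with $u \in C$ is always available.

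First I would show that most of the $b_j$ are \emph{convenient}, meaning $b_j \notin V(P_i)$ and $b_j$ is $i$-free. By \autoref{cor:bounded-paths-in-B} at most $4$ of the $b_j$ lie on $P_i$. Next, if more than $4k$ of the $b_j$ were $i$-saturated, then --- each such vertex carrying a list $\List(b_j) \subseteq [k]\setminus\{i\}$ of size at least $1$ --- some index $\ell \neq i$ would belong to the lists of at least $\lceil (4k+1)/k \rceil = 5$ of them, forcing $P_\ell$ to pass through $5$ distinct vertices of $B$ and contradicting \autoref{cor:bounded-paths-in-B}; hence at most $4k$ of the $b_j$ are $i$-saturated. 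Consequently at least $q - 4 - 4k \ge 24k+3$ indices $j$ are convenient, and since at most $6$ of them can satisfy $j > q-6$, there is a convenient index $a$ with $a \le q-6$.

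Then I would shortcut $P_i$ between $w_a$ and $w_q$ by the walk
\[
w_a \;\to\; b_a \;\to\; u \;\to\; \Mat(u) \;\to\; b^* \;\to\; w_q,
\]
where $\{u,\Mat(u)\}$ is a fresh $i$-free pair with $u \in C$, and $b^* \in B$ is an $i$-free vertex with $(b^*, w_q) \in E(T)$ and $b^* \neq b_a$. All five arcs exist: $w_a \to b_a$ is the arc of $M^+$; $b_a \to u$ and $\Mat(u) \to b^*$ hold because $B$ is complete to $C$ and $A$ is complete to $B$ in $\mathcal{K}$; $u \to \Mat(u)$ is an arc of the matching of $\mathcal{K}$; and $b^* \to w_q$ by the choice of $b^*$, which exists because $w_q$ has at least $f(k)-m_1(k)$ in-neighbours in $B$ while at most $4\binom{k}{c}$ vertices of $B$ are $i$-saturated and $f(k)$ is taken large. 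This walk has only six vertices, whereas the subpath of $P_i$ from $w_a$ to $w_q$ already contains $w_a, w_{a+1}, \dots, w_q$, that is at least $q-a+1 \ge 7$ vertices; and each of $b_a, b^*, u, \Mat(u)$ is either $i$-free with an unsaturated list (so routing $P_i$ through it keeps its congestion at most $c$) or already on $P_i$ (so its congestion does not grow), while the skipped inner vertices lose $P_i$. Replacing the subpath by this walk and then cleaning up any repeated vertices (deleting created cycles and taking chords only shortens further) yields a solution of strictly smaller total length, contradicting \autoref{prop:minimizing-for-solution}; this establishes the bound, comfortably within the uniform constant $7(4k+1)-1$.

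The step I expect to be the crux, and the only place where the defining property of $S_b$ is genuinely used, is the re-entry into $P_i$ at $w_q$ from inside $B$: because $w_q$ belongs to some $S_b$ it has few out-neighbours in $B$, hence almost all of $B$ sends an arc into $w_q$, which is exactly what supplies the $i$-free vertex $b^*$ closing the shortcut. The secondary delicate point is the scenario in which none of the matched vertices $b_j$ is $i$-free; rather than fixing this with an auxiliary reroute of some $P_\ell$ as in \autoref{lem:bounded-congested-in-B}, it is ruled out outright by the pigeonhole-plus-\autoref{cor:bounded-paths-in-B} count above, and it is essentially that count which dictates the form of the stated bound.
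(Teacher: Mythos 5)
Your proof is correct and follows essentially the same strategy as the paper's: both treat $P_i$'s occurrences of tails of $M^+$ as a long ordered list, show that only $\Ocal(k)$ of the corresponding heads can be unavailable, and then close a length-$6$ shortcut of the form $v \to b \to u \to \Mat(u) \to b' \to v'$ where the final hop into $P_i$ at $v'$ is supplied by the defining property of $S_b$ (few out-neighbours, hence many in-neighbours in $B$). The only genuine difference is bookkeeping: the paper partitions the $\geq 7(4k+1)$ tails into $4k+1$ blocks of size $7$ and finds a block whose heads are \emph{all} $i$-free, then uses only the first head; you instead count the non-convenient heads directly (at most $4$ lying on $P_i$ plus at most $4k$ $i$-saturated) and pick a single convenient index $a$ with $a \le q-6$. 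Your count is marginally tighter and avoids the slight redundancy in the paper's argument (asking that all seven heads be $i$-free when only one is used), at the cost of an extra condition ($b_a \notin V(P_i)$) that the paper does not actually need; both of these are harmless. Your pigeonhole sub-argument for the at-most-$4k$ bound is also slightly roundabout — the paper's observation that $i$-saturated vertices of $B$ are in particular used by some path and so number at most $4k$ by \autoref{cor:bounded-paths-in-B} is more direct — but it is correct.
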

\begin{proof}[Proof of the claim] By contradiction, assume that there is a $P_i \in \mathcal{P}$ using at least $7(4k+1)$ vertices which are tails of arcs in $M^+$, and let $W'$ denote the set of all such vertices.
With $r = 4k+1$, we split $W'$ arbitrarily into $4k$ sets $W'_1, \ldots, W'_{r-1}$ of size seven, plus one set $W'_r$ with the remaining vertices of $W'$ not included in any of the other sets; note that $|W'_r| \geq 7$.

By \autoref{cor:bounded-paths-in-B}, there is one set $W'_j$ such that all heads of arcs of $M^+$ with tail in $W'_j$ are $i$-free: at most $4k$ vertices of $B$ are not $i$-free, so it follows that at most $4k$ different $W'_i$'s have at least one head in this set.
Let $v, v'$ be the first and last vertices of $W'_j$ with respect to $\prec_i$ and let $b$ be the head of the arc $(v, b) \in M^+$.
Since $v' \in S_b$ it holds that $|N^-_T(v') \cap B| \geq |B| - m_1(k)-1 \geq 4k + 1$.
Applying \autoref{lem:bounded-intersection-in-A-C} and \autoref{cor:bounded-paths-in-B} we can
shortcut $P_i$ through $v \to b \to u \to \Mat(u) \to b' \to v'$, where $b'$ is an $i$-free vertex of $B$ and $\{u, \Mat(u)\}$ is an $i$-free pair with $u \in C$. Recall that this is a contradiction since we are assuming that $\mathcal{P}$ is of minimum total length, and so we have proved the claim.
See \autoref{fig:shorcut-outside-triple} for an illustration of this shortcut.
\end{proof}

\begin{figure}[h]
  \centering
  \begin{tikzpicture}
    \foreach \i/\lpos/\name in {1/90/b',4/-90/b} {
      \node (a\i) at (\i-1,-.5) {};
      \node (c\i) at ($(a\i) + (0,4.5)$) {};
    }
    \node (b1) at (-2,0) {};
    \node[blackvertex, label={-90:{$b$}}] (b4) at (-2,3) {};

    \node[blackvertex, label=180:{$v$}] (w1) at ($(b4) + (-2,0)$) {};
    \node[blackvertex, label=180:{$v'$}] (wz) at ($(b1) + (-2,0)$) {};
    \draw[arrow, big snake, shorten >= 0pt] (w1) --
    node[pos=1/7,blackvertex, scale=.5] {}
    node[pos=2/7,blackvertex, scale=.5, yshift = .1cm] {}
    node[pos=3/7,blackvertex, scale=.5, yshift = .2cm] {}
    node[pos=4/7,blackvertex, scale=.5, yshift = .3cm] (p) {}
    node[pos=5/7,blackvertex, scale=.5, yshift = .4cm] {} 
    node[pos=6/7,blackvertex, scale=.5, yshift = .5cm] {}(wz) ;
    \node[rectangle,draw,fit=(a1)(a4), label=0:{$A$}] {};
    \node[rectangle,draw,fit=(b1)(b4), label=90:{$B$}] (blab) {};
    \node[rectangle,draw,fit=(c1)(c4), label=0:{$C$}] {};
    \node at ($(p) + (-.75,0)$) {$P_i$};
    \node[blackvertex, label=90:{$b'$}] (b3) at ($(b1) + (0,1.5)$) {};
    \node[blackvertex, label={[yshift=0cm]0:{$u$}}] (u) at ($(c1) + (1.5,0)$) {};
    \node[blackvertex, label={[yshift=0cm]0:{$\Mat(u)$}}] (v) at ($(a1) + (1.5,0)$) {};
    \draw[thick, goodblue,arrow] (w1) -- (b4);
    \draw[thick, goodblue,arrow] (b4) -- (u);
    \draw[thick, goodblue,arrow] (u) -- (v);
    \draw[thick, goodblue,arrow] (v) -- (b3);
    \draw[thick, goodblue,arrow] (b3) -- (wz);

  \end{tikzpicture}
  \caption{In blue, a shortcut for a path $P_i$ using at least $7(4k+1)$ vertices of $W = \bigcup_{b \in B} S_b$.}
  \label{fig:shorcut-outside-triple}
\end{figure}
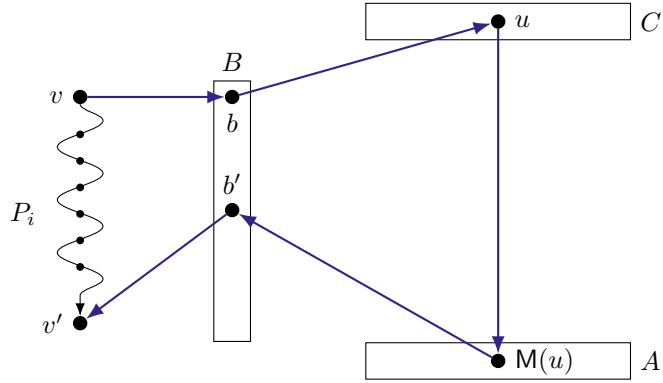

Assume that there is a $P_i \in \mathcal{P}$ such that $P_i$ accesses some $b \in X$ through a vertex $v \in V(T) \setminus A$.
If $v \in R_b$ then we proceed as if $S_b = \emptyset$: the out-degree of $v$ in $B$ is large enough to reroute through a vertex of $B$ unoccupied by any path.
Otherwise, $v \in S_b$.
In this case, we use the bound on the degree of $b$ in $\Gamma$ and apply \autoref{claim:bounded-used-outside-of-B-first-step} to find a large matching in $T$ from $V(T) \setminus A$ to $B$ that can be used to reroute $P_i$. 


We build in $T$ a matching $Y_0$ from $W = \bigcup_{b \in B}S_b$ to $B$ such that $Y_0 =$ $\{(v, b)$, $(v_1, b_1),$ $\ldots,$ $(v_z, b_z)\}$ and:
\begin{enumerate}
\item for $i \in [z]$ it holds that $(v, v_i) \in E(T)$, and
\item $z \geq d_1(k)$. 
\end{enumerate}
We say that $(v, b)$ is the \emph{distinguished} arc of $Y_0$.

Let us show how to build $Y_0$.
We start with a collection $\mathcal{W}$ of all sets $S_{b'}$ with $b' \in N^+_{\Gamma}(b)$ which are the candidates from which we can pick $v_1$; recall that $\deg^+_\Gamma(b) \geq d_1(k)\cdot m_1(k)$.
First we trim from $\mathcal{W}$ all such sets containing $v$, which are at most $m_1(k) - 1$ as $v \in S_b$.
By definition, at most $m_1(k)-1$ sets are discarded in  this way. 
Choose $b_1$ such that $S_{b_1} \in \mathcal{W}$.
The construction of $\Gamma$ and the fact that $(b,b_1) \in E(\Gamma)$ imply that there is a $v_1 \in S_{b_1}$ such that $(v, v_1) \in E(T)$.
We add $(v_1, b_1)$ to $Y_0$ and iterate.
Assume that $\ell$ arcs $(v_1, b_1), \ldots, (v_\ell, b_\ell)$ have been chosen this way, with $\ell \in [d_1(k) - 1]$.
Thus, after discarding from $\mathcal{W}$ every $S_{b'}$ containing any $v_j$ with $j \in [\ell]$, we conclude that at least $m_1(k) \cdot (d_1(k) - \ell) + \ell$ candidates remain in $\mathcal{W}$ for the choice of $v_{\ell+1}$.
Since $\deg^+_{\Gamma}(b) \geq d_1(k) \cdot m_1(k)$, this procedure does not end before $d_1(k)$ arcs are chosen, and therefore we have constructed $Y_0$ with the desired properties.
Notice that $Y_0$ can be built by simply observing the degrees of vertices in $T$, which is given in the input, and $\Gamma$, which can be built in polynomial time, and verifying which vertices are in each of the sets $S_b$, which can be done in time $\Ocal(f^2(k) \cdot n^2)$.

We now show that for some $j \in [d_1(k)]$ the arc $(v_j,b_j) \in Y_0$ has both its endpoints $i$-free.
By \autoref{claim:bounded-used-outside-of-B-first-step}, at most $7k(4k+1)$ vertices which are heads of arcs in $Y_0$ are used by paths of $\mathcal{P}$.
By \autoref{cor:bounded-paths-in-B}, at most $4k$ arcs of $Y_0$ have their endpoints in $B$ being used by a path of $\mathcal{P}$.
Thus, the choice of $d_1(k)$ guarantees that there are at least $4k$ arcs of the form $(v_j, b_j) \in Y_0$ with $j \in [d_1(k)]$ with both its endpoints being $i$-free; the $+8k$ term in the definition of $d_1(k)$ is present precisely so we can claim that these $4k$ arcs exist throughout the first rerouting phase.
This suffices to reroute the path $P_i$ through $v \to v_j \to b_j \to u \to \Mat(u) \to b$, where $\{u, \Mat(u)\}$ is an $i$-free pair with $u \in C$.
See \autoref{fig:rerouting-for-X} for an illustration of this rerouting.

\begin{figure}[h]
  \centering
  \begin{tikzpicture}
      \node[blackvertex, label={[yshift=0.1cm]90:{$b$}}] (b) at (-2, 0) {};
      \node[blackvertex, label=90:{$b_j$}] (bj) at (-2, -2) {};
      \node[blackvertex, label=180:{$v$}] (v) at ($(b) + (-2, 0)$) {};
      \node[blackvertex, label=180:{$v_j$}] (vj) at ($(bj) + (-2, 0)$) {};
      \draw[arrow, thick, dashed] (v) -- node[midway, above] {$P_i$} (b);
    
      \node[blackvertex, label={[yshift=0cm]0:{$u$}}] (u) at ($(b) + (2, 0.5)$) {};
      \node[blackvertex, label={[yshift=0cm]0:{$\Mat(u)$}}] (w) at ($(u) + (0, -3)$) {};
      \node (m1) at ($(u) + (-.75, 0)$) {};
      \node (m2) at ($(u) + (.75, 0)$) {};
      \node[rectangle,draw,fit=(m1)(m2), label=0:$C$] {};
      \node (m1) at ($(w) + (-.75, 0)$) {};
      \node (m2) at ($(w) + (.75, 0)$) {};
      \node[rectangle,draw,fit=(m1)(m2), label=0:$A$] {};
      \node[rectangle, draw, fit=(b)(bj)] {};
      \draw[arrow, thick, goodblue] (v) -- (vj);
      \draw[thick, goodblue,arrow] (vj) -- (bj);
      \draw[thick, goodblue,arrow] (bj) -- (u);
      \draw[thick, goodblue,arrow] (u) -- (w);
      \draw[thick, goodblue,arrow] (w) -- (b);
  \end{tikzpicture}
  \caption{In blue, the rerouting for the path $P_i$ (dashed) accessing $b$ from $V(T) \setminus A$.}
  \label{fig:rerouting-for-X}
\end{figure}
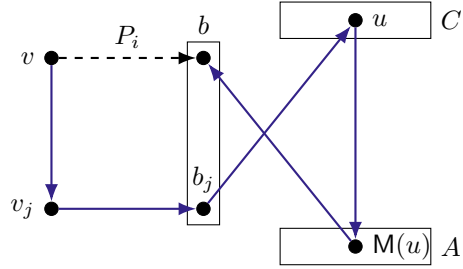

Since $|X| = x(k)$, for each other $b' \in X$ we can do a similar rerouting as described above while the choice of $d_1(k)$ allows us to, in addition, avoid rerouting any paths through distinguished arcs $(v'',b'')$ with $b'' \in X$.
Again by \autoref{cor:bounded-paths-in-B}, at most $4k$ reroutings are done in this case as well, and therefore the choice of $d_1(k)$ guarantees that there is room for all reroutings.

We denote by $\mathcal{Q}$ our new solution built from $\mathcal{P}$ after this first round of reroutings.
Essentially, after the first round of reroutings we have that $\mathcal{Q}$ uses at most $4k$ new vertices of $B$ and $4k$ new pairs $\{u, \Mat(u)\}$, with $u \in C$, when compared with $\mathcal{P}$, and all reroutings are done through vertices outside of $X$.

\medskip
\noindent\textbf{Second round of reroutings.}
The goal now is to find a \emph{special} vertex $b \in X$ such that if there is a solution for $\mathcal{I}$, then there is a solution $\mathcal{Q}' = \{Q'_1, \ldots, Q'_k\}$ such that:
\begin{alphaenumerate}
  \item any path $Q'_i \in \mathcal{Q}'$ leaving $b$ does so through a vertex $u \in C$, and
  \item at most $c$ new vertices of $B$ and at most $c$ new pairs $\{u, \Mat(u)\}$ are used by $\mathcal{Q}'$ in comparison with $\mathcal{Q}$.
\end{alphaenumerate} 

The remaining part of the proof is mostly similar and easier than the previous one.
The main difference is that at this point \autoref{prop:minimizing-for-solution}  no longer holds with respect to $\mathcal{Q}$.
Hence, in order to obtain a similar result as in \autoref{claim:bounded-used-outside-of-B-first-step}, we use the third defining property of $X$: it uses at most one new vertex of $V(T) \setminus V(\mathcal{K})$ in comparison with $\mathcal{P}$ for each of the $k$ paths. 
For every $b \in B$, let:
\[R'_b = \{v \in V(T) \setminus A \mid (v,b) \in E(T) \text{ and } |N^-_T(v) \cap B| \geq m_2(k)\},\]
\[S'_b = \{v \in V(T) \setminus A \mid (v,b) \in E(T) \text{ and } |N^-_T(v) \cap B| \leq m_2(k) - 1\}.\]


\begin{claim}\label{claim:bounded-used-outside-of-B-second-step}
Let $W = \bigcup_{b \in B}S'_b$ and $M^-$ be a matching from $B$ to $W$.
Then for all $i \in [k]$, the path $Q_i$ uses at most $8(4k+1)-1$ vertices which are heads of arcs in $M^-$.
\end{claim}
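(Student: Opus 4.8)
The plan is to replay, almost verbatim, the shortcutting argument in the proof of \autoref{claim:bounded-used-outside-of-B-first-step}, reading the matching $M^-$ as going ``from $B$ into $W$'' instead of ``from $W$ into $B$''. The one genuine novelty is that \autoref{prop:minimizing-for-solution} no longer holds for $\mathcal{Q}$, so a shortcut for some $Q_i$ does not immediately yield a contradiction; the trick is to transfer the whole argument back to the minimal solution $\mathcal{P}$, using the defining properties of $X$ established in the first rerouting round. First I would locate the heads of arcs of $M^-$: they all lie in $W=\bigcup_{b\in B}S'_b\subseteq V(T)\setminus A$, and because we assumed $E(\mathcal{K})$ has no arc from $C$ to $B$, no vertex of $C$ can belong to any $S'_b$; hence $W\cap V(\mathcal{K})\subseteq B$, i.e.\ $W\subseteq (V(T)\setminus V(\mathcal{K}))\cup B$. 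Now assume for contradiction that some $Q_i\in\mathcal{Q}$ uses at least $8(4k+1)$ vertices that are heads of arcs of $M^-$. Since $\mathcal{Q}$ uses at most $4k$ vertices of $B$ not used by $\mathcal{P}$, and (property~(iii) of $X$) each $Q_i$ uses at most one vertex of $V(T)\setminus V(\mathcal{K})$ not used by $P_i$, and since $M^-$-heads lie in $(V(T)\setminus V(\mathcal{K}))\cup B$, at most $4k+1$ of those $M^-$-heads of $Q_i$ are not already used by $P_i$. Therefore $P_i$ uses at least $8(4k+1)-(4k+1)=7(4k+1)$ vertices that are heads of arcs of $M^-$; this is precisely why the constant $8$ replaces the $7$ of \autoref{claim:bounded-used-outside-of-B-first-step}.

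It then remains to contradict \autoref{prop:minimizing-for-solution} by exhibiting a shortcut for $P_i$. Let $W'$ be the set of heads of arcs of $M^-$ used by $P_i$, and for $w\in W'$ let $(b_w,w)\in M^-$ be its arc, so $b_w\in B$. Ordering $W'$ along $\prec_i$ and splitting it into $4k+1$ consecutive blocks of size at least $7$, at most $4k$ of these blocks can contain a vertex whose $M^-$-tail is $i$-saturated: $M^-$ is a matching and, by \autoref{cor:bounded-paths-in-B}, at most $4k$ vertices of $B$ are $i$-saturated. Fix a block $W'_j$ all of whose $M^-$-tails are $i$-free and let $w_1\prec_i w_2$ be its first and last vertices. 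Since $w_1\in W\subseteq\bigcup_b S'_b$ it has at most $m_2(k)-1$ in-neighbours in $B$, so, $T$ being semicomplete and $|B|=f(k)$ being large, $w_1$ has an $i$-free out-neighbour $\beta\in B$ even after discarding the few saturated vertices. By \autoref{lem:bounded-intersection-in-A-C} we may pick an $i$-free pair $\{u,\Mat(u)\}$ with $u\in C$ avoiding the $\Ocal(k^2)$ pairs touched by $\mathcal{P}$. Then the walk
\[
  w_1 \to \beta \to u \to \Mat(u) \to b_{w_2} \to w_2
\]
is well defined: its arcs are, in order, an arc leaving $w_1$, an arc of $E(\mathcal{K})$ from $B$ to $C$, the matching arc $\Mat$, an arc of $E(\mathcal{K})$ from $A$ to $B$, and the arc $(b_{w_2},w_2)\in M^-$. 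It has $6$ vertices, it respects the congestion since $\beta,u,\Mat(u),b_{w_2}$ are all $i$-free, and it has the same endpoints as the $(w_1,w_2)$-subpath of $P_i$, which contains the $\geq 7$ vertices of $W'_j$. This is a shortcut for $P_i$, contradicting \autoref{prop:minimizing-for-solution}, and the claim follows.

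The step I expect to be the main obstacle is the bookkeeping in the first paragraph: one has to pin down precisely which vertices of $V(\mathcal{K})$ and which vertices outside $V(\mathcal{K})$ may differ between $Q_i$ and $P_i$ after the first round of reroutings, so that the passage ``$\geq 8(4k+1)$ $M^-$-heads in $Q_i$ $\Rightarrow$ $\geq 7(4k+1)$ in $P_i$'' is genuinely justified with the stated constants, and — crucially — so that all the $i$-freeness used in the shortcut ($\beta$, the pair $\{u,\Mat(u)\}$, and the tail $b_{w_2}$) is with respect to $\mathcal{P}$, not merely $\mathcal{Q}$, since the final contradiction is with the minimality of $\mathcal{P}$. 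A minor additional point to verify is that semicompleteness together with $|B|=f(k)$ indeed leaves an $i$-free out-neighbour of $w_1$ in $B$ after removing the at most $m_2(k)-1$ in-neighbours and the $\Ocal(k)$ saturated vertices, and that $w_1\neq w_2$ so that the exchanged subpath is nontrivial.
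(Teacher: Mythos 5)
Your proof is correct and reaches the paper's conclusion through the same length-$6$ shortcut $w_1 \to \beta \to u \to \Mat(u) \to b_{w_2} \to w_2$, but you organize the transfer from $Q_i$ back to the minimal solution $P_i$ differently. The paper splits the $M^-$-heads used by $Q_i$ into $4k+1$ blocks of size eight, picks a block whose tails are free, and only then distinguishes two cases ($Q_i=P_i$, or $Q_i$ was rerouted, in which case property~(iii) of $X$ guarantees that at least seven of the eight heads in the block remain in $P_i$). You instead subtract globally before splitting: at most $4k+1$ of the $M^-$-heads of $Q_i$ are absent from $P_i$ (property~(ii) for those in $B$, property~(iii) for those outside $V(\mathcal{K})$, after noting $W\cap V(\mathcal{K})\subseteq B$), so $P_i$ already uses $\geq 7(4k+1)$ heads, and you then split these into $4k+1$ blocks of size seven. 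Your route is arguably slightly more careful than the paper's Case~2, which invokes only property~(iii) and tacitly assumes none of the eight heads of the chosen block is a new $B$-vertex of $Q_i$; your explicit appeal to property~(ii) closes that small gap. One thing you should make explicit is that property~(ii) bounds $V(\mathcal{Q})\cap B\setminus V(\mathcal{P})$, not directly $V(Q_i)\cap B\setminus V(P_i)$; the per-path bound follows only because the first rerouting round selects new $B$-vertices that are unoccupied by $\mathcal{P}$, so any $B$-vertex of $Q_i$ outside $P_i$ is also outside $V(\mathcal{P})$. Also note that you ask the $M^-$-tails in the chosen block to be $i$-free, whereas the paper asks them to be unused by $\mathcal{Q}$; $i$-freeness with respect to $\mathcal{P}$ is all the shortcut actually requires, so this is a harmless and in fact cleaner relaxation.
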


\begin{proof}[Proof of the claim]
By contradiction, assume that there is a $Q_i \in \mathcal{Q}$ using at least $8(4k+1)$ vertices which are heads of arcs in $M^-$, and let $W'$ denote the set of all such vertices.
With $r = 4k+1$, we split $W'$ arbitrarily into $4k$ sets $W'_1, \ldots, W'_{r-1}$ of size eight, plus one set $W'_r$ with the remaining vertices of $W'$ not included in any of the other sets; note that $|W'_r| \geq 8$.

By \autoref{cor:bounded-paths-in-B}, there is one set $W'_j$ such that all tails of arcs of $M^-$ with heads in $W'_j$ are not used by any path in $\mathcal{Q}$.
We consider two cases.

If $Q_i = P_i$, then we can shortcut $P_i$ similarly to what is done in the proof of \autoref{claim:bounded-used-outside-of-B-first-step}, contradicting \autoref{prop:minimizing-for-solution}.
Denote by $v,v'$ the first and last vertices of $W'_j$ with respect to $\prec_i$, and let $b'$ be the tail of the arc $(b', v') \in M^-$.
Since $v$ is in some $S'_b$, it holds that $|N^+_T(v) \cap B| \geq |B| - m_2(k) + 1 \geq 4k+1$.
Thus, by \autoref{lem:bounded-intersection-in-A-C} and \autoref{cor:bounded-paths-in-B} we can shortcut $P_i$ through $v \to b \to u \to \Mat(u) \to b' \to v'$, where $b$ is an $i$-free vertex in $B$ and $\{u, \Mat(u)\}$ is an $i$-free pair with $u \in C$, and the claim follows.
See \autoref{fig:shorcut-outside-triple-second-case} for an illustration of this shortcut.

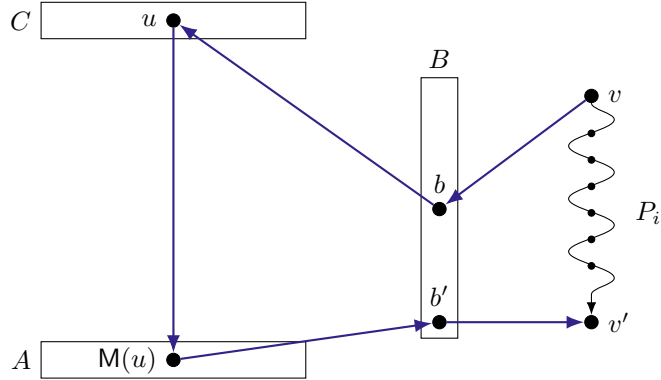
\begin{figure}[h]
  \centering
  \begin{tikzpicture}
    \foreach \i/\lpos/\name in {1/90/b',4/-90/b} {
      \node (a\i) at (-3-\i,-.5) {};
      \node (c\i) at ($(a\i) + (0,4.5)$) {};
    }
    \node[blackvertex, label={$b'$}] (b1) at (-2,0) {};
    \node (b4) at (-2,3) {};

    \node[blackvertex, label=0:{$v$}] (w1) at ($(b4) + (2,0)$) {};
    \node[blackvertex, label=0:{$v'$}] (wz) at ($(b1) + (2,0)$) {};
    
    \draw[arrow, big snake, shorten >= 0pt] (w1) --
    node[pos=1/7,blackvertex, scale=.5] {}
    node[pos=2/7,blackvertex, scale=.5, yshift = .1cm] {}
    node[pos=3/7,blackvertex, scale=.5, yshift = .2cm] {}
    node[pos=4/7,blackvertex, scale=.5, yshift = .3cm] (p) {}
    node[pos=5/7,blackvertex, scale=.5, yshift = .4cm] {} 
    node[pos=6/7,blackvertex, scale=.5, yshift = .5cm] {} (wz) ;

    \node[rectangle,draw,fit=(a1)(a4), label=180:{$A$}] {};
    \node[rectangle,draw,fit=(b1)(b4), label=90:{$B$}] {};
    \node[rectangle,draw,fit=(c1)(c4), label=180:{$C$}] {};
    \node at ($(p) + (.75,0)$) {$P_i$};
    \node[blackvertex, label=90:{$b$}] (b3) at ($(b1) + (0,1.5)$) {};
    \node[blackvertex, label={[yshift=0cm]180:{$u$}}] (u) at ($(c1) + (-1.5,0)$) {};
    \node[blackvertex, label={[yshift=0cm]180:{$\Mat(u)$}}] (v) at ($(a1) + (-1.5,0)$) {};

    \draw[thick, goodblue,arrow] (w1) -- (b3); 
    \draw[thick, goodblue,arrow] (b3) -- (u);
    \draw[thick, goodblue,arrow] (u) -- (v);
    \draw[thick, goodblue,arrow] (v) -- (b1);
    \draw[thick, goodblue,arrow] (b1) -- (wz);
  \end{tikzpicture}
  \caption{In blue, a shortcut for a path $P_i$ associated with a path $Q_i$ which uses at least $8(4k+1)$ vertices of $W = \bigcup_{b \in B} S'_b$.}
  \label{fig:shorcut-outside-triple-second-case}
\end{figure}

Assume now that $Q_i$ was built by rerouting the path $P_i$.
By property \lipItem{(iii)} of $X$, at least seven vertices of $W'_j$ are used by $P_i$.
In this case we can shortcut $P_i$ similarly to the first case, and the claim follows.
\end{proof}

Assume that there is a $b \in B$ such that $S'_b = \emptyset$ and that there is a path $Q_i$ leaving $b$ through a vertex $v \in V(T) \setminus C$.
Thus $v \in R'_b$ and therefore $|N^-_T(v) \cap B| \geq m_2(k)$.
By \autoref{cor:bounded-paths-in-B} this implies that $v$ has in-neighbors $b_1, \ldots, b_c \in B$ such that no $b_j$ with $j \in [c]$ is used by a path of $\mathcal{P}$ and was not used by any of the reroutings from the first round.
Thus, due to \autoref{lem:bounded-intersection-in-A-C}, we can reroute $Q_i$ through $b \to u \to \Mat(u) \to b_1 \to v$, where $\{u, \Mat(u)\}$ is an $i$-free pair that was not used by any of the reroutings from the first round.
Since at most $c$ paths of $Q_i$ can use $b$, there is room in $b_1, \ldots, b_c$ to reroute all paths of $\mathcal{Q}$ using $b$.
In this case, $b$ is a good choice for the special vertex and the second round of reroutings is finished.

Assume now that every $b \in B$ has $S'_b \neq \emptyset$.
We construct an auxiliary semicomplete digraph $\Gamma'$ with vertex set $X$ by adding to it the arc $(b', b)$ if, for all $v \in S'_b$, either $v \in S'_{b'}$ or there is an arc from a $v' \in S'_{b'}$ to $v$.
The proof that $\Gamma'$ is indeed semicomplete is analogous to the proof for $\Gamma$, and it can also be constructed in time $\Ocal(f^2(k) \cdot n^2)$.

Since $|X| = 2(m_2(k)\cdot d_2(k))$, it holds that there is a vertex $b \in V(\Gamma')$ such that $\deg^-_{\Gamma'}(b) \geq m_2(k)\cdot d_2(k)$.
We show that $b$ is a good choice for the special vertex.
To do so, assume that there is a path $Q_i \in \mathcal{Q}$ that leaves $b$ through a vertex $v \in V(T) \setminus C$.
If $v \in R'_{b}$ we proceed as in the previous case of this round.
Assume now that $v \in S'_{b}$.

We build in $T$ a matching $Y'$ from $B$ to $W' = \bigcup_{b \in B}S'_b$ such that $Y' = \{(v, b)$, $(v_1, b_1)$, $\ldots$, $(v_z, b_z)\}$ and:
\begin{enumerate}
  \item for $i \in [z]$ it holds that $(v_i, v) \in E(T)$, and 
  \item $z \geq d_2(k)$.
\end{enumerate}

We start with a collection $\mathcal{W'}$ of all sets $S'_{b'}$ with $b' \in N^-_{\Gamma'}(b)$ which are the candidates from which we can pick $v_1$.
We trim from $\mathcal{W'}$ all such sets containing $v$.
By definition, at most $m_2(k)-1$ sets are discarded in this way.
Choose any $b_1$ such $S'_{b_1} \in \mathcal{W'}$.
The construction of $\Gamma'$ and the fact that $(b_1, b) \in E(\Gamma')$ implies that there is a $v_1 \in S'_{b_1}$ such that $(v_1, v) \in E(T)$.
We add $(b_1, v_1)$ to $Y'$.

For the iterative argument, assume that $\ell$ arcs $(b_1, v_1), \ldots, (b_{\ell}, v_{\ell})$ have been chosen this way, with $\ell \in [d_2(k) - 1]$.
Thus, after discarding from $\mathcal{W}'$ every $S'_{b'}$ containing any $v_j$ with $j \in [\ell]$, we conclude that at least $m_2(k) \cdot (d_2(k) - \ell) + \ell$ candidates for the choice of $v_{\ell+1}$ remain in $\mathcal{W}'$.
Since $\deg^-_{\Gamma'}(b) \geq d_2(k) \cdot m_2(k)$, this procedure does not end before $d_2(k)$ arcs are chosen this way.
Therefore, we terminate with the desired set $Y'$ which, similar to the similar procedure in the first round of reroutings, can be built in $\Ocal(f^2(k) \cdot n^2)$ time.

We show that the choice of $d_2(k)$ implies that at least one $(b_j, v_j) \in Y'$ has both its endpoints available to reroute $Q_i$.
By \autoref{claim:bounded-used-outside-of-B-second-step} at most $8k(4k+1)$ vertices which are tails of arcs in $Y'$ are used by paths of $\mathcal{Q}$.
We eliminate from $Y'$ all arcs associated with those vertices.
By \autoref{cor:bounded-paths-in-B}, at most $4k$ arcs of $Y'$ have their endpoints in $B$ being used by paths of $\mathcal{P}$.
By property \lipItem{(iii)} of $X$, the endpoints in $B$ of at most $4k$ arcs of $Y'$ were used by reroutings from the first round.
We eliminate from $Y'$ all arcs from those two sets.
In the end, at least $c$ arcs remain in $Y'$.
Let $(b_j, v_j)$ be one of those remaining arcs.
Then we can reroute $Q_i$ through $b \to u \to \Mat(u) \to b_j \to v_j \to v$, as shown in \autoref{fig:rerouting-for-b-in-X}.
Since at least $c$ arcs remained in $Y'$ and at most $c$ paths of $\mathcal{Q}$ use $b$, we can reroute any path leaving $b$ to a vertex in $V(T) \setminus A$  similarly.
After the second round of reroutings, we choose $b$ as the special vertex.

\begin{figure}[h]
  \centering
  \begin{tikzpicture}

  \node[blackvertex, label={[yshift=.1cm]90:{$b$}}] (b) at (-2, 0) {};
  \node[blackvertex, label=90:{$b_j$}] (bj) at (-2, -2) {};
  \node[blackvertex, label=0:{$v$}] (v) at ($(b) + (2, 0)$) {};
  \node[blackvertex, label=0:{$v_j$}] (vj) at ($(bj) + (2, 0)$) {};
  \draw[arrow, thick, dashed] (b) -- node[midway, above] {$Q_i$} (v);

  \node[blackvertex, label={[yshift=0cm]180:{$u$}}] (u) at ($(b) + (-2, 0.5)$) {};
  \node[blackvertex, label={[xshift=1pt]180:{$\Mat(u)$}}] (w) at ($(u) + (0, -3)$) {};
  \node (m1) at ($(u) + (-.75, 0)$) {};
  \node (m2) at ($(u) + (.75, 0)$) {};
  \node[rectangle,draw,fit=(m1)(m2), label=180:$C$] {};
  \node (m1) at ($(w) + (-.75, 0)$) {};
  \node (m2) at ($(w) + (.75, 0)$) {};
  \node[rectangle,draw,fit=(m1)(m2), label=180:$A$] {};
  \node[rectangle, draw, fit=(b)(bj)] {};
  \draw[arrow, thick, goodblue] (b) -- (u);
  \draw[thick, goodblue,arrow] (u) -- (w);
  \draw[thick, goodblue,arrow] (w) -- (bj);
  \draw[thick, goodblue,arrow] (bj) -- (vj);
  \draw[thick, goodblue,arrow] (vj) -- (v);
  \end{tikzpicture}
  \caption{In blue, the rerouting for the path $Q_i$ (dashed) leaving $b$ to $V(T) \setminus A$.}
  \label{fig:rerouting-for-b-in-X}
\end{figure}
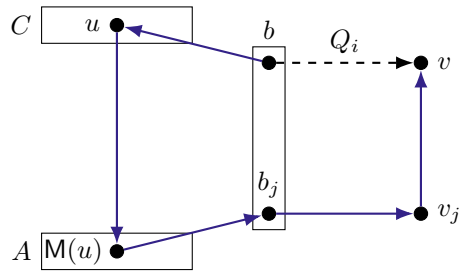

\medskip
\noindent\textbf{Finding the irrelevant vertex.}
It is easy to see that $b$ is an irrelevant vertex.
By property \lipItem{(b)} there are $c$ vertices in $B$ which are not used by any path of $\mathcal{Q}'$.
Since any path $Q'_i$ using $b$ reaches this vertex from a $v \in A$ and leaves it through a $u \in C$, it suffices to reroute $Q'_i$ using $v \to b' \to u$, where $b'$ is any vertex of $B$ not used by paths of $\mathcal{Q}'$.
Since there are at least $c$ such vertices, all paths of $\mathcal{Q}'$ can be rerouted away from $b$ and the result follows.
\end{proof}

Note that \cref{lem:B-xor-free-pair,lem:bounded-congested-in-B,lem:bounded-intersection-in-A-C} and \autoref{cor:bounded-paths-in-B} only deal with arcs and reroutings {\sl inside} of the $f(k)$-triple $\mathcal{K}$.
Thus, in order to adapt the proof to $h$-semicomplete digraphs, it suffices to add a factor of $h$ to the steps above where vertices of large in- or out-neighborhood in $B$ or in the heads or tails of the matchings of the form $Y_0$ (after \autoref{claim:bounded-used-outside-of-B-first-step}) and $Y'$ (after \autoref{claim:bounded-used-outside-of-B-second-step}), for example.
This implies that at each of those steps, the number of non-neighbors of the observed vertex is taken into account, and thus the same ideas work for $h \geq 1$.
\section{Future research}
\label{sec:conclusions}

This work touched on several common strategies used to cope with the hardness of the \pname{Directed Disjoint Paths} problem.
We investigated its (parameterized) complexity on tournaments and some of its superclasses, both with and without the typical relaxation of allowing congestion in the vertices.
One of our key contributions is a fix (\autoref{thm:tournament_nph}) for a gap in the literature originating in a flaw in the \NP-hardness proof for \pname{$k$-DDP} on tournaments given in~\cite{thomassen_tournament_nph}; we also as adapt a win/win approach based on directed pathwidth first used in an \FPT algorithm for \pname{Directed Edge-Disjoint Paths} parameterized by the number of paths on tournaments~\cite{FominP19} to \pname{$(k,c)$-DDP} when $2c > k$.
We note, however, that this latter problem is \textit{not} known to be \NP-complete; while we are able to show that \pname{$(k,k^\varepsilon)$-DDP} is hard for every $\varepsilon \in [0,1)$, we are unable to extend this to $c = \varepsilon k$, or even $c = k/2 + 1$. Alongside the already challenging problems listed in \autoref{sec:intro}, we consider this the main open question related to our work.
We do not provide all the details for the proof of \autoref{thm:FPT-large-congestion} in its full generality, only for the case of semicomplete graphs, i.e., $h=0$. Extending our arguments to $h > 0$ is essentially going deeper into technicalities: we must increase the size of the triple taking $h$ into account and, when discussing the exterior neighborhood of the triple, increase the thresholds to classify the in- and out-neighbors of $B$ as having ``too many'' neighbors in $B$ or not.
This extension, however, is not enough to give an \FPT algorithm for \pname{$(k,c)$-DDP} on $h$-semicomplete graphs when $2c > k$.
In particular, two challenges remain: (\textit{i}) computing a triple for elements of this class in \FPT-time, as the only known algorithm being an \XP one introduced by Kitsunai, Kobayashi, and Tamaki~\cite{KitsunaiKT15}; and (\textit{ii}) devising an \FPT algorithm for the joint parameterization by $k$, $h$, and the directed pathwidth, which would also improve upon the \XP algorithm shown in~\cite{KitsunaiKT15}.
We recall that, among the classes studied in our work, an \FPT algorithm for $h$-semicomplete digraphs is the best we can hope for, as we have shown that \pname{$k$-DDP} and \pname{$(k,c)$-DDP} are \W[1]-hard when parameterized by $k$ and the number of covering tournaments on digraphs of directed pathwidth two (\autoref{thm:w1h_ch} and \autoref{thm:w1h_congestion}).

\bibliography{main}

\newpage
\appendix
\section{Flaw in the \NP-completeness proof of Bang-Jensen and Thomassen}
\label{sec:flaw}

In this appendix we sketch the flaw in the proof of Bang-Jensen and Thomassen claiming that {\sc Directed Disjoint Paths} is \NP-complete on tournaments when  $k$ is part of the input.

Their approach is to first prove the \NP-completeness of a related problem on tournaments, and then reduce from this problem to {\sc Directed Disjoint Paths}. Namely, it is proved in~\cite[Theorem 6.1]{thomassen_tournament_nph} that the following problem $Q$ is \NP-complete on tournaments: given a tournament $T$ and a set of arcs $A \subseteq E(T)$, decide whether $T$ contains a directed cycle visiting all the arcs in $A$ (in any order). The proof of~\cite[Theorem 6.1]{thomassen_tournament_nph} is correct, and consists in a simple reduction from {\sc Hamiltonian Cycle} on general digraphs to $Q$ on tournaments.

The problem comes later: right after the proof of~\cite[Theorem 6.1]{thomassen_tournament_nph}, it is claimed that ``This proves that, if $k$ is not fixed, then the $k$-{\sc DDP} problem is \NP-complete on
tournaments''. This statement is not clear at all, the main issue being that in problem~$Q$ the set of arcs $A$ to be visited by the cycle is \emph{unordered}. Indeed, in order to construct an instance of $k$-{\sc DDP} starting from an instance $(T,A)$ of problem $Q$, the natural strategy would be to consider an arbitrary ordering $\sigma=(e_0, \ldots, e_{k-1})$ of the arcs in $A$, and construct a set $K$ of $k$ requests in $T$ as follows. For every $i \in \{0, \ldots, k-1\}$, add to $K$ a request from the head of $e_i$ to the tail of $e_{i+1}$, where indices are taken modulo $k$. One may hope that the desired cycle $C$ in $T$ visiting all the arcs in $A$ would translate to the existence of $k$ pairwise vertex-disjoint paths in $T$ satisfying all the requests in $K$. But this is not true, as $C$ may exist in $T$, but may yield a visiting ordering of the arcs in $A$ that is different from the ordering $\sigma$ that we have fixed arbitrarily. One could try to fix this issue by guessing all the possible orderings $\sigma$ of the arcs in $A$, but this would result in $k!$ choices, which is not allowed in a polynomial-time reduction since $k$ is considered as part of the input.

We have shared the above issue with the authors of~\cite{thomassen_tournament_nph} and they have confirmed to us~\cite{personal-communication} that it does not seem to be easily fixable.

\end{document}